\setlist[itemize]{leftmargin = *}
\setlist[enumerate]{leftmargin = *}
\newtheorem{Cor}{Corollary}
\newtheorem{Lem}{Lemma}
\newtheorem{Thm}{Theorem}
\newtheorem{Prop}{Proposition}
\theoremstyle{definition}
\newtheorem{Def}{Definition}
\newcommand{\C}{\mathbb{C}}
\newcommand{\F}[2]{\mathcal{F}_{#1}^{#2}}
\newcommand{\K}{\mathcal{K}}
\newcommand{\M}{\mathsf{M}}
\newcommand{\N}{\mathbb{N}}
\newcommand{\R}{\mathbb{R}}
\newcommand{\U}{\mathsf{U}}
\newcommand{\V}{\mathsf{V}}
\newcommand{\X}{\mathsf{X}}
\newcommand{\Y}{\mathsf{Y}}
\newcommand{\Z}{\mathsf{Z}}
\newcommand{\Br}[1]{\mleft( #1 \mright)}
\newcommand{\Cb}[1]{\func{\operatorname{C_{b}}}{#1}}
\newcommand{\Cc}[1]{\func{\operatorname{C_{c}}}{#1}}
\newcommand{\CC}[1]{\func{\operatorname{C}}{#1}}
\newcommand{\Cl}[2]{\overline{#1}^{#2}}
\newcommand{\Co}[1]{\func{\operatorname{C_{0}}}{#1}}
\newcommand{\df}{\stackrel{\textnormal{df}}{=}}
\newcommand{\ds}{\displaystyle}
\newcommand{\FT}[3]{\func{\F{#1}{#2}}{#3}}
\newcommand{\Id}{\operatorname{Id}}
\newcommand{\LL}[3]{\func{\mathsf{L}^{2}}{#1,#2,#3}}
\newcommand{\lt}{\mathsf{lt}}
\newcommand{\MO}{\widehat{P_{\hbar}}}
\newcommand{\PO}{\widehat{X}}
\newcommand{\Abs}[1]{\mleft| #1 \mright|}
\newcommand{\Adj}[1]{\func{\mathbb{L}}{#1}}
\newcommand{\Bdd}[1]{\func{\mathbb{B}}{#1}}
\newcommand{\Bra}[1]{\mleft\langle #1 \mright|}
\newcommand{\Dom}[1]{\func{\operatorname{Dom}}{#1}}
\newcommand{\Fin}[1]{\func{\operatorname{Fin}}{#1}}
\newcommand{\Int}[4]{\int_{#1} #2 ~ \d{\func{#3}{#4}}}
\newcommand{\Ket}[1]{\mleft| #1 \mright\rangle}
\newcommand{\Map}[4]{\mleft\{ \begin{matrix} #1 & \to & #2 \\ #3 & \mapsto & \ds #4 \end{matrix} \mright\}}
\newcommand{\Seq}[2]{\Br{#1}_{#2}}
\newcommand{\Set}[2]{\SSet{#1 ~ \middle| ~ #2}}
\newcommand{\Comm}[2]{\SqBr{#1,#2}}
\newcommand{\Comp}[1]{\func{\mathbb{K}}{#1}}
\newcommand{\func}[2]{#1 \Br{#2}}
\newcommand{\Func}[2]{\func{\Br{#1}}{#2}}
\newcommand{\FUNC}[2]{\func{\SqBr{#1}}{#2}}
\newcommand{\into}{\hookrightarrow}
\newcommand{\Norm}[1]{\mleft\| #1 \mright\|}
\newcommand{\Pair}[2]{\Br{#1,#2}}
\newcommand{\Proj}[2]{\operatorname{Proj}_{#1,#2}}
\newcommand{\Quad}[4]{\Br{#1,#2,#3,#4}}
\newcommand{\Span}[1]{\func{\operatorname{Span}}{#1}}
\newcommand{\SqBr}[1]{\mleft[ #1 \mright]}
\newcommand{\SSet}[1]{\mleft\{ #1 \mright\}}
\newcommand{\Supp}[1]{\func{\operatorname{Supp}}{#1}}
\newcommand{\Trip}[3]{\Br{#1,#2,#3}}
\newcommand{\Cstar}[1]{\func{C^{\ast}}{#1}}
\newcommand{\Inner}[2]{\mleft\langle #1 \middle| #2 \mright\rangle}
\newcommand{\Range}[1]{\func{\operatorname{Range}}{#1}}
\newcommand{\betArg}[2]{\func{\beta_{#1}}{#2}}
\newcommand{\alphArg}[2]{\func{\alph{#1}}{#2}}
\newcommand{\SqInner}[2]{\SqBr{#1 \middle| #2}}
\newcommand{\Unitary}[1]{\func{\mathbb{U}}{#1}}
\renewcommand{\d}{\mathrm{d}}
\renewcommand{\H}{\mathcal{H}}
\renewcommand{\L}[2]{\func{L^{#1}}{#2}}
\renewcommand{\u}{\operatorname{u}}
\renewcommand{\Im}[2]{#1 \SqBr{#2}}
\renewcommand{\alph}[1]{\alpha_{#1}}
\title[The Covariant Stone-von Neumann Theorem]
      {The Covariant Stone-von Neumann Theorem for Actions of Abelian Groups on $ C^{\ast} $-Algebras of Compact Operators}
\author{Leonard Huang}
\address{Leonard Huang, Department of Mathematics \& Statistics, University of Nevada, Reno, 1664 N. Virginia Street, Reno, NV 89557}
\email{Leonard.Huang@unr.edu}
\author{Lara Ismert}
\address{Lara Ismert, Department of Mathematics, University of Nebraska-Lincoln, 1400 R Street, Lincoln, NE 68588}
\email{lara.ismert@huskers.unl.edu}
\subjclass[2010]{46L08, 47L55, 46L60, 81S05}
\begin{document}

\maketitle


\begin{abstract}
In this paper, we formulate and prove a version of the Stone-von Neumann Theorem for every $ C^{\ast} $-dynamical system of the form $ \Trip{G}{\Comp{\H}}{\alpha} $, where $ G $ is a locally compact Hausdorff abelian group and $ \H $ is a Hilbert space. The novelty of our work stems from our representation of the Weyl Commutation Relation on Hilbert $ \Comp{\H} $-modules instead of just Hilbert spaces, and our introduction of two additional commutation relations, which are necessary to obtain a uniqueness theorem. Along the way, we apply one of our basic results on Hilbert $ C^{\ast} $-modules to significantly shorten the length of Iain Raeburn's well-known proof of Takai-Takesaki Duality.
\end{abstract}



\section{Introduction}


One of the most famous mathematical results in quantum mechanics is the Stone-von Neumann Theorem. Informally, the theorem establishes the physical equivalence of Werner Heisenberg's matrix mechanics and Erwin Schr\"{o}dinger's wave mechanics, which was seen by Heisenberg to be an outstanding problem in the early days of quantum mechanics (\cite{He}). The theorem was an attempt to prove that any pair $ \Pair{A}{B} $ of self-adjoint unbounded operators on a Hilbert space $ \H $ that satisfies the Heisenberg Commutation Relation on a common dense invariant subset $ D $ of their domain, i.e.,
$$
\Comm{A|_{D}}{B|_{D}} = i \hbar \cdot \Id_{D},
$$
is unitarily equivalent to a direct sum of copies of $ \Pair{\PO}{\MO} $, which are self-adjoint unbounded operators on $ \L{2}{\R} $ defined as follows:
\begin{alignat*}{2}
&  \Dom{\PO} = \Set{f \in \L{2}{\R}}{\Int{\R}{\Abs{x \func{f}{x}}^{2}}{\mu}{x}}, \qquad
&& \forall f \in \Dom{\PO}: \quad
   \func{\PO}{f} \df \Map{\R}{\C}{x}{x \func{f}{x}}; \\
&  \Dom{\MO} = \func{W^{1,2}}{\R},
&& \forall f \in \Dom{\MO}: \quad
   \func{\MO}{f} \df - i \hbar \cdot f'.
\end{alignat*}
We recall that $ \func{W^{1,2}}{\R} $ denotes the space of weakly-differentiable square-integrable functions on $ \R $ whose weak derivative is also square-integrable. This statement about unbounded operators is not true in general, where complications arise from domain issues --- a well-known counterexample involving the Hilbert space of $ L^{2} $-functions on a two-sheeted Riemann surface is given in \cite{Ne}.

The Stone-von Neumann Theorem was first given a rigorous formulation by Marshall Stone in 1930 (\cite{St}), and it was this formulation that
John von Neumann proved in 1931 (\cite{vN}). The exponentiated form of the Heisenberg Commutation Relation, called the \emph{Weyl Commutation Relation}, is investigated in these papers, because it involves only one-parameter unitary groups. More precisely, a pair $ \Pair{R}{S} $ of strongly-continuous one-parameter unitary groups on a Hilbert space $ \H $ satisfies the Weyl Commutation Relation if and only if
$$
\forall x,y \in \R: \qquad
\func{S}{y} \func{R}{x} = e^{- i \hbar x y} \cdot \func{R}{x} \func{S}{y}.
$$
von Neumann proved that any such pair is unitarily equivalent to a direct sum of copies of $ \Pair{\U}{\V} $, where $ \U $ and $ \V $ are strongly-continuous one-parameter unitary groups on $ \L{2}{\R} $ defined by
$$
\forall x,y \in \R, ~ \forall f \in \L{2}{\R}: \qquad
\FUNC{\func{\U}{x}}{f} = \func{f}{\bullet + \hbar x}
\qquad \text{and} \qquad
\FUNC{\func{\V}{y}}{f} = e^{i y \bullet} f.
$$
Basically, $ \U $ acts by translations, and $ \V $ acts by phase modulations.

The statement of the Stone-von Neumann Theorem has undergone major revisions in the decades since its initial formulation. George Mackey appears to have been the first to recognize its generalization to second-countable locally compact Hausdorff abelian groups, in \cite{Mac}. Nowadays, his generalization is treated as part of his theory of induced representations of locally compact Hausdorff groups, and is generally considered the standard modern formulation of the Stone-von Neumann Theorem, which we now state.



\begin{Thm} \label{The Stone-von Neumann Theorem}
Let $ G $ be a locally compact Hausdorff abelian group. If $ R $ and $ S $ are strongly-continuous unitary representations of $ G $ and $ \widehat{G} $, respectively, on a Hilbert space $ \H $ that satisfy the Weyl Commutation Relation, i.e.,
$$
\forall x \in G, ~ \forall \gamma \in \widehat{G}: \qquad
\func{S}{\varphi} \func{R}{x} = \func{\varphi}{x} \cdot \func{R}{x} \func{S}{\varphi},
$$
then $ \Trip{\H}{R}{S} $ must be unitarily equivalent to a direct sum of copies of $ \Trip{\L{2}{G}}{\U_{G}}{\V_{G}} $, where $ \U_{G} $ denotes the unitary representation of $ G $ on $ \L{2}{G} $ by left translations, and $ \V_{G} $ denotes the unitary representation of $ \widehat{G} $ on $ \L{2}{G} $ by phase modulations, i.e.,
$$
\FUNC{\func{\U_{G}}{x}}{f} \df \func{f}{x^{- 1} \bullet}
\qquad \text{and} \qquad
\FUNC{\func{\V_{G}}{\varphi}}{f} \df \varphi f
$$
for all $ x \in G $, $ \varphi \in \widehat{G} $, and $ f \in \L{2}{G} $.
\end{Thm}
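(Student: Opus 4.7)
The plan is to convert the Weyl setup into a representation of a concrete $ C^{\ast} $-algebra whose representation theory is completely understood. By the SNAG theorem, the strongly-continuous unitary representation $ S $ of $ \widehat{G} $ integrates to a nondegenerate $ \ast $-representation $ \widetilde{S} $ of $ \Cstar{\widehat{G}} $; Pontryagin duality identifies the latter with $ \Co{G} $ via the Fourier transform $ \mathcal{F} $. Setting $ M \df \widetilde{S} \circ \mathcal{F}^{-1} : \Co{G} \to \Bdd{\H} $, the Weyl commutation relation translates (after integration against Haar measure on $ \widehat{G} $) into the covariance identity
$$
\func{R}{x} \func{M}{f} \func{R}{x}^{\ast} = \func{M}{\tau_{x} f}
\qquad \Br{x \in G, ~ f \in \Co{G}},
$$
where $ \tau $ is the translation action of $ G $ on $ \Co{G} $. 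Hence $ \Pair{M}{R} $ is a nondegenerate covariant representation of the $ C^{\ast} $-dynamical system $ \Trip{G}{\Co{G}}{\tau} $, and integrates to a nondegenerate $ \ast $-representation $ \Phi \df M \rtimes R $ of the crossed product $ \Co{G} \rtimes_{\tau} G $ on $ \H $.

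Next, I would invoke the classical isomorphism $ \Co{G} \rtimes_{\tau} G \cong \Comp{\L{2}{G}} $, a consequence of the imprimitivity theorem (or, equivalently, of a direct calculation with the integrated form). Under this isomorphism, the distinguished ``regular'' covariant pair $ \Pair{V_{G} \circ \mathcal{F}^{-1}}{U_{G}} $ on $ \L{2}{G} $ integrates precisely to the identity representation of $ \Comp{\L{2}{G}} $. Since every nondegenerate $ \ast $-representation of $ \Comp{\mathcal{K}} $ on a Hilbert space is unitarily equivalent to an amplification of the identity representation of $ \Comp{\mathcal{K}} $ on $ \mathcal{K} $, the representation $ \Phi $ is unitarily equivalent to a direct sum of copies of the regular representation. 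Because the pair $ \Pair{R}{S} $ is recovered from $ \Phi $ by evaluating on appropriate unitary multipliers of the crossed product, this equivalence of integrated representations descends to the desired unitary equivalence of $ \Trip{\H}{R}{S} $ with a direct sum of copies of $ \Trip{\L{2}{G}}{\U_{G}}{\V_{G}} $.

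The main obstacle is the second step: establishing the isomorphism $ \Co{G} \rtimes_{\tau} G \cong \Comp{\L{2}{G}} $ for an arbitrary, possibly non-second-countable, locally compact Hausdorff abelian $ G $, and tracking the regular covariant pair through the isomorphism explicitly enough to recover $ \Pair{R}{S} $ on the nose. Secondary bookkeeping involves verifying the nondegeneracy of $ \Phi $ (which follows from the nondegeneracy of $ M $ together with the unitarity of $ R $) and ensuring that Fourier-transform conventions are aligned so that the $ \widehat{G} $-side phase modulations $ \V_{G} $ correspond under $ \mathcal{F} $ to the multiplication half of the regular pair, rather than to its inverse.
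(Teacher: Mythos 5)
Your proposal is correct and is essentially the paper's own route: the paper obtains the classical theorem as the case $ \H = \C $ of its covariant theorem, whose proof likewise converts $ S $ via the Fourier transform $ \Cstar{\widehat{G}} \cong \Co{G} $ into a nondegenerate representation of $ \Co{G} $ covariant with $ R $, identifies $ \Cstar{G,\Co{G},\lt} $ with $ \Comp{\L{2}{G}} $ through Green's Imprimitivity Theorem, invokes the uniqueness of nondegenerate representations of the compacts, and recovers $ R $ and $ S $ from the integrated form by strong limits (the paper's recovery lemma, playing the role of your multiplier argument). The only caveats you raise (non-second-countable $ G $, Fourier conventions) are handled in the paper by the same tools you name, so there is no substantive divergence.
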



The work of Marc Rieffel in \cite{Ri1,Ri2} has revealed that \autoref{The Stone-von Neumann Theorem} is actually a statement about the Morita equivalence of the $ C^{\ast} $-algebras $ \C $ and $ \Cstar{G,\Co{G},\lt} $, where $ \lt $ denotes the strongly-continuous action of $ G $ on $ \Co{G} $ by left translations. The theorem thus acquires a more algebraic flavor. This Morita equivalence is a special case of a more general result known as \emph{Green's Imprimitivity Theorem}, which we actually need to prove our covariant generalization of \autoref{The Stone-von Neumann Theorem}.

Several generalizations of the Stone-von Neumann Theorem can be found in the literature. For example, \cite{CaMoSt} extends the theorem to measurable unitary representations of $ G $ and $ \widehat{G} $ on a Hilbert space, and \cite{Pa} extends the theorem to Hecke pairs using the machinery of non-abelian duality. Although these generalizations are non-trivial and interesting, their use of only Hilbert-space representations is a common limiting feature.

In this paper, we provide not another incremental generalization of the Stone-von Neumann Theorem, but a complete paradigm shift that significantly augments the theorem's range of applicability. By leaving the realm of Hilbert spaces and working with representations on Hilbert $ C^{\ast} $-modules, we show that the Stone-von Neumann Theorem is not really about representations of locally compact Hausdorff abelian groups on Hilbert spaces, but is really about representations of $ C^{\ast} $-dynamical systems on Hilbert $ C^{\ast} $-modules. More precisely, for every $ C^{\ast} $-dynamical system of the form $ \Trip{G}{\Comp{\H}}{\alpha} $, where $ G $ is a locally compact Hausdorff abelian group and $ \H $ is a Hilbert space, our covariant generalization classifies up to unitary equivalence all quadruples $ \Quad{\X}{\rho}{R}{S} $ with the following properties:
\begin{itemize}
\item
$ \X $ is a non-trivial Hilbert $ \Comp{\H} $-module.

\item
$ R $ and $ S $ are strongly-continuous unitary representations of $ G $ and $ \widehat{G} $, respectively, on $ \X $ that satisfy the Weyl Commutation Relation.

\item
$ \rho $ is a non-degenerate $ \ast $-representation of $ \Comp{\H} $ on $ \X $ that obeys the following commutation relations:
$$
\func{R}{x} \func{\rho}{a} = \func{\rho}{\alphArg{x}{a}} \func{R}{x}
\qquad \text{and} \qquad
\func{S}{\varphi} \func{\rho}{a} = \func{\rho}{a} \func{S}{\varphi}
$$
for all $ x \in G $, $ \varphi \in \widehat{G} $, and $ a \in A $. These relations are also called \emph{covariance relations}.
\end{itemize}

Using results on non-abelian duality, one could very well generalize our covariant version of the Stone-von Neumann Theorem to non-abelian $ C^{\ast} $-dynamical systems, or even quantum-group dynamical systems, but such an undertaking would take us too far afield, so we content ourselves with presenting only the abelian case, which we feel is already a significant advance. Further generalizations will be explored in a sequel.

This paper is organized as follows:
\begin{itemize}
\item
Section 2 is a short preliminary section that recalls some concepts and results about $ C^{\ast} $-crossed products that we need. In particular, we show how to associate a Hilbert $ C^{\ast} $-module to a $ C^{\ast} $-dynamical system in a canonical way. This Hilbert $ C^{\ast} $-module is featured in Green's Imprimitivity Theorem, and is crucial to a formulation of our covariant generalization of the Stone-von Neumann Theorem.

\item
Section 3 introduces Heisenberg modular representations and the Schr\"odinger modular representation of an abelian $ C^{\ast} $-dynamical system $ \Trip{G}{A}{\alpha} $. These concepts allow an efficient formulation of our covariant generalization of the Stone-von Neumann Theorem. We construct an injective map from the class of all Heisenberg modular representations of $ \Trip{G}{A}{\alpha} $ to the class of all covariant modular representations of $ \Trip{G}{\Co{G,A}}{\lt \otimes \alpha} $, which is a $ C^{\ast} $-dynamical system that plays a pivotal role in Iain Raeburn's proof of Takai-Takesaki Duality. A basic result in this section allows us to significantly shorten his proof.

\item
Section 4 provides an overview of the properties of Hilbert $ \Comp{\H} $-modules that have been established in \cite{BaGu1,BaGu2}. Hilbert $ \Comp{\H} $-modules obviously generalize Hilbert spaces, yet they behave very much like Hilbert spaces, which makes them very desirable to work with.

\item
Section 5 contains our main result: the Covariant Stone-von Neumann Theorem. We give a statement of Green's Imprimitivity Theorem, and explain its relevance to the main result.

\item
Section 6 poses some open questions that this paper was unable to answer. It also suggests new avenues of research that would be of interest to both mathematicians and physicists.

\item
Finally, an appendix contains proofs of two results, stated in the main body of this paper, that would be considered folklore, but for which we were unable to locate adequate references.
\end{itemize}

We assume that the reader has a reasonable working knowledge of $ C^{\ast} $-algebras, $ C^{\ast} $-dynamical systems, and Hilbert $ C^{\ast} $-modules. Throughout this paper, we adopt the following notations and conventions:
\begin{itemize}
\item
$ \N $ denotes the set of positive integers, and for each $ n \in \N $, let $ \SqBr{n} \df \N_{\leq n} $.

\item
For a set $ I $, let $ \Fin{I} $ denote the set of finite subsets of $ I $.

\item
For a locally compact Hausdorff abelian group $ G $, let $ \widehat{G} $ denote its Pontryagin dual.

\item
For a locally compact Hausdorff space $ X $ and a normed vector space $ V $, let $ \diamond: \Co{X} \times V \to \Co{X,V} $ be defined by
$$
\forall f \in \Cc{X}, ~ \forall v \in V: \qquad
f \diamond v \df \Map{X}{V}{x}{\func{f}{x} \cdot v}.
$$
Note that $ \diamond $ takes $ \Cc{X} \times V $ to $ \Cc{X,V} $.

\item
For a Hilbert space $ \H $ and vectors $ v,w \in \H $, let $ \Ket{v} \Bra{w} $ denote the rank-one operator on $ \H $ defined by
$$
\forall x \in \H: \qquad
\Func{\Ket{v} \Bra{w}}{x} \df \Inner{w}{x}_{\H} \cdot v.
$$

\item
For a Hilbert space $ \H $ and a closed subspace $ \K $ of $ \H $, let $ \Proj{\H}{\K} $ denote the orthogonal projection of $ \H $ onto $ \K $.

\item
For a $ C^{\ast} $-algebra $ A $, let $ A^{\sim} $ denote its minimal unitization.

\item
For a $ C^{\ast} $-algebra $ A $ and $ a \in A $, let $ \func{\sigma_{A}}{a} $ denote the spectrum of $ a $.

\item
For a $ C^{\ast} $-algebra $ A $, let $ \leq_{A} $ denote the usual partial order on the cone $ A_{\geq} $ of positive elements of $ A $.

\item
For a $ C^{\ast} $-algebra $ A $, and Hilbert $ A $-modules $ \X $ and $ \Y $, the set of adjointable/compact/unitary operators from $ \X $ to $ \Y $ is denoted by $ \Adj{\X,\Y} $/$ \Comp{\X,\Y} $/$ \Unitary{\X,\Y} $. If $ \X = \Y $, then we write $ \Adj{\X} $/$ \Comp{\X} $/$ \Unitary{\X} $.

\item
For a $ C^{\ast} $-algebra $ A $ and a Hilbert $ C^{\ast} $-module $ \X $ (not necessarily over $ A $), a $ \ast $-representation of $ A $ on $ \X $ is a $ C^{\ast} $-homomorphism $ \rho: A \to \Adj{\X} $, which is then said to be non-degenerate if and only if
$$
\Cl{\Span{\Set{\FUNC{\func{\rho}{a}}{\zeta}}{a \in B ~ \text{and} ~ \zeta \in \X}}}{\X} = \X.
$$

\item
For a locally compact Hausdorff group $ G $ and a Hilbert $ C^{\ast} $-module $ \X $, a unitary representation of $ G $ on $ \X $ is a group homomorphism $ R $ from $ G $ to the group $ \Unitary{\X} $ of unitary adjointable operators on $ \X $, which is then said to be strongly continuous if and only if the map
$$
\Map{G}{\X}{x}{\FUNC{\func{R}{x}}{\zeta}}
$$
is continuous for each $ \zeta \in \X $.
\end{itemize}



\section{Preliminaries}


As $ C^{\ast} $-crossed products will be used extensively in this paper, let us recall some concepts in this area.

Throughout this section, we shall fix an arbitrary $ C^{\ast} $-dynamical system $ \Trip{G}{A}{\alpha} $, with $ G $ not assumed to be abelian, and we shall fix a Haar measure $ \mu $ on $ G $.

Recall that the $ \C $-vector space $ \Cc{G,A} $ can be given a convolution $ \star_{G,A,\alpha} $ and an involution $ ^{\ast_{G,A,\alpha}} $ by
\begin{align*}
\forall f,g \in \Cc{G,A}: \qquad
f \star_{G,A,\alpha} g & \df \Map{G}{A}{x}{\Int{G}{\func{f}{y} \alphArg{y}{\func{g}{y^{- 1} x}}}{\mu}{y}}; \\
f^{\ast_{G,A,\alpha}}  & \df \Map{G}{A}{x}{\func{\Delta_{G}}{x^{- 1}} \cdot \alphArg{x}{\func{f}{x^{- 1}}^{\ast}}},
\end{align*}
where $ \Delta_{G} $ denotes the modular function of $ G $.



\begin{Def} \label{Covariant Modular Representation}
A \emph{$ \Trip{G}{A}{\alpha} $-covariant modular representation} is a triple $ \Trip{\X}{\rho}{R} $ with the following properties:
\begin{enumerate}
\item
$ \X $ is a Hilbert $ C^{\ast} $-module (not necessarily over $ A $).

\item
$ \rho $ is a non-degenerate $ \ast $-representation of $ A $ on $ \X $.

\item
$ R $ is a strongly-continuous unitary representation of $ G $ on $ \X $.

\item
$ \func{R}{x} \func{\rho}{a} = \func{\rho}{\alphArg{x}{a}} \func{R}{x} $ for all $ x \in G $ and $ a \in A $.
\end{enumerate}
\end{Def}


Covariant modular representations are used in the construction of $ C^{\ast} $-crossed products. Given a $ \Trip{G}{A}{\alpha} $-covariant modular representation $ \Trip{\X}{\rho}{R} $, we can define an algebraic $ \ast $-homomorphism $ \Pi_{\X,\rho,R} $, called the \emph{integrated form} of $ \Trip{\X}{\rho}{R} $, from the convolution $ \ast $-algebra $ \Trip{\Cc{G,A}}{\star_{G,A,\alpha}}{^{\ast_{G,A,\alpha}}} $ to $ \Adj{\X} $ by
$$
\forall f \in \Cc{G,A}: \qquad
\func{\Pi_{\X,\rho,R}}{f} \df \Int{G}{\func{\rho}{\func{f}{x}} \func{R}{x}}{\mu}{x}.
$$
The \emph{full crossed product} $ \Cstar{G,A,\alpha} $ is defined as the $ C^{\ast} $-algebraic completion of $ \Trip{\Cc{G,A}}{\star_{G,A,\alpha}}{^{\ast_{G,A,\alpha}}} $ with respect to the universal norm $ \Norm{\cdot}_{\Trip{G}{A}{\alpha},\u} $ given by
$$
    \Norm{f}_{\Trip{G}{A}{\alpha},\u}
\df \func{\sup}
         {
         \Set{\Norm{\func{\Pi_{\X,\rho,R}}{f}}_{\Adj{\X}}}
             {\Trip{\X}{\rho}{R} ~ \text{is a} ~ \Trip{G}{A}{\alpha} \text{-covariant modular representation}}
         }
$$
for all $ f \in \Cc{G,A} $. This norm is well-defined as it is dominated by the $ L^{1} $-norm on $ \Cc{G,A} $.

We let $ \eta_{\Trip{G}{A}{\alpha}} $ denote the canonical dense linear embedding of $ \Cc{G,A} $ into $ \Cstar{G,A,\alpha} $, and if $ A = \C $, in which case $ \alpha $ is necessarily trivial, we simply write $ \eta_{G} $.

For a $ \Trip{G}{A}{\alpha} $-covariant modular representation $ \Trip{\X}{\rho}{R} $, we denote by $ \overline{\Pi}_{\X,\rho,R} $ the extension of $ \Pi_{\X,\rho,\R} $ to a $ C^{\ast} $-homomorphism from $ \Cstar{G,A,\alpha} $ to $ \Adj{\X} $.



\begin{Lem} \label{Recovering a Covariant Modular Representation from Its Integrated Form}
Let $ x \in G $ and $ a \in A $. Then we can find nets $ \Seq{f_{i}}{i \in I} $ and $ \Seq{g_{j}}{j \in J} $ in $ \Cc{G,A} $ such that for any $ \Trip{G}{A}{\alpha} $-covariant modular representation $ \Trip{\X}{\rho}{R} $, the associated nets $ \Seq{\func{\Pi_{\X,\rho,R}}{f_{i}}}{i \in I} $ and $ \Seq{\func{\Pi_{\X,\rho,R}}{g_{j}}}{j \in J} $ strongly converge in $ \Adj{\X} $, respectively, to $ \func{R}{x} $ and $ \func{\rho}{a} $.
\end{Lem}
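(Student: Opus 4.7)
The plan is to construct both nets entirely from ingredients attached to $A$, $G$, and the Haar measure $\mu$, so that a single choice of nets works uniformly across every $\Trip{G}{A}{\alpha}$-covariant modular representation. Two universal building blocks suffice: a contractive approximate identity $\Seq{e_{\lambda}}{\lambda \in \Lambda}$ for $A$, and, for each prescribed point $z \in G$, a net $\Seq{\phi_{\mu}^{z}}{\mu \in M}$ in $\Cc{G}$ of nonnegative functions with $\Int{G}{\func{\phi_{\mu}^{z}}{y}}{\mu}{y} = 1$ and supports shrinking to $\SSet{z}$. None of these data depend on the representation.

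For the net recovering $\func{\rho}{a}$, set $g_{\mu} \df \phi_{\mu}^{e_{G}} \diamond a \in \Cc{G,A}$. Pulling the scalar $\func{\phi_{\mu}^{e_{G}}}{y}$ and the constant $a$ outside $\rho$ gives $\func{\Pi_{\X,\rho,R}}{g_{\mu}} = \func{\rho}{a} \cdot \Int{G}{\func{\phi_{\mu}^{e_{G}}}{y} \func{R}{y}}{\mu}{y}$. Since $R$ is strongly continuous with $\func{R}{e_{G}} = \Id_{\X}$ and the integral of $\phi_{\mu}^{e_{G}}$ is $1$, a standard approximation argument (bounding the integrand on the shrinking support of $\phi_{\mu}^{e_{G}}$) shows this integral converges strongly to $\Id_{\X}$, so $\func{\Pi_{\X,\rho,R}}{g_{\mu}} \to \func{\rho}{a}$ strongly in $\Adj{\X}$.

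For the net recovering $\func{R}{x}$, I would combine both approximations: set $f_{\Pair{\lambda}{\mu}} \df \phi_{\mu}^{x} \diamond e_{\lambda} \in \Cc{G,A}$, indexed by $\Lambda \times M$ with the product order. The same manipulation gives $\func{\Pi_{\X,\rho,R}}{f_{\Pair{\lambda}{\mu}}} = \func{\rho}{e_{\lambda}} \cdot \Int{G}{\func{\phi_{\mu}^{x}}{y} \func{R}{y}}{\mu}{y}$. Writing $I_{\mu}^{x} \df \Int{G}{\func{\phi_{\mu}^{x}}{y} \func{R}{y}}{\mu}{y}$, for $\zeta \in \X$ I would decompose $\func{\Pi_{\X,\rho,R}}{f_{\Pair{\lambda}{\mu}}} \zeta - \func{R}{x}\zeta$ as $\func{\rho}{e_{\lambda}} \Br{I_{\mu}^{x} \zeta - \func{R}{x}\zeta} + \SqBr{\func{\rho}{e_{\lambda}} - \Id_{\X}} \func{R}{x}\zeta$. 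Using $\Norm{\func{\rho}{e_{\lambda}}} \leq 1$, the first summand is dominated by $\Norm{I_{\mu}^{x}\zeta - \func{R}{x}\zeta}$ and tends to $0$ in $\mu$ by strong continuity of $R$ at $x$; the second tends to $0$ in $\lambda$ by strong convergence $\func{\rho}{e_{\lambda}} \to \Id_{\X}$, and combining the two estimates through the product order on $\Lambda \times M$ yields strong convergence to $\func{R}{x}$.

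The main technical point is the last ingredient: for any non-degenerate $\ast$-representation $\rho$ of $A$ on a Hilbert $C^{\ast}$-module $\X$, one has $\func{\rho}{e_{\lambda}} \to \Id_{\X}$ strongly. This is the Hilbert-module analogue of a familiar Hilbert-space fact and follows from the usual $\epsilon/3$ argument: non-degeneracy lets any $\zeta \in \X$ be approximated by finite sums $\sum_{k} \func{\rho}{a_{k}} \eta_{k}$, the approximate-identity property in $A$ forces $e_{\lambda} a_{k} \to a_{k}$ in norm, and contractivity $\Norm{\func{\rho}{e_{\lambda}}} \leq 1$ controls the tails. Once this is established, the proof is complete, and the uniformity over all covariant modular representations demanded by the statement is automatic, since no step in the construction of $\Seq{f_{\Pair{\lambda}{\mu}}}{\Pair{\lambda}{\mu} \in \Lambda \times M}$ or $\Seq{g_{\mu}}{\mu \in M}$ refers to any particular $\Trip{\X}{\rho}{R}$.
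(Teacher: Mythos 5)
Your proposal is correct and matches the paper's own argument in all essentials: both build the nets from bump functions with shrinking supports (around $x$ and $e_{G}$) tensored via $\diamond$ with an approximate identity element or with $a$ itself, and both rely on the strong continuity of $R$ together with the strong convergence $\func{\rho}{e_{\lambda}} \to \Id_{\X}$ coming from non-degeneracy. The only differences are cosmetic (you decompose after integrating rather than before, and you index the $R(x)$-net by a product directed set exactly as the paper does), so nothing further is needed.
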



A proof of this lemma will be provided in the appendix.

To $ \Trip{G}{A}{\alpha} $, one can associate a special Hilbert $ A $-module, denoted by $ \LL{G}{A}{\alpha} $, in a canonical manner. Observe that $ \Cc{G,A} $ is a pre-Hilbert $ A $-module, whose right $ A $-action $ \bullet $ and $ A $-valued pre-inner product $ \SqInner{\cdot}{\cdot}: \Cc{G,A} \times \Cc{G,A} \to A $ are defined as follows:
\begin{itemize}
\item
$ \phi \bullet a \df \Map{G}{A}{x}{\func{\phi}{x} \alphArg{x}{a}} $ for all $ a \in A $ and $ \phi \in \Cc{G,A} $.

\item
$ \ds \SqInner{\phi}{\psi} \df \Int{G}{\alphArg{x^{- 1}}{\func{\phi}{x}^{\ast} \func{\psi}{x}}}{\mu}{x} $ for all $ \phi,\psi \in \Cc{G,A} $.
\end{itemize}
Define $ \LL{G}{A}{\alpha} $ to be the Hilbert $ A $-module obtained by completing $ \Cc{G,A} $ with respect to the norm induced by $ \SqInner{\cdot}{\cdot} $. Let $ q_{\Trip{G}{A}{\alpha}}: \Cc{G,A} \into \LL{G}{A}{\alpha} $ denote the canonical dense linear embedding, and if no confusion can arise, we will omit the subscript and simply write $ q $.

This Hilbert $ A $-module is the linchpin of our formulation of the covariant Stone-von Neumann Theorem.

We will use $ q_{\Trip{G}{A}{\alpha}} $ when defining operators on $ \LL{G}{A}{\alpha} $ as a way of emphasizing that unless $ A = \C $, the elements of $ \LL{G}{A}{\alpha} $ are generally not functions of any sort from $ G $ to $ A $.



\section{Modular Representations}


Throughout this section, we shall fix an arbitrary $ C^{\ast} $-dynamical system $ \Trip{G}{A}{\alpha} $ with $ G $ abelian. We shall also fix a Haar measure $ \mu $ on $ G $.



\begin{Def} \label{Heisenberg Modular Representation}
A \emph{$ \Trip{G}{A}{\alpha} $-Heisenberg modular representation} is a quadruple $ \Quad{\X}{\rho}{R}{S} $ with the following properties:
\begin{enumerate}
\item
$ \X $ is a full Hilbert $ A $-module.

\item
$ \rho $ is a non-degenerate $ \ast $-representation of $ A $ on $ \X $.

\item
$ R $ is a strongly continuous representation of $ G $ on $ \X $.

\item
$ S $ is a strongly continuous representation of $ \widehat{G} $ on $ \X $.

\item
$ \func{S}{\varphi} \func{R}{x} = \func{\varphi}{x} \cdot \func{R}{x} \func{S}{\varphi} $ for all $ x \in G $ and $ \varphi \in \widehat{G} $. \\
Hence, $ \Pair{R}{S} $ satisfies the Weyl Commutation Relation for $ G $ on $ \X $.

\item
$ \func{R}{x} \func{\rho}{a} = \func{\rho}{\alphArg{x}{a}} \func{R}{x} $ for all $ x \in G $ and $ a \in A $. \\
Hence, $ \Trip{\X}{\rho}{R} $ is a $ \Trip{G}{A}{\alpha} $-covariant modular representation.

\item
$ \func{S}{\varphi} \func{\rho}{a} = \func{\rho}{a} \func{S}{\varphi} $ for all $ \varphi \in \widehat{G} $ and $ a \in A $. \\
Hence, $ \Trip{\X}{\rho}{S} $ is a $ \Trip{\widehat{G}}{A}{\iota} $-covariant modular representation, with $ \iota $ denoting the trivial action of $ \widehat{G} $ on $ A $.
\end{enumerate}
\end{Def}


With the aim of producing an example of a $ \Trip{G}{A}{\alpha} $-Heisenberg modular representation, let us first equip $ \LL{G}{A}{\alpha} $, as defined in the Preliminaries, with the following structural data:
\begin{itemize}
\item
A $ \ast $-representation $ \M^{\Trip{G}{A}{\alpha}} $ of $ A $ on $ \LL{G}{A}{\alpha} $ such that for all $ a \in A $ and $ \phi \in \Cc{G,A} $,
$$
\FUNC{\func{\M^{\Trip{G}{A}{\alpha}}}{a}}{\func{q}{\phi}} = \func{q}{\Map{G}{A}{x}{a \func{\phi}{x}}}.
$$

\item
A unitary representation $ \U^{\Trip{G}{A}{\alpha}} $ of $ G $ on $ \LL{G}{A}{\alpha} $ such that for all $ x \in G $ and $ \phi \in \Cc{G,A} $,
$$
\FUNC{\func{\U^{\Trip{G}{A}{\alpha}}}{x}}{\func{q}{\phi}} = \func{q}{\Map{G}{A}{y}{\alphArg{x}{\func{\phi}{x^{- 1} y}}}}.
$$

\item
A unitary representation $ \V^{\Trip{G}{A}{\alpha}} $ of $ G $ on $ \LL{G}{A}{\alpha} $ such that for all $ \varphi \in \widehat{G} $ and $ \phi \in \Cc{G,A} $,
$$
\FUNC{\func{\V^{\Trip{G}{A}{\alpha}}}{\varphi}}{\func{q}{\phi}} = \func{q}{\varphi \cdot \phi}.
$$
\end{itemize}
Proving that these representations are well-defined is a routine exercise. We refer the reader to Chapter 4 of \cite{Wi} for details. We will omit supscripts and simply write $ \M $, $ \U $, and $ \V $ if no confusion arises from doing so.



\begin{Def} \label{Schrödinger Modular Representation}
$ \Quad{\LL{G}{A}{\alpha}}{\M}{\U}{\V} $ is called the \emph{$ \Trip{G}{A}{\alpha} $-Schr\"{o}dinger modular representation}.
\end{Def}



\begin{Prop} \label{The Schrödinger Modular Representation Is a Heisenberg Modular Representation}
$ \Quad{\LL{G}{A}{\alpha}}{\M}{\U}{\V} $ is a $ \Trip{G}{A}{\alpha} $-Heisenberg modular representation.
\end{Prop}

\begin{proof}
We will verify the various axioms in \autoref{Heisenberg Modular Representation}.

\ul{The fullness of $ \LL{G}{A}{\alpha} $ as a Hilbert $ A $-module}

Fix an $ a \in A $. Using Urysohn's Lemma, find a $ \varphi \in \Cc{G,\R_{\geq 0}} $ such that $ \ds \Int{G}{\func{\varphi}{x}}{\mu}{x} = 1 $. Let $ \Seq{e_{\lambda}}{\lambda \in \Lambda} $ be an approximate identity for $ A $. Define $ \phi \in \Cc{G,A} $ and a net $ \Seq{\psi_{\lambda}}{\lambda \in \Lambda} $ in $ \Cc{G,A} $ by
$$
\phi \df \Map{G}{A}{x}{\sqrt{\func{\varphi}{x}} \cdot \alphArg{x}{a}^{\ast}}, \qquad
\forall \lambda \in \Lambda: \qquad
\psi_{\lambda} \df \Map{G}{A}{x}{\sqrt{\func{\varphi}{x}} \cdot \alphArg{x}{e_{\lambda}}}.
$$
Then for all $ \lambda \in \Lambda $,
\begin{align*}
    \Inner{\func{q}{\phi}}{\func{q}{\psi_{\lambda}}}_{\LL{G}{A}{\alpha}}
& = \Int{G}{\alphArg{x^{- 1}}{\func{\phi}{x}^{\ast} \func{\psi_{\lambda}}{x}}}{\mu}{x} \\
& = \Int{G}
        {
        \alphArg{x^{- 1}}{\SqBr{\sqrt{\func{\varphi}{x}} \cdot \alphArg{x}{a}} \SqBr{\sqrt{\func{\varphi}{x}} \cdot \alphArg{x}{e_{\lambda}}}}
        }
        {\mu}{x} \\
& = \Int{G}{\alphArg{x^{- 1}}{\func{\varphi}{x} \cdot \alphArg{x}{a e_{\lambda}}}}{\mu}{x} \\
& = \Int{G}{\func{\varphi}{x} \cdot a e_{\lambda}}{\mu}{x} \\
& = a e_{\lambda}.
\end{align*}
Hence, $ \ds \lim_{\lambda \in \Lambda} \Inner{\func{q}{\phi}}{\func{q}{\psi_{\lambda}}}_{\LL{G}{A}{\alpha}} = \lim_{\lambda \in \Lambda} a e_{\lambda} = a $. As $ a \in A $ is arbitrary, we get
$$
\Cl{\Span{\Inner{\LL{G}{A}{\alpha}}{\LL{G}{A}{\alpha}}_{\LL{G}{A}{\alpha}}}}{A} = A,
$$
which proves that $ \LL{G}{A}{\alpha} $ is a full Hilbert $ A $-module.

\ul{The non-degeneracy of $ \M $}

Fix $ \phi \in \Cc{G,A} $. As $ \Range{\phi} \subseteq \Im{\phi}{\Supp{\phi}} \cup \SSet{0_{A}} $, and as $ \Supp{\phi} $ is a compact subset of $ G $, we see that $ \Range{\phi} $ is contained in a compact subset of $ A $. Compact subsets of metric spaces are separable, and subsets of separable subsets of metric spaces are separable, so in particular, $ \Range{\phi} $ is a separable subset of $ A $. Let $ D $ be a countable dense subset of $ \Range{\phi} $. If $ B $ denotes the $ C^{\ast} $-subalgebra of $ A $ generated by $ \Range{\phi} $, then $ B $ is also the $ C^{\ast} $-subalgebra of $ A $ generated by $ D $. Hence, $ B $ is a separable $ C^{\ast} $-algebra, which means that it possesses a sequential approximate identity $ \Seq{e_{n}}{n \in \N} $ norm-bounded by $ 1 $. Now, for all $ n \in \N $,
\begin{align*}
       \Norm{\func{q}{\phi} - \FUNC{\func{\M}{e_{n}}}{\func{q}{\phi}}}_{\LL{G}{A}{\alpha}}
& =    \Norm{\func{q}{\Map{G}{A}{x}{\func{\phi}{x} - e_{n} \func{\phi}{x}}}}_{\LL{G}{A}{\alpha}} \\
& =    \Norm{
            \Int{G}
                {\alphArg{x^{- 1}}{\SqBr{\func{\phi}{x} - e_{n} \func{\phi}{x}}^{\ast} \SqBr{\func{\phi}{x} - e_{n} \func{\phi}{x}}}}
                {\mu}{x}
            }_{A}^{\frac{1}{2}} \\
& \leq \SqBr{
            \Int{G}
                {
                \Norm{\alphArg{x^{- 1}}{\SqBr{\func{\phi}{x} - e_{n} \func{\phi}{x}}^{\ast} \SqBr{\func{\phi}{x} - e_{n} \func{\phi}{x}}}}_{A}
                }
                {\mu}{x}
            }^{\frac{1}{2}} \\
& =    \SqBr{\Int{G}{\Norm{\SqBr{\func{\phi}{x} - e_{n} \func{\phi}{x}}^{\ast} \SqBr{\func{\phi}{x} - e_{n} \func{\phi}{x}}}_{A}}{\mu}{x}}
       ^{\frac{1}{2}} \\
& =    \SqBr{\Int{G}{\Norm{\func{\phi}{x} - e_{n} \func{\phi}{x}}_{A}^{2}}{\mu}{x}}^{\frac{1}{2}}.
\end{align*}
Next, notice for all $ n \in \N $ and $ x \in G $ that
\begin{align*}
       \Norm{\func{\phi}{x} - e_{n} \func{\phi}{x}}_{A}^{2}
& \leq \SqBr{\Norm{\func{\phi}{x}}_{A} + \Norm{e_{n} \func{\phi}{x}}_{A}}^{2} \\
& \leq \SqBr{\Norm{\func{\phi}{x}}_{A} + \Norm{e_{n}}_{A} \Norm{\func{\phi}{x}}_{A}}^{2} \\
& \leq \SqBr{\Norm{\func{\phi}{x}}_{A} + \Norm{\func{\phi}{x}}_{A}}^{2} \qquad \Br{\text{As $ \Norm{e_{n}}_{A} \leq 1 $.}} \\
& =    4 \Norm{\func{\phi}{x}}_{A}^{2}.
\end{align*}
Hence, $ \Seq{\Map{G}{\R_{\geq 0}}{x}{\Norm{\func{\phi}{x} - e_{n} \func{\phi}{x}}_{A}^{2}}}{n \in \N} $ is dominated by the integrable function $ \Map{G}{\R_{\geq 0}}{x}{4 \Norm{\func{\phi}{x}}_{A}^{2}} $, and as it converges pointwise to $ 0_{G \to \R_{\geq 0}} $, the Lebesgue Dominated Convergence Theorem yields
$$
\lim_{n \to \infty} \Norm{\func{q}{\phi} - \FUNC{\func{\M}{e_{n}}}{\func{q}{\phi}}}_{\LL{G}{A}{\alpha}} = 0.
$$
Finally, an $ \dfrac{\epsilon}{3} $-argument shows that for any $ \Phi \in \LL{G}{A}{\alpha} $ and any $ \epsilon > 0 $, there exists an $ a \in A $ such that $ \Norm{\Phi - \FUNC{\func{\M}{a}}{\Phi}}_{\LL{G}{A}{\alpha}} < \epsilon $. Therefore, $ \M $ is non-degenerate.

\ul{The strong continuity of $ \U $}

Fix $ \phi \in \Cc{G,A} $ and $ \epsilon > 0 $. We will show that there exists an open neighborhood $ W $ of $ e_{G} $ in $ G $ such that
$$
\forall x \in W: \qquad
\Norm{\func{q}{\phi} - \FUNC{\func{\U}{x}}{\func{q}{\phi}}}_{\LL{G}{A}{\alpha}} < \epsilon.
$$
Observe for all $ x \in G $ that
\begin{align*}
       \Norm{\func{q}{\phi} - \FUNC{\func{\U}{x}}{\func{q}{\phi}}}_{\LL{G}{A}{\alpha}}
& =    \Norm{\func{q}{\Map{G}{A}{y}{\func{\phi}{y} - \alphArg{x}{\func{\phi}{x^{- 1} y}}}}}_{\LL{G}{A}{\alpha}} \\
& =    \Norm{
            \Int{G}
                {
                \alphArg{y^{- 1}}
                        {
                        \SqBr{\func{\phi}{y} - \alphArg{x}{\func{\phi}{x^{- 1} y}}}^{\ast}
                        \SqBr{\func{\phi}{y} - \alphArg{x}{\func{\phi}{x^{- 1} y}}}
                        }
                }
                {\mu}{y}
            }_{A}^{\frac{1}{2}} \\
& \leq \SqBr{
            \Int{G}
                {
                \Norm{
                     \alphArg{y^{- 1}}
                             {
                             \SqBr{\func{\phi}{y} - \alphArg{x}{\func{\phi}{x^{- 1} y}}}^{\ast}
                             \SqBr{\func{\phi}{y} - \alphArg{x}{\func{\phi}{x^{- 1} y}}}
                             }
                     }_{A}
                }
                {\mu}{y}
            }^{\frac{1}{2}} \\
& =    \SqBr{
            \Int{G}
                {
                \Norm{
                     \SqBr{\func{\phi}{y} - \alphArg{x}{\func{\phi}{x^{- 1} y}}}^{\ast}
                     \SqBr{\func{\phi}{y} - \alphArg{x}{\func{\phi}{x^{- 1} y}}}
                     }_{A}
                }
                {\mu}{y}
            }^{\frac{1}{2}} \\
& =    \SqBr{\Int{G}{\Norm{\func{\phi}{y} - \alphArg{x}{\func{\phi}{x^{- 1} y}}}_{A}^{2}}{\mu}{y}}^{\frac{1}{2}}.
\end{align*}
Let $ K $ be a compact neighborhood of $ e_{G} $ in $ G $, and let $ x \in K $. Then
$$
\forall y \in G \setminus K \Supp{\phi}: \qquad
\Norm{\func{\phi}{y} - \alphArg{x}{\func{\phi}{x^{- 1} y}}}_{A} = 0
$$
for the following reasons:
\begin{itemize}
\item
$ \Supp{\phi} \subseteq K \Supp{\phi} $, so $ \func{\phi}{y} = 0_{A} $ for all $ y \in G \setminus K \Supp{\phi} $.

\item
For all $ y \in G \setminus K \Supp{\phi} $, we have $ x^{- 1} y \in G \setminus \Supp{\phi} $, so $ \func{\phi}{x^{- 1} y} = 0_{A} $; if $ x^{- 1} y \in \Supp{\phi} $, then $ y \in x \Supp{\phi} \subseteq K \Supp{\phi} $, which is a contradiction.
\end{itemize}
It follows that for all $ x \in K $,
$$
  \Int{G}{\Norm{\func{\phi}{y} - \alphArg{x}{\func{\phi}{x^{- 1} y}}}_{A}^{2}}{\mu}{y}
= \Int{K \Supp{\phi}}{\Norm{\func{\phi}{y} - \alphArg{x}{\func{\phi}{x^{- 1} y}}}_{A}^{2}}{\mu}{y}.
$$
Find $ G $-indexed families $ \Seq{U_{y}}{y \in G} $ and $ \Seq{V_{y}}{y \in G} $ of open subsets of $ G $ with the following properties:
\begin{itemize}
\item
$ U_{y} $ is an open neighborhood of $ e_{G} $ in $ G $ contained in $ K $ for each $ y \in G $.

\item
$ V_{y} $ is an open neighborhood of $ y $ in $ G $ for each $ y \in G $.

\item
For each $ y \in G $, we have
$$
\forall x \in U_{y}, ~ \forall z \in V_{y}: \qquad
\Norm{\func{\phi}{z} - \alphArg{x}{\func{\phi}{x^{- 1} z}}}_{A} < \frac{\epsilon}{\sqrt{1 + \func{\mu}{K \Supp{\phi}}}}.
$$
\end{itemize}
As $ K \Supp{\phi} $ is a compact subset of $ G $, and as $ \Set{V_{y}}{y \in G} $ covers $ K \Supp{\phi} $, there exists a finite subset $ F $ of $ G $ such that $ \Set{V_{y}}{y \in F} $ covers $ K \Supp{\phi} $. Let $ \ds x \in W \df \bigcap_{y' \in F} U_{y'} \subseteq K $, and let $ y \in K \Supp{\phi} $. Then $ \Pair{x}{y} \in U_{y'} \times V_{y'} $ for some $ y' \in F $, so
$$
\Norm{\func{\phi}{y} - \alphArg{x}{\func{\phi}{x^{- 1} y}}}_{A} < \frac{\epsilon}{\sqrt{1 + \func{\mu}{K \Supp{\phi}}}}.
$$
As $ y \in K \Supp{\phi} $ is arbitrary, this implies that
\begin{align*}
       \Int{G}{\Norm{\func{\phi}{y} - \alphArg{x}{\func{\phi}{x^{- 1} y}}}_{A}^{2}}{\mu}{y}
& =    \Int{K \Supp{\phi}}{\Norm{\func{\phi}{y} - \alphArg{x}{\func{\phi}{x^{- 1} y}}}_{A}^{2}}{\mu}{y} \\
& \leq \Int{K \Supp{\phi}}{\SqBr{\frac{\epsilon}{\sqrt{1 + \func{\mu}{K \Supp{\phi}}}}}^{2}}{\mu}{y} \\
& =    \func{\mu}{K \Supp{\phi}} \cdot \frac{\epsilon^{2}}{1 + \func{\mu}{K \Supp{\phi}}} \\
& <    \epsilon^{2}.
\end{align*}
Hence, for all $ x \in W $,
$$
  \Norm{\func{q}{\phi} - \FUNC{\func{\U}{x}}{\func{q}{\phi}}}_{\LL{G}{A}{\alpha}}
= \SqBr{\Int{G}{\Norm{\func{\phi}{y} - \alphArg{x}{\func{\phi}{x^{- 1} y}}}_{A}^{2}}{\mu}{y}}^{\frac{1}{2}}
< \Br{\epsilon^{2}}^{\frac{1}{2}}
= \epsilon.
$$
An $ \dfrac{\epsilon}{3} $-argument shows that for any $ \Phi \in \LL{G}{A}{\alpha} $ and any $ \epsilon > 0 $, there exists an open neighborhood $ O $ of $ e_{G} $ in $ G $ such that $ \Norm{\Phi - \FUNC{\func{\U}{x}}{\Phi}}_{\LL{G}{A}{\alpha}} < \epsilon $ for all $ x \in O $. Therefore, $ \U $ is strongly continuous.

\ul{The strong continuity of $ \V $}

Fix $ \phi \in \Cc{G,A} $ and $ \epsilon > 0 $. We will show that there exists an open neighborhood $ W $ of $ e_{\widehat{G}} $ in $ \widehat{G} $ such that
$$
\forall \varphi \in W: \qquad
\Norm{\func{q}{\phi} - \FUNC{\func{\V}{\varphi}}{\func{q}{\phi}}}_{\LL{G}{A}{\alpha}} < \epsilon.
$$
Observe for all $ \varphi \in \widehat{G} $ that
\begin{align*}
       \Norm{\func{q}{\phi} - \FUNC{\func{\V}{\varphi}}{\func{q}{\phi}}}_{\LL{G}{A}{\alpha}}
& =    \Norm{\func{q}{\Map{G}{A}{x}{\func{\phi}{x} - \func{\varphi}{x} \cdot \func{\phi}{x}}}}_{\LL{G}{A}{\alpha}} \\
& =    \Norm{\func{q}{\Map{G}{A}{x}{\SqBr{1 - \func{\varphi}{x}} \cdot \func{\phi}{x}}}}_{\LL{G}{A}{\alpha}} \\
& =    \Norm{
            \Int{G}
                {
                \alphArg{x^{- 1}}
                        {
                        \SqBr{\SqBr{1 - \func{\varphi}{x}} \cdot \func{\phi}{x}}^{\ast}
                        \SqBr{\SqBr{1 - \func{\varphi}{x}} \cdot \func{\phi}{x}}
                        }
                }
                {\mu}{x}
            }_{A}^{\frac{1}{2}} \\
& =    \Norm{
            \Int{G}{\alphArg{x^{- 1}}{\SqBr{1 - \func{\varphi}{x}}^{2} \cdot \func{\phi}{x}^{\ast} \func{\phi}{x}}}{\mu}{x}
            }_{A}^{\frac{1}{2}} \\
& \leq \SqBr{
            \Int{G}{\Norm{\alphArg{x^{- 1}}{\SqBr{1 - \func{\varphi}{x}}^{2} \cdot \func{\phi}{x}^{\ast} \func{\phi}{x}}}_{A}}{\mu}{x}
            }^{\frac{1}{2}} \\
& =    \SqBr{\Int{G}{\Norm{\SqBr{1 - \func{\varphi}{x}}^{2} \cdot \func{\phi}{x}^{\ast} \func{\phi}{x}}_{A}}{\mu}{x}}^{\frac{1}{2}} \\
& =    \SqBr{\Int{G}{\Abs{1 - \func{\varphi}{x}}^{2} \Norm{\func{\phi}{x}}_{A}^{2}}{\mu}{x}}^{\frac{1}{2}} \\
& =    \SqBr{\Int{\Supp{\phi}}{\Abs{1 - \func{\varphi}{x}}^{2} \Norm{\func{\phi}{x}}_{A}^{2}}{\mu}{x}}^{\frac{1}{2}}.
\end{align*}
The topology on $ \widehat{G} $ is the compact-open topology, i.e., is given by uniform convergence on compact subsets of $ G $, so we can pick an open neighborhood $ W $ of $ e_{\widehat{G}} $ such that for all $ \varphi \in W $ and $ x \in \Supp{\phi} $,
$$
\Abs{1 - \func{\varphi}{x}} < \dfrac{\epsilon}{\sqrt{\SqBr{1 + \func{\mu}{\Supp{\phi}}} \SqBr{1 + \Norm{\func{\phi}{x}}_{A}^{2}}}}.
$$
Then we have for all $ \varphi \in W $ that
\begin{align*}
       \SqBr{\Int{\Supp{\phi}}{\Abs{1 - \func{\varphi}{x}}^{2} \Norm{\func{\phi}{x}}_{A}^{2}}{\mu}{x}}^{\frac{1}{2}}
& \leq \SqBr{
            \Int{\Supp{\phi}}
                {
                \frac{\epsilon^{2}}{\SqBr{1 + \func{\mu}{\Supp{\phi}}} \SqBr{1 + \Norm{\func{\phi}{x}}_{A}^{2}}} \cdot
                \Norm{\func{\phi}{x}}_{A}^{2}
                }
                {\mu}{x}
            }^{\frac{1}{2}} \\
& \leq \SqBr{\Int{\Supp{\phi}}{\frac{\epsilon^{2}}{1 + \func{\mu}{\Supp{\phi}}}}{\mu}{x}}^{\frac{1}{2}} \\
& =    \SqBr{\func{\mu}{\Supp{\phi}} \cdot \frac{\epsilon^{2}}{1 + \func{\mu}{\Supp{\phi}}}}^{\frac{1}{2}} \\
& <    \Br{\epsilon^{2}}^{\frac{1}{2}} \\
& =    \epsilon.
\end{align*}
Hence, $ \Norm{\func{q}{\phi} - \FUNC{\func{\V}{\varphi}}{\func{q}{\phi}}}_{\LL{G}{A}{\alpha}} < \epsilon $ for all $ \varphi \in W $.

An $ \dfrac{\epsilon}{3} $-argument shows that for any $ \Phi \in \LL{G}{A}{\alpha} $ and any $ \epsilon > 0 $, there exists an open neighborhood $ O $ of $ e_{\widehat{G}} $ in $ \widehat{G} $ such that $ \Norm{\Phi - \FUNC{\func{\V}{\varphi}}{\Phi}}_{\LL{G}{A}{\alpha}} < \epsilon $ for all $ \varphi \in O $. Therefore, $ \V $ is strongly continuous.

\ul{$ \Pair{\U}{\V} $ satisfies the Weyl Commutation Relation for $ G $ on $ \LL{G}{A}{\alpha} $}

Observe for all $ x \in G $, $ \varphi \in \widehat{G} $, and $ \phi \in \Cc{G,A} $ that
\begin{align*}
    \FUNC{\func{\V}{\varphi} \func{\U}{x}}{\func{q}{\phi}}
& = \FUNC{\func{\V}{\varphi}}{\FUNC{\func{\U}{x}}{\func{q}{\phi}}} \\
& = \FUNC{\func{\V}{\varphi}}{\func{q}{\Map{G}{A}{y}{\alphArg{x}{\func{\phi}{x^{- 1} y}}}}} \\
& = \func{q}{\Map{G}{A}{y}{\func{\varphi}{y} \cdot \alphArg{x}{\func{\phi}{x^{- 1} y}}}} \\
& = \func{q}{\Map{G}{A}{y}{\func{\varphi}{x} \func{\varphi}{x^{- 1} y} \cdot \alphArg{x}{\func{\phi}{x^{- 1} y}}}} \\
& = \func{q}{\Map{G}{A}{y}{\func{\varphi}{x} \cdot \alphArg{x}{\func{\varphi}{x^{- 1} y} \cdot \func{\phi}{x^{- 1} y}}}} \\
& = \func{\varphi}{x} \cdot \func{q}{\Map{G}{A}{y}{\alphArg{x}{\func{\varphi}{x^{- 1} y} \cdot \func{\phi}{x^{- 1} y}}}} \\
& = \func{\varphi}{x} \cdot \func{q}{\Map{G}{A}{y}{\alphArg{x}{\Func{\varphi \cdot \phi}{x^{- 1} y}}}} \\
& = \func{\varphi}{x} \cdot \FUNC{\func{\U}{x}}{\func{q}{\varphi \cdot \phi}} \\
& = \func{\varphi}{x} \cdot \FUNC{\func{\U}{x}}{\FUNC{\func{\V}{\varphi}}{\func{q}{\phi}}} \\
& = \func{\varphi}{x} \cdot \FUNC{\func{\U}{x} \func{\V}{\varphi}}{\func{q}{\phi}},
\end{align*}
so by continuity, $ \func{\V}{\varphi} \func{\U}{x} = \func{\varphi}{x} \cdot \func{\U}{x} \func{\V}{\varphi} $ for all $ x \in G $ and $ \varphi \in \widehat{G} $.

\ul{$ \Trip{\LL{G}{A}{\alpha}}{\M}{\U} $ is a $ \Trip{G}{A}{\alpha} $-covariant modular representation}

Observe for all $ x \in G $, $ a \in A $, and $ \phi \in \Cc{G,A} $ that
\begin{align*}
    \FUNC{\func{\U}{x} \func{\M}{a}}{\func{q}{\phi}}
& = \FUNC{\func{\U}{x}}{\FUNC{\func{\M}{a}}{\func{q}{\phi}}} \\
& = \FUNC{\func{\U}{x}}{\func{q}{\Map{G}{A}{y}{a \func{\phi}{y}}}} \\
& = \func{q}{\Map{G}{A}{y}{\alphArg{x}{a \func{\phi}{x^{- 1} y}}}} \\
& = \func{q}{\Map{G}{A}{y}{\alphArg{x}{a} \alphArg{x}{\func{\phi}{x^{- 1} y}}}} \\
& = \FUNC{\func{\M}{\alphArg{x}{a}}}{\func{q}{\Map{G}{A}{y}{\alphArg{x}{\func{\phi}{x^{- 1} y}}}}} \\
& = \FUNC{\func{\M}{\alphArg{x}{a}}}{\FUNC{\func{\U}{x}}{\func{q}{\phi}}} \\
& = \FUNC{\func{\M}{\alphArg{x}{a}} \func{\U}{x}}{\func{q}{\phi}},
\end{align*}
so by continuity, $ \func{\U}{x} \func{\M}{a} = \func{\M}{\alphArg{x}{a}} \func{\U}{x} $ for all $ x \in G $ and $ a \in A $.

\ul{$ \Trip{\LL{G}{A}{\alpha}}{\M}{\V} $ is a $ \Trip{\widehat{G}}{A}{\iota} $-covariant modular representation}

Observe for all $ \varphi \in \widehat{G} $, $ a \in A $, and $ \phi \in \Cc{G,A} $ that
\begin{align*}
    \FUNC{\func{\V}{\varphi} \func{\M}{a}}{\func{q}{\phi}}
& = \FUNC{\func{\V}{\varphi}}{\FUNC{\func{\M}{a}}{\func{q}{\phi}}} \\
& = \FUNC{\func{\V}{\varphi}}{\func{q}{\Map{G}{A}{y}{a \func{\phi}{y}}}} \\
& = \func{q}{\Map{G}{A}{y}{\func{\varphi}{y} \cdot a \func{\phi}{y}}} \\
& = \func{q}{\Map{G}{A}{y}{a \SqBr{\Func{\varphi \cdot \phi}{y}}}} \\
& = \FUNC{\func{\M}{a}}{\func{q}{\varphi \cdot \phi}} \\
& = \FUNC{\func{\M}{a}}{\FUNC{\func{\V}{\varphi}}{\func{q}{\phi}}} \\
& = \FUNC{\func{\M}{a} \func{\V}{\varphi}}{\func{q}{\phi}},
\end{align*}
so by continuity, $ \func{\V}{\varphi} \func{\M}{a} = \func{\M}{a} \func{\V}{\varphi} $ for all $ \varphi \in \widehat{G} $ and $ a \in A $.
\end{proof}


The ultimate goal of this section is to establish the following proposition, which we presently state in an imprecise form.



\begin{Prop} \label{An Injective Map from the Class of Heisenberg Modular Representations to the Class of Covariant Modular Representations}
There is an injective map from the class of $ \Trip{G}{A}{\alpha} $-Heisenberg modular representations to the class of $ \Trip{G}{\Co{G,A}}{\lt \otimes \alpha} $-covariant modular representations, where $ \lt $ denotes the action of $ G $ on $ \Co{G,A} $ by left translations, and $ \lt \otimes \alpha $ denotes the action of $ G $ on $ \Co{G,A} $ defined by
$$
\forall x \in G, ~ \forall g \in \Co{G,A}: \qquad
\func{\Br{\lt \otimes \alpha}_{x}}{g} \df \Map{G}{A}{y}{\alphArg{x}{\func{g}{x^{- 1} y}}}.
$$
\end{Prop}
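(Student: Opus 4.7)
The plan is to send each $\Trip{G}{A}{\alpha}$-Heisenberg modular representation $\Quad{\X}{\rho}{R}{S}$ to the triple $\Trip{\X}{\sigma}{R}$, where $\sigma$ is a non-degenerate $\ast$-representation of $\Co{G,A}$ canonically built from $\rho$ and $S$, and then to show that this triple is a $\Trip{G}{\Co{G,A}}{\lt \otimes \alpha}$-covariant modular representation and that the assignment is injective. First, I would integrate $S$ --- which, paired with the trivial $\ast$-representation of $\C$, forms a $\Trip{\widehat{G}}{\C}{\text{trivial}}$-covariant modular representation --- to obtain $\overline{\Pi}_{S} \colon \Cstar{\widehat{G}} \to \Adj{\X}$, then compose with the Fourier--Gelfand isomorphism $\Cstar{\widehat{G}} \cong \Co{G}$ to produce a non-degenerate $\ast$-representation $\pi \colon \Co{G} \to \Adj{\X}$ determined by $\pi(\widehat{f}) = \overline{\Pi}_{S}(f)$ for $f \in \Cc{\widehat{G}}$. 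Axiom~(7) of \autoref{Heisenberg Modular Representation} says $\rho(A)$ commutes with $S(\widehat{G})$, hence with $\pi(\Co{G})$. Since $\Co{G}$ is commutative, all $C^{\ast}$-tensor norms on $\Co{G} \otimes A$ coincide and $\Co{G} \otimes A \cong \Co{G,A}$; the commuting non-degenerate pair $(\pi,\rho)$ therefore extends uniquely to a non-degenerate $\ast$-representation $\sigma \colon \Co{G,A} \to \Adj{\X}$ satisfying $\sigma(g \diamond a) = \pi(g)\rho(a)$.

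Second, I would verify the covariance $R(x)\sigma(F)R(x)^{\ast} = \sigma\bigl((\lt \otimes \alpha)_{x}(F)\bigr)$ on the dense set of elementary tensors $F = g \diamond a$. The $\rho$-factor is handled by axiom~(6). For the $\pi$-factor, the Weyl commutation relation rearranges to $R(x) S(\varphi) R(x)^{\ast} = \overline{\varphi(x)}\, S(\varphi)$, which under integration yields
\[
R(x)\, \overline{\Pi}_{S}(f)\, R(x)^{\ast} \;=\; \overline{\Pi}_{S}\!\left( \overline{\mathrm{ev}_{x}} \cdot f \right)
\]
for $f \in \Cc{\widehat{G}}$, where $\mathrm{ev}_{x}(\varphi) := \varphi(x)$. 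A short Fourier-transform computation shows that multiplication by $\overline{\mathrm{ev}_{x}}$ on $\widehat{G}$ corresponds precisely to left translation by $x$ on $G$, so $R(x) \pi(g) R(x)^{\ast} = \pi(\lt_{x} g)$. Combining these with the commutation of $\pi(\Co{G})$ and $\rho(A)$ gives the covariance on elementary tensors, extending by continuity to all of $\Co{G,A}$.

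For injectivity, I would recover both $\rho$ and $S$ from $\Trip{\X}{\sigma}{R}$. Fixing an approximate identity $(h_{\lambda})$ of $\Co{G}$, the non-degeneracy of $\pi$ gives $\pi(h_{\lambda}) \to \Id_{\X}$ strongly, whence $\sigma(h_{\lambda} \diamond a) = \pi(h_{\lambda}) \rho(a) \to \rho(a)$ strongly for each $a \in A$, recovering $\rho$. Symmetrically, $\sigma(g \diamond e_{\mu}) \to \pi(g)$ strongly for any approximate identity $(e_{\mu})$ of $A$, recovering $\pi$; inverting the Fourier--Gelfand isomorphism returns $\overline{\Pi}_{S}$, and \autoref{Recovering a Covariant Modular Representation from Its Integrated Form} applied to the $\Trip{\widehat{G}}{\C}{\text{trivial}}$-covariant modular representation underlying $S$ then recovers $S$. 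Hence the assignment $\Quad{\X}{\rho}{R}{S} \mapsto \Trip{\X}{\sigma}{R}$ is injective.

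The main obstacle I anticipate is the Fourier-side identity linking conjugation by $R(x)$ to left translation on $\Co{G}$: one must carefully reconcile the Fourier-transform conventions, the normalisations of Haar measure on $G$ and $\widehat{G}$, and the character pairing $\varphi(x)$, and then extend $R(x)\pi(g)R(x)^{\ast} = \pi(\lt_{x} g)$ from the dense image of $\Cc{\widehat{G}}$ under the Fourier transform to all of $\Co{G}$ by norm-continuity of $\pi$ and $\lt$. A secondary technical point is justifying the existence of $\sigma$ from the commuting non-degenerate pair $(\pi,\rho)$; this uses the nuclearity (indeed commutativity) of $\Co{G}$ and the standard identification $\Co{G} \otimes_{\min} A \cong \Co{G,A}$.
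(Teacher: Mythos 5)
Your proposal is correct and produces exactly the same class map as the paper, but it assembles the representation of $ \Co{G,A} $ by a different technical route. The paper integrates the pair $ \Pair{\rho}{S} $ to a representation $ \overline{\Pi}_{\X,\rho,S} $ of the crossed product $ \Cstar{\widehat{G},A,\iota} $ and composes with the inverse of the $ A $-valued generalized Fourier transform, which \autoref{The Generalized Fourier Transform Extends to a C*-Isomorphism} shows is a $ C^{\ast} $-isomorphism onto $ \Co{G,A} $; you instead integrate $ S $ alone to a representation of $ \Cstar{\widehat{G}} \cong \Co{G} $ and couple it with $ \rho $ via the universal property of the maximal tensor product (valid for commuting nondegenerate representations into the $ C^{\ast} $-algebra $ \Adj{\X} $, not just $ \Bdd{\H} $) together with nuclearity of $ \Co{G} $, so that $ \func{\sigma}{g \diamond a} = \func{\pi}{g} \func{\rho}{a} $. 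Since axiom (7) gives $ \func{\Pi_{\X,\rho,S}}{f \diamond a} = \func{\overline{\Pi}_{S}}{f} \func{\rho}{a} $ and $ \FT{\nu}{A}{f \diamond a} = \FT{\nu}{\C}{f} \diamond a $, your $ \sigma $ coincides with the paper's $ \pi^{\X,\rho,S}_{\nu} $, so the two constructions agree; your covariance check is the paper's single integral computation split across the two tensor factors (the Weyl relation, in the form $ \func{R}{x} \func{S}{\varphi} \func{R}{x}^{\ast} = \overline{\func{\varphi}{x}} \cdot \func{S}{\varphi} $, handles the $ \Co{G} $ factor, and axiom (6) handles the $ A $ factor), and your sign bookkeeping is consistent with the paper's convention $ \FUNC{\FT{\nu}{A}{f}}{x} = \int_{\widehat{G}} \func{\varphi}{x} \func{f}{\varphi} \, \mathrm{d} \func{\nu}{\varphi} $, under which multiplication by $ \widehat{x^{-1}} = \overline{\mathrm{ev}_{x}} $ does transform into left translation by $ x $. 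For injectivity the paper composes with $ \overline{\F{\nu}{A}} $ to conclude $ \Pi_{\X_{1},\rho_{1},S_{1}} = \Pi_{\X_{2},\rho_{2},S_{2}} $ and then recovers $ \rho $ and $ S $ simultaneously from \autoref{Recovering a Covariant Modular Representation from Its Integrated Form}; you first disentangle $ \pi $ and $ \rho $ from $ \sigma $ with approximate identities and only then invoke that lemma for $ S $ --- both arguments are sound. What your route buys is that it avoids constructing the $ A $-valued Fourier transform altogether (and hence the use of \autoref{The Approximation Lemma} needed to show its image lies in $ \Co{G,A} $), at the cost of importing the max-tensor-product universal property and nuclearity; the paper's route keeps everything inside the crossed-product formalism that it needs anyway, since $ \pi^{\X,\rho,S}_{\nu} $ and the isomorphism $ \Cstar{\widehat{G},A,\iota} \cong \Co{G,A} $ are reused verbatim in \autoref{The Covariant Pre-Stone-von Neumann Theorem} and in the proof of the main theorem.
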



The proposition is imprecisely stated because we have not yet specified what the injective map is, but this will be explicated in due course.

The main tool for proving the proposition is a $ C^{\ast} $-algebra-valued version of the Fourier transform, which we will introduce soon. In order to show that this generalized Fourier transform is well-defined, the following approximation lemma is indispensable.



\begin{Lem} \label{The Approximation Lemma}
Let $ X $ be a locally compact Hausdorff space, $ V $ a normed vector space, and $ D $ a dense subset of $ V $. Then for any $ f \in \Cc{X,V} $ and $ \epsilon > 0 $, there exist $ \varphi_{1},\ldots,\varphi_{n} \in \Cc{X} $ and $ v_{1},\ldots,v_{n} \in D $ such that
$$
\forall x \in X: \qquad
\Norm{\func{f}{x} - \sum_{i = 1}^{n} \func{\varphi_{i}}{x} \cdot v_{i}}_{V} < \epsilon.
$$
If $ \nu $ is a regular Borel measure on $ X $, then for any $ f \in \Cc{X,V} $ and $ \epsilon > 0 $, there exist $ \varphi_{1},\ldots,\varphi_{n} \in \Cc{X} $ and $ v_{1},\ldots,v_{n} \in D $ such that
$$
\Int{X}{\Norm{\func{f}{x} - \sum_{i = 1}^{n} \func{\varphi_{i}}{x} \cdot v_{i}}_{V}}{\nu}{x} < \epsilon.
$$
\end{Lem}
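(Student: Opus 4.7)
The plan is to combine the density of $ D $ in $ V $ with a partition-of-unity argument on the locally compact Hausdorff space $ X $. Let $ K \df \Supp{f} $, which is compact because $ f \in \Cc{X,V} $. For each $ x \in K $, use the density of $ D $ to pick a $ v_{x} \in D $ with $ \Norm{\func{f}{x} - v_{x}}_{V} < \epsilon / 2 $, and then use the continuity of $ f $ to pick an open neighborhood $ U_{x} $ of $ x $ in $ X $ on which $ \Norm{\func{f}{y} - \func{f}{x}}_{V} < \epsilon / 2 $. A triangle inequality then gives $ \Norm{\func{f}{y} - v_{x}}_{V} < \epsilon $ for every $ y \in U_{x} $.

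By the compactness of $ K $, extract finitely many $ x_{1}, \ldots, x_{n} \in K $ such that $ U_{x_{1}}, \ldots, U_{x_{n}} $ cover $ K $, and invoke the standard $ \Cc{X} $-partition-of-unity theorem for locally compact Hausdorff spaces to produce $ \varphi_{1}, \ldots, \varphi_{n} \in \Cc{X} $ with $ 0 \leq \varphi_{i} \leq 1 $, $ \Supp{\varphi_{i}} \subseteq U_{x_{i}} $, $ \sum_{i = 1}^{n} \varphi_{i} \leq 1 $ on $ X $, and $ \sum_{i = 1}^{n} \varphi_{i} = 1 $ on $ K $. Setting $ v_{i} \df v_{x_{i}} $ and $ \func{g}{x} \df \sum_{i = 1}^{n} \func{\varphi_{i}}{x} \cdot v_{i} $, I will verify the uniform bound $ \Norm{\func{f}{x} - \func{g}{x}}_{V} < \epsilon $ case-by-case. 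If $ x \in K $, the identity $ \sum_{i} \func{\varphi_{i}}{x} = 1 $ gives $ \func{f}{x} - \func{g}{x} = \sum_{i} \func{\varphi_{i}}{x} \cdot \SqBr{\func{f}{x} - v_{i}} $, and whenever $ \func{\varphi_{i}}{x} > 0 $ one has $ x \in U_{x_{i}} $, so the triangle inequality delivers the desired strict bound. If $ x \notin K $, then $ \func{f}{x} = 0_{V} $; and for each $ i $ with $ \func{\varphi_{i}}{x} > 0 $, the inclusion $ x \in U_{x_{i}} $ forces $ \Norm{v_{i}}_{V} = \Norm{\func{f}{x} - v_{i}}_{V} < \epsilon $, whence $ \Norm{\func{g}{x}}_{V} < \epsilon \sum_{i} \func{\varphi_{i}}{x} \leq \epsilon $.

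For the integral version, the additional ingredient needed is control on the measure of the set where $ f - g $ can be nonzero, so that a uniform pointwise bound integrates to a small quantity. Before carrying out the construction above, I will fix a compact neighborhood $ L $ of $ K $ in $ X $, which exists by local compactness and Hausdorffness, replace each $ U_{x_{i}} $ by $ U_{x_{i}} \cap \operatorname{int}\Br{L} $ throughout the argument, and apply the pointwise construction with $ \epsilon $ replaced by $ \epsilon / \Br{1 + \func{\nu}{L}} $; this constant is finite by the regularity of $ \nu $. The resulting approximation $ g $ is supported in $ L $, so integrating the pointwise bound over $ L $ against $ \nu $ yields the claimed estimate.

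The only (modest) obstacle is this final bookkeeping: ensuring that the support of the approximation lies in a predetermined set of finite measure. This is handled cleanly by the preliminary reduction to a compact neighborhood $ L $ of $ \Supp{f} $; everything else is a direct application of standard $ \Cc{X} $-partition-of-unity techniques on a locally compact Hausdorff space.
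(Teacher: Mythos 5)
Your proposal is correct and follows essentially the same route as the paper's proof: a partition of unity subordinate to a finite cover of $ \Supp{f} $ by sets on which $ f $ varies little, with values replaced by nearby elements of $ D $, and the integral estimate obtained by confining the approximant's support to a compact (hence finite-measure) neighborhood of $ \Supp{f} $ and running the uniform argument with $ \epsilon $ scaled by $ 1/\Br{1 + \func{\nu}{L}} $. The only differences are cosmetic: you use a one-step $ \epsilon/2 $ estimate where the paper interposes the intermediate function $ P $ with an $ \epsilon/3 $ argument, and you shrink the cover into the interior of a compact neighborhood rather than applying the uniform statement on a precompact open set and extending the $ \varphi_{i} $ by zero.
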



This is a folklore result that can be straightforwardly proven using partitions of unity. To avoid disrupting the flow of this paper, we will provide a proof of it in the appendix.



\begin{Def} \label{The Generalized Fourier Transform}
The $ A $-valued \emph{generalized Fourier transform} for $ G $ with respect to a Haar measure $ \nu $ on $ \widehat{G} $ is the map $ \F{\nu}{A}: \Cc{\widehat{G},A} \to \Co{G,A} $ defined by
$$
\forall f \in \Cc{\widehat{G},A}: \qquad
\FT{\nu}{A}{f} \df \Map{G}{A}{x}{\Int{\widehat{G}}{\func{\widehat{x}}{\varphi} \cdot \func{f}{\varphi}}{\nu}{\varphi}}.
$$
\end{Def}


We proceed to demonstrate the consistency of this definition.

When $ A \neq \C $, it is not at all obvious why the image of $ \F{\nu}{A} $ should be in $ \Co{G,A} $. To see this, let us pick $ f \in \Cc{\widehat{G},A} $. For every $ x \in G $, the integrand of $ \ds \Int{\widehat{G}}{\func{\widehat{x}}{\varphi} \cdot \func{f}{\varphi}}{\nu}{\varphi} $ belongs to $ \Cc{\widehat{G},A} $, so the integral exists. Furthermore, for all $ x \in G $,
\begin{align*}
       \Norm{\Int{\widehat{G}}{\func{\widehat{x}}{\varphi} \cdot \func{f}{\varphi}}{\nu}{\varphi}}_{A}
& \leq \Int{\widehat{G}}{\Norm{\func{\widehat{x}}{\varphi} \cdot \func{f}{\varphi}}_{A}}{\nu}{\varphi} \\
& =    \Int{\widehat{G}}{\Abs{\func{\widehat{x}}{\varphi}} \Norm{\func{f}{\varphi}}_{A}}{\nu}{\varphi} \\
& =    \Int{\widehat{G}}{\Norm{\func{f}{\varphi}}_{A}}{\nu}{\varphi} \\
& =    \Norm{f}_{\nu,1},
\end{align*}
so $ \FT{\nu}{A}{f} $ is a function from $ G $ to $ A $ that is pointwise-bounded by $ \Norm{f}_{\nu,1} $. To check that it is also continuous, fix $ x \in G $ and $ \epsilon > 0 $. As $ \Supp{f} $ is a compact subset of $ \widehat{G} $ with respect to the compact-open topology on $ \CC{G} $, the Arzel\`a-Ascoli Theorem says that $ \Supp{f} $ is an equicontinuous subset of $ \CC{G} $, so there exists an open neighborhood $ U $ of $ x $ in $ G $ such that for all $ y \in U $ and $ \varphi \in \Supp{f} $,
$$
\Abs{\func{\varphi}{x} - \func{\varphi}{y}} < \dfrac{\epsilon}{1 + \Norm{f}_{\nu,1}}.
$$
Consequently, for all $ y \in U $,
\begin{align*}
       \Norm{\FUNC{\FT{\nu}{A}{f}}{x} - \FUNC{\FT{\nu}{A}{f}}{y}}_{A}
& =    \Norm{
            \Int{\widehat{G}}{\func{\widehat{x}}{\varphi} \cdot \func{f}{\varphi}}{\nu}{\varphi} -
            \Int{\widehat{G}}{\func{\widehat{y}}{\varphi} \cdot \func{f}{\varphi}}{\nu}{\varphi}
            }_{A} \\
& =    \Norm{\Int{\widehat{G}}{\SqBr{\func{\widehat{x}}{\varphi} - \func{\widehat{y}}{\varphi}} \cdot \func{f}{\varphi}}{\nu}{\varphi}}_{A} \\
& =    \Norm{\Int{\widehat{G}}{\SqBr{\func{\varphi}{x} - \func{\varphi}{y}} \cdot \func{f}{\varphi}}{\nu}{\varphi}}_{A} \\
& \leq \Int{\widehat{G}}{\Norm{\SqBr{\func{\varphi}{x} - \func{\varphi}{y}} \cdot \func{f}{\varphi}}_{A}}{\nu}{\varphi} \\
& =    \Int{\widehat{G}}{\Abs{\func{\varphi}{x} - \func{\varphi}{y}} \Norm{\func{f}{\varphi}}_{A}}{\nu}{\varphi} \\
& =    \Int{\Supp{f}}{\Abs{\func{\varphi}{x} - \func{\varphi}{y}} \Norm{\func{f}{\varphi}}_{A}}{\nu}{\varphi} \\
& \leq \Int{\Supp{f}}{\frac{\epsilon}{1 + \Norm{f}_{\nu,1}} \cdot \Norm{\func{f}{\varphi}}_{A}}{\nu}{\varphi} \\
& =    \frac{\epsilon}{1 + \Norm{f}_{\nu,1}} \Int{\Supp{f}}{\Norm{\func{f}{\varphi}}_{A}}{\nu}{\varphi} \\
& =    \frac{\epsilon}{1 + \Norm{f}_{\nu,1}} \cdot \Norm{f}_{\nu,1} \\
& <    \epsilon.
\end{align*}
As $ x \in G $ is arbitrary, this proves that $ \FT{\nu}{A}{f} $ is continuous, so the image of $ \F{\nu}{A} $ is contained in $ \Cb{G,A} $.

Now, given an $ f \in \Cc{\widehat{G}} $ and an $ a \in A $, we have for all $ x \in G $ that
\begin{align*}
    \FUNC{\FT{\nu}{A}{f \diamond a}}{x}
& = \Int{\widehat{G}}{\func{\widehat{x}}{\varphi} \cdot \Func{f \diamond a}{\varphi}}{\nu}{\varphi} \\
& = \Int{\widehat{G}}{\func{\widehat{x}}{\varphi} \cdot \SqBr{\func{f}{\varphi} \cdot a}}{\nu}{\varphi} \\
& = \SqBr{\Int{\widehat{G}}{\func{\widehat{x}}{\varphi} \func{f}{\varphi}}{\nu}{\varphi}} \cdot a \\
& = \FUNC{\FT{\nu}{\C}{f}}{x} \cdot a.
\end{align*}
As we already know that $ \FT{\nu}{\C}{f} \in \Co{G} $, we get $ \FT{\nu}{A}{f \diamond a} = \FT{\nu}{\C}{f} \diamond a \in \Co{G,A} $. Hence, as $ f \in \Cc{\widehat{G}} $ and $ a \in A $ are arbitrary, we obtain
$$
          \Im{\F{\nu}{A}}{\Span{\Cc{\widehat{G}} \diamond A}}
\subseteq \Span{\Co{G} \diamond A}
\subseteq \Co{G,A}.
$$
Let $ f \in \Cc{\widehat{G},A} $. \autoref{The Approximation Lemma} makes it possible to find a sequence $ \Seq{f_{n}}{n \in \N} $ in $ \Span{\Cc{\widehat{G}} \diamond A} $ such that $ \ds \lim_{n \to \infty} \Norm{f - f_{n}}_{\nu,1} = 0 $. Then because $ \Norm{\FT{\nu}{A}{f - f_{n}}}_{\Cb{G,A}} \leq \Norm{f - f_{n}}_{\nu,1} $ for all $ n \in \N $, we get
$$
  \lim_{n \to \infty} \Norm{\FT{\nu}{A}{f} - \FT{\nu}{A}{f_{n}}}_{\Cb{G,A}}
= \lim_{n \to \infty} \Norm{\FT{\nu}{A}{f - f_{n}}}_{\Cb{G,A}}
= 0.
$$
However, as seen above, $ \FT{\nu}{A}{f_{n}} \in \Co{G,A} $ for all $ n \in \N $, so because $ \Co{G,A} $ is complete with respect to the supremum norm, it follows that $ \FT{\nu}{A}{f} \in \Co{G,A} $. As $ f \in \Cc{\widehat{G},A} $ is arbitrary, we have proven that $ \F{\nu}{A} $ maps $ \Cc{\widehat{G},A} $ to $ \Co{G,A} $.

Knowing now that $ \F{\nu}{A}: \Cc{\widehat{G},A} \to \Co{G,A} $ is well-defined, our next step is to show the following.



\begin{Prop} \label{The Generalized Fourier Transform Extends to a C*-Isomorphism}
$ \F{\nu}{A} $ extends to a $ C^{\ast} $-isomorphism $ \overline{\F{\nu}{A}}: \Cstar{\widehat{G},A,\iota} \to \Co{G,A} $.
\end{Prop}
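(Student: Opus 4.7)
The plan is to verify that $\F{\nu}{A}$ is a $\ast$-algebra homomorphism from the convolution $\ast$-algebra $\Trip{\Cc{\widehat{G},A}}{\star_{\widehat{G},A,\iota}}{^{\ast_{\widehat{G},A,\iota}}}$ into $\Co{G,A}$ (equipped with pointwise multiplication and involution), and then to promote its closure to a $C^{\ast}$-isomorphism by identifying both $\Cstar{\widehat{G},A,\iota}$ and $\Co{G,A}$ with suitable $C^{\ast}$-tensor products. First I would dispatch the algebraic assertion. Since $\widehat{G}$ is abelian its modular function equals $1$, and since $\iota$ is trivial, the convolution and involution formulas collapse to their scalar-valued analogues, so Fubini together with the substitution $\varphi \mapsto \varphi^{-1}$ gives $\FT{\nu}{A}{f \star_{\widehat{G},A,\iota} g} = \FT{\nu}{A}{f} \cdot \FT{\nu}{A}{g}$ and $\FT{\nu}{A}{f^{\ast_{\widehat{G},A,\iota}}} = \FT{\nu}{A}{f}^{\ast}$ by direct calculation.

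For the analytic part I would exploit that the full crossed product by a trivial action is the maximal $C^{\ast}$-tensor product, giving $\Cstar{\widehat{G},A,\iota} \cong A \otimes_{\max} \Cstar{\widehat{G}}$ canonically. The classical scalar Fourier transform supplies a $C^{\ast}$-isomorphism $\overline{\F{\nu}{\C}}: \Cstar{\widehat{G}} \to \Co{G}$ (Pontryagin duality in its $C^{\ast}$-algebraic guise), whence $\Id_{A} \otimes_{\max} \overline{\F{\nu}{\C}}$ lifts to a $C^{\ast}$-isomorphism $A \otimes_{\max} \Cstar{\widehat{G}} \to A \otimes_{\max} \Co{G}$. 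The commutativity, hence nuclearity, of $\Co{G}$ forces $A \otimes_{\max} \Co{G} = A \otimes_{\min} \Co{G}$, and the latter is canonically $C^{\ast}$-isomorphic to $\Co{G,A}$ via $a \otimes h \mapsto h \diamond a$. Composing these three canonical $C^{\ast}$-isomorphisms produces a candidate $C^{\ast}$-isomorphism $\Phi: \Cstar{\widehat{G},A,\iota} \to \Co{G,A}$.

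The main obstacle, and the only non-trivial bookkeeping, is to confirm that $\Phi$ agrees with the closure of $\F{\nu}{A}$. I would check agreement on the dense $\ast$-subalgebra $\Span{\Cc{\widehat{G}} \diamond A}$ of $\Cc{\widehat{G},A}$: for an elementary element $f \diamond a$ with $f \in \Cc{\widehat{G}}$ and $a \in A$, the identification $f \diamond a \leftrightarrow a \otimes f$ and the already-verified pointwise formula $\FT{\nu}{A}{f \diamond a} = \FT{\nu}{\C}{f} \diamond a$ make $\Phi$ and $\F{\nu}{A}$ literally equal on these elements. Density of $\Span{\Cc{\widehat{G}} \diamond A}$ in $\Cc{\widehat{G},A}$ in the $L^{1}$-norm (via \autoref{The Approximation Lemma}), combined with the domination of the universal norm by the $L^{1}$-norm, extends this equality by continuity to all of $\Cstar{\widehat{G},A,\iota}$. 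Surjectivity of $\overline{\F{\nu}{A}}$ onto $\Co{G,A}$ and injectivity then come for free from the composite structure of $\Phi$.
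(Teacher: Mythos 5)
Your proposal is correct and follows essentially the same route as the paper's proof: verify the $\ast$-homomorphism property, extend via the $L^{1}$-contractivity of the universal norm, identify $\Cstar{\widehat{G},A,\iota} \cong \Cstar{\widehat{G}} \otimes A \cong \Co{G} \otimes A \cong \Co{G,A}$ through the scalar Fourier transform, and match the two maps on the dense span of elements $f \diamond a$ using $\FT{\nu}{A}{f \diamond a} = \FT{\nu}{\C}{f} \diamond a$. Your explicit handling of the maximal versus minimal tensor product via nuclearity of $\Co{G}$ is a minor elaboration of what the paper delegates to Lemma 2.73 of \cite{Wi} and ``the theory of $C^{\ast}$-tensor products,'' not a different argument.
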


\begin{proof}
It is routine to check that $ \F{\nu}{A}: \Trip{\Cc{\widehat{G},A}}{\star_{\widehat{G},A,\iota}}{^{\ast_{\star_{\widehat{G},A,\iota}}}} \to \Co{G,A} $ is a $ \ast $-homomorphism. As we know that $ \F{\nu}{A} $ is contractive with respect to $ \Norm{\cdot}_{\nu,1} $, the theory of $ C^{\ast} $-crossed products says that $ \F{\nu}{A} $ extends to a $ C^{\ast} $-homomorphism $ \overline{\F{\nu}{A}}: \Cstar{\widehat{G},A,\iota} \to \Co{G,A} $.

By Lemma 2.73 of \cite{Wi} and the theory of $ C^{\ast} $-tensor products, we have the series of $ C^{\ast} $-isomorphisms
$$
      \Cstar{\widehat{G},A,\iota}
\cong \Cstar{\widehat{G}} \otimes A
\cong \Co{G} \otimes A
\cong \Co{G,A},
$$
which are implemented as follows: For all $ f_{1},\ldots,f_{n} \in \Cc{\widehat{G}} $ and $ a_{1},\ldots,a_{n} \in A $,
$$
        \func{\eta_{\Trip{\widehat{G}}{A}{\iota}}}{\sum_{i = 1}^{n} f_{i} \diamond a_{i}}
\mapsto \sum_{i = 1}^{n} \func{\eta_{\widehat{G}}}{f_{i}} \odot a_{i}
\mapsto \sum_{i = 1}^{n} \FT{\nu}{\C}{f_{i}} \odot a_{i}
\mapsto \sum_{i = 1}^{n} \FT{\nu}{\C}{f_{i}} \diamond a_{i}.
$$
However, $ \ds \FT{\nu}{A}{\sum_{i = 1}^{n} f_{i} \diamond a_{i}} = \sum_{i = 1}^{n} \FT{\nu}{\C}{f_{i}} \diamond a_{i} $, so $ \overline{\F{\nu}{A}} $ agrees with some $ C^{\ast} $-isomorphism from $ \Cstar{\widehat{G},A,\iota} $ to $ \Co{G,A} $ on a dense subset. It is therefore precisely that $ C^{\ast} $-isomorphism.
\end{proof}



\begin{Def} \label{A Non-Degenerate *-Representation of C0(G,A) on X}
For a $ \Trip{\widehat{G}}{A}{\iota} $-covariant modular representation $ \Trip{\X}{\rho}{S} $, let
$$
\pi^{\X,\rho,S}_{\nu} \df \overline{\Pi}_{\X,\rho,S} \circ \overline{\F{\nu}{A}}^{- 1}: \Co{G,A} \to \Adj{\X},
$$
which is a non-degenerate $ \ast $-representation of $ \Co{G,A} $ on $ \X $.
\end{Def}


Finally, let us tackle the main objective of this section.


\begin{proof}[Proof of \autoref{An Injective Map from the Class of Heisenberg Modular Representations to the Class of Covariant Modular Representations}]
Fixing a Haar measure $ \nu $ on $ \widehat{G} $, we divide the proof into two parts.

\ul{Defining the desired class map}

Observe for all $ x,y \in G $ and $ f \in \Cc{\widehat{G},A} $ that
\begin{align*}
    \FUNC{\func{\Br{\lt \otimes \alpha}_{x}}{\FT{\nu}{A}{f}}}{y}
& = \alphArg{x}{\FUNC{\FT{\nu}{A}{f}}{x^{- 1} y}} \\
& = \alphArg{x}{\Int{\widehat{G}}{\func{\widehat{\Br{x^{- 1} y}}}{\varphi} \cdot \func{f}{\varphi}}{\nu}{\varphi}} \\
& = \Int{\widehat{G}}{\func{\widehat{\Br{x^{- 1} y}}}{\varphi} \cdot \alphArg{x}{\func{f}{\varphi}}}{\nu}{\varphi} \\
& = \Int{\widehat{G}}{\func{\widehat{y}}{\varphi} \func{\widehat{x^{- 1}}}{\varphi} \cdot \alphArg{x}{\func{f}{\varphi}}}{\nu}{\varphi} \\
& = \Int{\widehat{G}}{\func{\widehat{y}}{\varphi} \cdot \FUNC{\widehat{x^{- 1}} \cdot \Br{\alph{x} \circ f}}{\varphi}}{\nu}{\varphi} \\
& = \FUNC{\FT{\nu}{A}{\widehat{x^{- 1}} \cdot \Br{\alph{x} \circ f}}}{y}.
\end{align*}
Hence, $ \func{\Br{\lt \otimes \alpha}_{x}}{\FT{\nu}{A}{f}} = \FT{\nu}{A}{\widehat{x^{- 1}} \cdot \Br{\alph{x} \circ f}} $ for all $ x \in G $ and $ f \in \Cc{\widehat{G},A} $.

Given a $ \Trip{G}{A}{\alpha} $-Heisenberg modular representation $ \Quad{\X}{\rho}{R}{S} $, we will exploit the computation above to show that $ \Trip{\X}{\pi^{\X,\rho,S}_{\nu}}{R} $ is a $ \Trip{G}{\Co{G,A}}{\lt \otimes \alpha} $-covariant modular representation.

Firstly, $ \rho $ is a non-degenerate $ \ast $-representation of $ A $ on $ \X $, so $ \Pi_{\X,\rho,S} $ is a non-degenerate $ \ast $-representation of $ \Cstar{\widehat{G},A,\iota} $ on $ \X $, which, in turn, means that $ \pi^{\X,\rho,S}_{\nu} $ is a non-degenerate $ \ast $-representation of $ \Co{G,A} $ on $ \X $. Secondly, we have for all $ x \in G $ and $ f \in \Cc{\widehat{G},A} $ that
\begin{align*}
    \func{R}{x} \func{\pi^{\X,\rho,S}_{\nu}}{\FT{\nu}{A}{f}}
& = \func{R}{x} \SqBr{\Int{\widehat{G}}{\func{\rho}{\func{f}{\varphi}} \func{S}{\varphi}}{\nu}{\varphi}} \\
& = \Int{\widehat{G}}{\func{R}{x} \func{\rho}{\func{f}{\varphi}} \func{S}{\varphi}}{\nu}{\varphi} \\
& = \Int{\widehat{G}}{\func{\rho}{\alphArg{x}{\func{f}{\varphi}}} \func{R}{x} \func{S}{\varphi}}{\nu}{\varphi} \\
& = \Int{\widehat{G}}{\func{\varphi}{x^{- 1}} \cdot \func{\rho}{\alphArg{x}{\func{f}{\varphi}}} \func{S}{\varphi} \func{R}{x}}{\nu}{\varphi} \\
& = \SqBr{\Int{\widehat{G}}{\func{\varphi}{x^{- 1}} \cdot \func{\rho}{\alphArg{x}{\func{f}{\varphi}}} \func{S}{\varphi}}{\nu}{\varphi}}
    \func{R}{x} \\
& = \SqBr{\Int{\widehat{G}}{\func{\rho}{\func{\varphi}{x^{- 1}} \cdot \alphArg{x}{\func{f}{\varphi}}} \func{S}{\varphi}}{\nu}{\varphi}}
    \func{R}{x} \\
& = \SqBr{
         \Int{\widehat{G}}{\func{\rho}{\func{\widehat{x^{- 1}}}{\varphi} \cdot \Func{\alph{x} \circ f}{\varphi}} \func{S}{\varphi}}
             {\nu}{\gamma}
         }
    \func{R}{x} \\
& = \SqBr{\Int{\widehat{G}}{\func{\rho}{\FUNC{\widehat{x^{- 1}} \cdot \Br{\alph{x} \circ f}}{\varphi}} \func{S}{\varphi}}{\nu}{\varphi}}
    \func{R}{x} \\
& = \func{\pi^{\X,\rho,S}_{\nu}}{\FT{\nu}{A}{\widehat{x^{- 1}} \cdot \Br{\alph{x} \circ f}}} \func{R}{x} \\
& = \func{\pi^{\X,\rho,S}_{\nu}}{\func{\Br{\lt \otimes \alpha}_{x}}{\FT{\nu}{A}{f}}} \func{R}{x}.
\end{align*}
As the image of $ \F{\nu}{A} $ is dense in $ \Co{G,A} $, it follows from continuity that for all $ x \in G $ and $ g \in \Co{G,A} $,
$$
\func{R}{x} \func{\pi^{\X,\rho,S}_{\nu}}{g} = \func{\pi^{\X,\rho,S}_{\nu}}{\func{\Br{\lt \otimes \alpha}_{x}}{g}} \func{R}{x}.
$$
Hence, $ \Trip{\X}{\pi^{\X,\rho,S}_{\nu}}{R} $ is a $ \Trip{G}{\Co{G,A}}{\lt \otimes \alpha} $-covariant modular representation. We can thus define a map from the class of $ \Trip{G}{A}{\alpha} $-Heisenberg modular representations to the class of $ \Trip{G}{\Co{G,A}}{\lt \otimes \alpha} $-covariant modular representation according to the rule
$$
\Quad{\X}{\rho}{R}{S} \mapsto \Trip{\X}{\pi^{\X,\rho,S}_{\nu}}{R}.
$$

\ul{Injectivity of the class map}

Let $ \Quad{\X_{1}}{\rho_{1}}{R_{1}}{S_{1}} $ and $ \Quad{\X_{2}}{\rho_{2}}{R_{2}}{S_{2}} $ be $ \Trip{G}{A}{\alpha} $-Heisenberg modular representations such that
$$
\Trip{\X_{1}}{\pi^{\X_{1},\rho_{1},S_{1}}_{\nu}}{R_{1}} = \Trip{\X_{2}}{\pi^{\X_{2},\rho_{2},S_{2}}_{\nu}}{R_{2}}.
$$
Clearly, $ \X_{1} = \X_{2} $, $ R_{1} = R_{2} $, and $ \pi^{\X_{1},\rho_{1},S_{1}}_{\nu} = \pi^{\X_{2},\rho_{2},S_{2}}_{\nu} $. Hence,
$$
  \overline{\Pi}_{\X_{1},\rho_{1},S_{1}}
= \pi^{\X_{1},\rho_{1},S_{1}}_{\nu} \circ \overline{\F{\nu}{A}}
= \pi^{\X_{2},\rho_{2},S_{2}}_{\nu} \circ \overline{\F{\nu}{A}}
= \overline{\Pi}_{\X_{2},\rho_{2},S_{2}},
$$
which yields $ \Pi_{\X_{1},\rho_{1},S_{1}} = \Pi_{\X_{2},\rho_{2},S_{2}} $. By \autoref{Recovering a Covariant Modular Representation from Its Integrated Form}, $ \rho_{1} = \rho_{2} $ and $ S_{1} = S_{2} $. Therefore, the proposed class map is indeed injective.
\end{proof}


Actually, one can show that the image of the class map above is the class of $ \Trip{G}{\Co{G,A}}{\lt \otimes \alpha} $-covariant modular representations whose underlying Hilbert $ C^{\ast} $-module is a full Hilbert $ A $-module. However, we will have no need of this fact.



\section{Hilbert $ \Comp{\H} $-Modules}


In this section, we take a brief excursion into Hilbert $ \Comp{\H} $-modules. The initial material can be found in \cite{BaGu1,BaGu2}, but we have decided to supply our own proofs, some of which are simpler than the original ones.

Throughout this section, we shall fix a non-trivial Hilbert space $ \H $.



\begin{Lem} \label{The Linear Functional Associated To a Rank-One Projection}
Let $ P $ be a rank-one projection on $ \H $. Then there exists a unique linear functional $ f $ on $ \Comp{\H} $ such that $ P S P = \func{f}{S} \cdot P $ for all $ S \in \Comp{\H} $.
\end{Lem}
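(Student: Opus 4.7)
The strategy is to exploit the concrete form of a rank-one projection. Since $ P $ has one-dimensional range, there exists a unit vector $ v \in \H $ with $ P = \Ket{v} \Bra{v} $, using the ket-bra notation introduced in the preliminaries. The candidate functional is then the vector state at $ v $, namely
$$
\func{f}{S} \df \Inner{v}{\FUNC{S}{v}}_{\H}.
$$

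For existence, I would verify the defining equation by a direct computation: for any $ x \in \H $ and $ S \in \Comp{\H} $,
\begin{align*}
    \FUNC{P S P}{x}
& = \FUNC{\Ket{v} \Bra{v} S \Ket{v} \Bra{v}}{x} \\
& = \Inner{v}{x}_{\H} \cdot \FUNC{\Ket{v} \Bra{v}}{\FUNC{S}{v}} \\
& = \Inner{v}{x}_{\H} \cdot \Inner{v}{\FUNC{S}{v}}_{\H} \cdot v \\
& = \func{f}{S} \cdot \FUNC{P}{x}.
\end{align*}
Linearity of $ f $ is immediate, since $ S \mapsto \FUNC{S}{v} $ and $ \Inner{v}{\cdot}_{\H} $ are both linear.

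For uniqueness, suppose $ g $ is another linear functional on $ \Comp{\H} $ satisfying $ P S P = \func{g}{S} \cdot P $ for every $ S $. Subtracting the two defining equations yields $ \SqBr{\func{f}{S} - \func{g}{S}} \cdot P = 0 $ for all $ S \in \Comp{\H} $. Since $ \H $ is non-trivial and $ P $ is a rank-one projection, $ P \neq 0 $, and so $ \func{f}{S} = \func{g}{S} $.

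There is essentially no obstacle here: the whole argument reduces to the observation that, for a rank-one projection $ P = \Ket{v} \Bra{v} $, the map $ S \mapsto P S P $ factors through the scalar $ \Inner{v}{\FUNC{S}{v}}_{\H} $. The functional $ f $ is simply the pure vector state at $ v $; its role in subsequent material will presumably be to extract scalars from expressions like $ P S P $ when $ P $ arises from the $ \Comp{\H} $-valued inner product of a Hilbert $ \Comp{\H} $-module.
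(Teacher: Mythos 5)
Your proof is correct, and it takes a mildly different route from the paper's. You fix a unit vector $ v $ with $ P = \Ket{v}\Bra{v} $ and exhibit $ f $ explicitly as the vector state $ \func{f}{S} = \Inner{v}{\func{S}{v}}_{\H} $, verifying the identity $ P S P = \func{f}{S} \cdot P $ by direct computation and getting uniqueness from $ P \neq 0 $. The paper instead argues coordinate-freely: it restricts $ P S P $ to the one-dimensional range $ \K $ of $ P $, notes that any linear operator on $ \K $ is a unique scalar multiple of $ \Id_{\K} = P|_{\K} $, defines $ \func{f}{S} $ to be that scalar, extends the identity from $ \K $ to all of $ \H $ via $ P S P = P S P \circ P $, and then deduces linearity of $ f $ from the uniqueness of the scalar rather than from an explicit formula. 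Both arguments are elementary and complete; yours has the small added benefit of identifying $ f_{P} $ concretely as the pure state associated with $ v $ (with uniqueness automatically showing independence of the choice of $ v $), which is indeed how these functionals get used later, while the paper's version never needs to choose a representative vector.
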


\begin{proof}
Let $ \K $ denote the range of $ P $, which is a one-dimensional subspace of $ \H $. Given $ S \in \Comp{\H} $, the map $ \Br{P S P}|_{\K} $ is a linear operator on $ \K $, so it is a unique scalar multiple of $ \Id_{\K} = P|_{\K} $. Hence, there is a unique function $ f: \Comp{\H} \to \C $ such that $ \Br{P S P}|_{\K} = \func{f}{S} \cdot P|_{\K} $ for all $ S \in \Comp{\H} $. Then
$$
  \Func{P S P}{v}
= \Func{P S P}{\func{P}{v}}
= \func{f}{S} \cdot \func{P}{\func{P}{v}}
= \func{f}{S} \cdot \func{P}{v}
$$
for all $ S \in \Comp{\H} $ and $ v \in \H $, which means that $ P S P = \func{f}{S} \cdot P $. Finally, observe for all $ S,T \in \Comp{\H} $ and $ \lambda \in \C $ that
\begin{align*}
    \func{f}{S + \lambda \cdot T} \cdot P
& = P \Br{S + \lambda \cdot T} P \\
& = P S P + \lambda \cdot \Br{P T P} \\
& = \func{f}{S} \cdot P + \lambda \func{f}{T} \cdot P \\
& = \SqBr{\func{f}{S} + \lambda \func{f}{T}} \cdot P.
\end{align*}
As $ \func{f}{S + \lambda \cdot T} $ is the unique $ \kappa \in \C $ for which $ P \Br{S + \lambda \cdot T} P = \kappa \cdot P $, we get $ \func{f}{S + \lambda \cdot T} = \func{f}{S} + \lambda \func{f}{T} $. Therefore, $ f $ is the unique linear functional on $ \Comp{\H} $ such that $ P S P = \func{f}{S} \cdot P $ for all $ S \in \Comp{\H} $.
\end{proof}


Let $ \mathcal{P}_{1} $ denote the set of all rank-one projections on $ \H $. \autoref{The Linear Functional Associated To a Rank-One Projection} says that there exists a $ \mathcal{P}_{1} $-indexed family $ \Seq{f_{P}}{P \in \mathcal{P}_{1}} $ of linear functionals on $ \H $ such that
$$
\forall P \in \mathcal{P}_{1}, ~ \forall S \in \Comp{\H}: \qquad
P S P = \func{f_{P}}{S} \cdot P.
$$
These linear functionals play a pivotal role in the next result.



\begin{Thm} \label{The Hilbert Space Coming from a Rank-One Projection}
Let $ \X $ be a non-trivial Hilbert $ \Comp{\H} $-module, and $ P $ a rank-one projection on $ \H $. Then $ \X \bullet P $ is a non-trivial closed subspace of $ \X \bullet P $ that has the structure of a Hilbert space, whose inner product $ \Inner{\cdot}{\cdot}_{\X \bullet P} $ is given by
$$
\forall \zeta,\eta \in \X \bullet P: \qquad
\Inner{\zeta}{\eta}_{\X \bullet P} \df \func{f_{P}}{\Inner{\zeta}{\eta}_{\X}}.
$$
Furthermore, the norm on $ \X \bullet P $ induced by $ \Inner{\cdot}{\cdot}_{\X \bullet P} $ coincides with the restriction of $ \Norm{\cdot}_{\X} $ to $ \X \bullet P $.
\end{Thm}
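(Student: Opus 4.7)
The plan is to verify each piece by leveraging the structural identity $P S P = f_P(S) \cdot P$ applied to inner products.

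First I would handle the ambient structure: $\X \bullet P$ is a $\C$-linear subspace, and since $P^2 = P$ every element of $\X \bullet P$ is fixed by right multiplication by $P$, so $\X \bullet P = \{\zeta \in \X \mid \zeta \bullet P = \zeta\}$. Because the map $\zeta \mapsto \zeta \bullet P$ is continuous, this set is closed in $\X$. For non-triviality I would argue by contradiction: if $\X \bullet P = \{0\}$, then for any other rank-one projection $Q$ choose a unitary $U \in \Comp{\H}^{\sim}$ with $U P U^{\ast} = Q$; then $\X \bullet Q = (\X \bullet U) \bullet P U^{\ast} \subseteq (\X \bullet P) \bullet U^{\ast} = \{0\}$, hence $\X \bullet F = \{0\}$ for every finite-rank $F$ and therefore $\X \bullet \Comp{\H} = \{0\}$, contradicting the non-triviality of $\X$ as a Hilbert $ \Comp{\H} $-module (whose action is non-degenerate).

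Next comes the central computation. For $\zeta,\eta \in \X \bullet P$ I would use $\zeta = \zeta \bullet P$, $\eta = \eta \bullet P$, and the compatibility of the inner product with the right action to get
\[
\Inner{\zeta}{\eta}_{\X} = \Inner{\zeta \bullet P}{\eta \bullet P}_{\X} = P \Inner{\zeta}{\eta}_{\X} P = f_P\!\left(\Inner{\zeta}{\eta}_{\X}\right) \cdot P.
\]
This identity is the engine of the proof: it shows $\Inner{\cdot}{\cdot}_{\X \bullet P}$ is consistently scalar-valued, and from it the sesquilinearity and conjugate symmetry of $\Inner{\cdot}{\cdot}_{\X \bullet P}$ follow from those of $\Inner{\cdot}{\cdot}_{\X}$ together with $f_P(S^{\ast}) = \overline{f_P(S)}$ (which is obtained by taking adjoints in $P S^{\ast} P = \overline{f_P(S)} \cdot P$). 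For positivity, note $\Inner{\zeta}{\zeta}_{\X} \geq_{\Comp{\H}} 0$ and $P \geq_{\Comp{\H}} 0$, so the scalar $f_P(\Inner{\zeta}{\zeta}_{\X})$ making $\Inner{\zeta}{\zeta}_{\X} = f_P(\Inner{\zeta}{\zeta}_{\X}) \cdot P$ is non-negative; if it vanishes then $\Inner{\zeta}{\zeta}_{\X} = 0$, hence $\zeta = 0$.

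Finally, to match the norms I would take norms of both sides of the displayed identity: since $\Norm{P}_{\Comp{\H}} = 1$,
\[
\Norm{\zeta}_{\X}^{2} = \Norm{\Inner{\zeta}{\zeta}_{\X}}_{\Comp{\H}} = \Abs{f_P(\Inner{\zeta}{\zeta}_{\X})} \cdot \Norm{P}_{\Comp{\H}} = f_P(\Inner{\zeta}{\zeta}_{\X}) = \Inner{\zeta}{\zeta}_{\X \bullet P},
\]
so the inner-product norm on $\X \bullet P$ agrees with the restriction of $\Norm{\cdot}_{\X}$, and completeness of $\X \bullet P$ under $\Inner{\cdot}{\cdot}_{\X \bullet P}$ is immediate from the closedness established at the outset.

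I expect the main obstacle to be the non-triviality of $\X \bullet P$: unlike the other points, which are essentially mechanical consequences of the identity $P S P = f_P(S) \cdot P$, this requires invoking both the density of finite-rank operators in $\Comp{\H}$ and the unitary transitivity of the set of rank-one projections in $\Comp{\H}^{\sim}$, together with the non-degeneracy hypothesis built into the Hilbert $\Comp{\H}$-module structure of $\X$.
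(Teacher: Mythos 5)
Your proposal is correct, and for most of the statement it runs along the same lines as the paper's proof: the identity $\Inner{\zeta}{\eta}_{\X} = P \Inner{\zeta}{\eta}_{\X} P = \func{f_{P}}{\Inner{\zeta}{\eta}_{\X}} \cdot P$ for $\zeta, \eta \in \X \bullet P$ is exactly the paper's engine for sesquilinearity, positivity (the paper deduces non-negativity of the scalar from $1 \in \func{\sigma_{\Comp{\H}}}{P}$, you from positivity of a scalar multiple of the nonzero positive element $P$ --- the same observation), definiteness, the norm identity $\Norm{\zeta}_{\X \bullet P} = \Norm{\zeta}_{\X}$, and completeness via closedness of $\X \bullet P$. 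The genuinely different step is non-triviality. The paper uses simplicity of $\Comp{\H}$: since $\X$ is non-trivial, the closed span of $\Inner{\X}{\X}_{\X}$ is a nonzero closed ideal, hence all of $\Comp{\H}$, whence $P \in P \Comp{\H} P \subseteq \Cl{\Span{\Inner{\X \bullet P}{\X \bullet P}_{\X}}}{\Comp{\H}}$, so $\X \bullet P \neq \SSet{0_{\X}}$. You instead propagate the hypothetical vanishing of $\X \bullet P$ to all of $\Comp{\H}$ via unitary transitivity of rank-one projections and density of finite-rank operators, and then contradict non-degeneracy of the right action. Both arguments are valid; the paper's is a compact ideal-theoretic argument, while yours trades simplicity for concrete structure and leans on two standard facts that deserve explicit mention: that $\zeta = \lim_{\lambda} \zeta \bullet e_{\lambda}$ in any Hilbert $C^{\ast}$-module (so $\X \bullet \Comp{\H} = \SSet{0_{\X}}$ forces $\X = \SSet{0_{\X}}$), and that the right action extends to $\Comp{\H}^{\sim}$ so that $\zeta \bullet U$ is meaningful --- the latter can be avoided altogether by factoring $U P U^{\ast} = \Br{U P} P \Br{P U^{\ast}}$ with $U P, P U^{\ast} \in \Comp{\H}$, which keeps the computation entirely inside $\Comp{\H}$.
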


\begin{proof}
It is clear that $ \X \bullet P $ is a subspace of $ \X $. As $ \X $ is non-trivial, $ \Cl{\Span{\Inner{\X}{\X}_{\X}}}{\Comp{\H}} $ is a non-trivial ideal of $ \Comp{\H} $, but as $ \Comp{\H} $ is a simple $ C^{\ast} $-algebra, we have $ \Cl{\Span{\Inner{\X}{\X}_{\X}}}{\Comp{\H}} = \Comp{\H} $. Hence,
\begin{align*}
              P
& \in         P \Comp{\H} P \\
& =           P \Cl{\Span{\Inner{\X}{\X}_{\X}}}{\Comp{\H}} P \\
& \subseteq   \Cl{P \Span{\Inner{\X}{\X}_{\X}} P}{\Comp{\H}} \\
& =           \Cl{\Span{\Inner{\X \bullet P}{\X \bullet P}_{\X}}}{\Comp{\H}},
\end{align*}
which implies that $ \X \bullet P $ is a non-trivial subspace of $ \X $.

To see that $ \X \bullet P $ is a closed subspace of $ \X $, suppose that $ \Seq{\zeta_{n}}{n \in \N} $ is a sequence in $ \X \bullet P $ that converges to some $ \eta \in \X $. Then because $ \zeta_{n} \bullet P = \zeta_{n} $ for all $ n \in \N $, we have
$$
  \eta
= \lim_{n \to \infty} \zeta_{n}
= \lim_{n \to \infty} \zeta_{n} \bullet P
= \Br{\lim_{n \to \infty} \zeta_{n}} \bullet P
= \eta \bullet P.
$$
Hence, $ \eta \in \X \bullet P $, which proves that $ \X \bullet P $ is a closed subspace of $ \X $.

Clearly, $ \Inner{\cdot}{\cdot}_{\X \bullet P} $ is a sesquilinear form on $ \X \bullet P $, so it remains to see that it is positive definite and complete. Let $ \zeta \in \X \bullet P $. Then $ \Inner{\zeta}{\zeta}_{\X} $ is positive in $ \Comp{\H} $, which means that
$$
  \func{f_{P}}{\Inner{\zeta}{\zeta}_{\X}} \cdot P
= P \Inner{\zeta}{\zeta}_{\X} P
= P \Inner{\zeta}{\zeta}_{\X} P^{\ast}
$$
is positive in $ \Comp{\H} $ as well. As $ \Id_{\H} - P $ is not invertible in $ \Comp{\H}^{\sim} $, we have $ 1 \in \func{\sigma_{\Comp{\H}}}{P} $. Hence,
$$
          \func{f_{P}}{\Inner{\zeta}{\zeta}_{\X}}
\in       \func{\sigma_{\Comp{\H}}}{\func{f_{P}}{\Inner{\zeta}{\zeta}_{\X}} \cdot P}
\subseteq \R_{\geq 0},
$$
which demonstrates that $ \Inner{\cdot}{\cdot}_{\X \bullet P} $ is at least positive semidefinite. Next, observe for all $ \zeta,\eta \in \X \bullet P $ that
\begin{align*}
    \Abs{\Inner{\zeta}{\eta}_{\X \bullet P}}
& = \Abs{\func{f_{P}}{\Inner{\zeta}{\eta}_{\X}}} \\
& = \Norm{\func{f_{P}}{\Inner{\zeta}{\eta}_{\X}} \cdot P}_{\Comp{\H}} \qquad \Br{\text{As $ \Norm{P}_{\Comp{\H}} = 1 $.}}  \\
& = \Norm{P \Inner{\zeta}{\eta}_{\X} P}_{\Comp{\H}} \\
& = \Norm{\Inner{\zeta \bullet P}{\eta \bullet P}_{\X}}_{\Comp{\H}} \\
& = \Norm{\Inner{\zeta}{\eta}_{\X}}_{\Comp{\H}}. \qquad \Br{\text{As $ \zeta \bullet P = \zeta $ and $ \eta \bullet P = \eta $.}}
\end{align*}
Consequently, if $ \Inner{\zeta}{\zeta}_{\X \bullet P} = 0 $ for some $ \zeta \in \X \bullet P $, then $ \Inner{\zeta}{\zeta}_{\X} = 0_{\Comp{\H}} $, which yields $ \zeta = 0_{\X} = 0_{\X \bullet P} $. This proves that $ \Inner{\cdot}{\cdot}_{\X \bullet P} $ is positive definite. Incidentally, this also proves that $ \Norm{\zeta}_{\X \bullet P} = \Norm{\zeta}_{\X} $ for all $ \zeta \in \X \bullet P $. As $ \X \bullet P $ is a closed subspace of $ \X $, it is a Banach space with respect to the restriction of $ \Norm{\cdot}_{\X} $ to $ \X \bullet P $, and is thus a Banach space with respect to $ \Norm{\cdot}_{\X \bullet P} $. Therefore, $ \X \bullet P $ is a Hilbert space whose inner product is given by $ \Inner{\cdot}{\cdot}_{\X \bullet P} $, and the Hilbert-space norm on $ \X \bullet P $ is precisely the restriction of $ \Norm{\cdot}_{\X} $ to $ \X \bullet P $.
\end{proof}



\begin{Thm} \label{The Reduction of a Submodule of a Hilbert K(H)-Module by a Rank-One Projection Generates the Submodule}
Let $ \X $ be a Hilbert $ \Comp{\H} $-module, $ \Y $ a $ \Comp{\H} $-submodule of $ \X $ that is not necessarily closed, and $ P $ a rank-one projection on $ \H $. Then the closed $ \Comp{\H} $-linear span of $ \Y \bullet P $ in $ \X $ is the closure $ \Cl{\Y}{\X} $ of $ \Y $ in $ \X $.
\end{Thm}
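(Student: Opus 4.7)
The plan is to prove the two inclusions separately. The easy direction is that the closed $\Comp{\H}$-linear span of $\Y \bullet P$ lies in $\Cl{\Y}{\X}$: since $\Y \bullet P \subseteq \Y$ and $\Cl{\Y}{\X}$ is a closed $\Comp{\H}$-submodule of $\X$, any closed $\Comp{\H}$-linear combination of elements of $\Y \bullet P$ still lives in $\Cl{\Y}{\X}$. The nontrivial containment is the reverse one, which is what I would spend my effort on.

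For the reverse inclusion, fix $\zeta \in \Y$; I want to realize $\zeta$ as a norm-limit of $\Comp{\H}$-linear combinations of elements of $\Y \bullet P$. The key tool is the simplicity of $\Comp{\H}$: the two-sided ideal generated by $P$ is dense in $\Comp{\H}$, which means that the identity operator can be approximated by sums $\sum_i L_i P M_i$. To make this concrete and operational, I would pick a unit vector $u$ in the range of $P$, extend $\SSet{u}$ to an orthonormal basis $\SSet{u_i}{i \in I}$ of $\H$, set $P_i \df \Ket{u_i}\Bra{u_i}$ and $U_i \df \Ket{u_i}\Bra{u}$, and observe that $U_i^{\ast} U_i = P$, $U_i U_i^{\ast} = P_i$, $U_i P = U_i$, and $P U_i^{\ast} = U_i^{\ast}$. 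The net $\Seq{\sum_{i \in F} P_i}{F \in \Fin{I}}$ is then an approximate identity for $\Comp{\H}$.

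The central computation is: for each $i \in I$, $\zeta \bullet U_i \in \Y$ because $\Y$ is a submodule, and $\Br{\zeta \bullet U_i} \bullet P = \zeta \bullet \Br{U_i P} = \zeta \bullet U_i$, so in fact $\zeta \bullet U_i \in \Y \bullet P$. Consequently,
$$
  \zeta \bullet P_i
= \zeta \bullet \Br{U_i U_i^{\ast}}
= \Br{\zeta \bullet U_i} \bullet U_i^{\ast}
$$
belongs to the $\Comp{\H}$-linear span of $\Y \bullet P$, and hence so does each finite sum $\zeta \bullet \sum_{i \in F} P_i$.

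Finally I would invoke the standard Hilbert-module fact that $\zeta \bullet e_\lambda \to \zeta$ for any approximate identity $\Seq{e_\lambda}{\lambda \in \Lambda}$ of $\Comp{\H}$ (which follows from expanding $\Norm{\zeta - \zeta \bullet e_\lambda}_{\X}^{2}$ via the $\Comp{\H}$-valued inner product and using $e_\lambda \Inner{\zeta}{\zeta}_{\X} e_\lambda \to \Inner{\zeta}{\zeta}_{\X}$ in $\Comp{\H}$). Applied to the approximate identity $\sum_{i \in F} P_i$, this gives $\zeta$ as a norm-limit of elements in the $\Comp{\H}$-linear span of $\Y \bullet P$, so $\zeta$ lies in the closed $\Comp{\H}$-linear span of $\Y \bullet P$. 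As $\zeta \in \Y$ was arbitrary and the closed span is norm-closed, $\Cl{\Y}{\X}$ is contained in it, completing the proof. The main subtlety is just choosing the partial isometries $U_i$ so that the decomposition $P_i = U_i U_i^{\ast}$ with $U_i P = U_i$ is visible; the rest is approximate-identity bookkeeping.
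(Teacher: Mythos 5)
Your proof is correct and follows essentially the same route as the paper's: the identity $P_{i} = \Ket{u_{i}} \Bra{u} P \Ket{u} \Bra{u_{i}}$ (with $u$ a unit vector in $\Range{P}$) is exactly the paper's trick of factoring rank-one pieces through $P$, and you finish, as the paper does, by letting an approximate identity of $\Comp{\H}$ act on $\zeta$. The only difference is organizational --- you work directly with the specific approximate identity $\Seq{\sum_{i \in F} P_{i}}{F \in \Fin{I}}$ of basis projections instead of first treating arbitrary finite-rank operators and then an arbitrary approximate identity --- which slightly shortens the bookkeeping but is not a genuinely different argument.
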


\begin{proof}
As $ \Y $ is a $ \Comp{\H} $-submodule of $ \X $, we can see that $ \Y \bullet P \subseteq \Y $ and that the $ \Comp{\H} $-linear span of $ \Y \bullet P $ is contained in $ \Y $. Hence, the closed $ \Comp{\H} $-linear span of $ \Y \bullet P $ in $ \X $ is contained in $ \Cl{\Y}{\X} $. It thus remains to establish the reverse inclusion.

Let $ \zeta \in \Y $, and let $ F $ be a rank-$ n $ operator on $ \H $. We claim that $ \zeta \bullet F $ belongs to the $ \Comp{\H} $-linear span of $ \Y \bullet P $. Let $ \Seq{v_{i}}{i \in \SqBr{n}} $ be an orthonormal basis of $ \Range{F} $, and $ v $ a unit vector in $ \Range{P} $. Then
\begin{align*}
    F
& = \Proj{\H}{\Range{P}} F \\
& = \Br{\sum_{i = 1}^{n} \Ket{v_{i}} \Bra{v_{i}}} F \\
& = \Br{\sum_{i = 1}^{n} \Ket{v_{i}} \Bra{v} \Ket{v} \Bra{v} \Ket{v} \Bra{v_{i}}} F \\
& = \Br{\sum_{i = 1}^{n} \Ket{v_{i}} \Bra{v} P \Ket{v} \Bra{v_{i}}} F \\
& = \sum_{i = 1}^{n} \Ket{v_{i}} \Bra{v} P \Ket{v} \Bra{v_{i}} F.
\end{align*}
Hence,
\begin{align*}
    \zeta \bullet F
& = \zeta \bullet \Br{\sum_{i = 1}^{n} \Ket{v_{i}} \Bra{v} P \Ket{v} \Bra{v_{i}} F} \\
& = \sum_{i = 1}^{n} \zeta \bullet \Ket{v_{i}} \Bra{v} P \Ket{v} \Bra{v_{i}} F \\
& = \sum_{i = 1}^{n} \SqBr{\Br{\zeta \bullet \Ket{v_{i}} \Bra{v}} \bullet P} \bullet \Ket{v} \Bra{v_{i}} F.
\end{align*}
As $ \Ket{v_{i}} \Bra{v} $ and $ \Ket{v} \Bra{v_{i}} F $ are finite-rank operators on $ \H $ for all $ i \in \SqBr{n} $, they belong to $ \Comp{\H} $. It follows that $ \zeta \bullet \Ket{v_{i}} \Bra{v} \in \Y $ for all $ i \in \SqBr{n} $ because $ \Y $ is a $ \Comp{\H} $-submodule of $ \X $, so $ \zeta \bullet F $ belongs to the $ \Comp{\H} $-linear span of $ \Y \bullet P $, as claimed.

As $ F $ is an arbitrary finite-rank operator on $ \H $, we see that $ \zeta \bullet T $ is in the closed $ \Comp{\H} $-linear span of $ \Y \bullet P $ in $ \X $ for any $ T \in \Comp{\H} $, as $ T $ is the limit in $ \Comp{\H} $ of some sequence of finite-rank operators on $ \H $.

Let $ \Seq{E_{i}}{i \in I} $ be an approximate identity in $ \Comp{\H} $. By the argument above, $ \zeta \bullet E_{i} $ is in the closed $ \Comp{\H} $-linear span of $ \Y \bullet P $ in $ \X $ for all $ i \in I $. As $ \Seq{\zeta \bullet E_{i}}{i \in I} $ converges to $ \zeta $, it follows that $ \zeta $ is in the closed $ \Comp{\H} $-linear span of $ \Y \bullet P $ in $ \X $. As $ \zeta \in \Y $ is arbitrary, $ \Y $ is thus contained in the closed $ \Comp{\H} $-linear span of $ \Y \bullet P $ in $ \X $.

Finally, by the definition of closure, $ \Cl{\Y}{\X} $ is contained in the closed $ \Comp{\H} $-linear span of $ \Y \bullet P $ in $ \X $.
\end{proof}


The next theorem is the main result of \cite{BaGu2}, and it explains why Hilbert $ \Comp{\H} $-modules behave like Hilbert spaces. It says that the $ C^{\ast} $-algebra of adjointable operators on a Hilbert $ \Comp{\H} $-module $ \X $ is isomorphic to the $ C^{\ast} $-algebra of bounded operators on the Hilbert space $ \X \bullet P $, for any rank-one projection $ P $ on $ \H $. At first sight, this seems rather astonishing because $ \X \bullet P $ is generally a much smaller space that $ \X $ itself, and it would be hard to imagine why it should have much to say about $ \X $. However, having seen in \autoref{The Reduction of a Submodule of a Hilbert K(H)-Module by a Rank-One Projection Generates the Submodule} that $ \X \bullet P $ generates a dense submodule of $ \X $, one can start to understand why the theorem holds.

The proof given in \cite{BaGu2} relies on concepts from an earlier paper \cite{BaGu1}, but the proof that we give here is very direct and only depends on the previous definitions and results of this section.



\begin{Thm} \label{The Bakic-Guljas Theorem}
Let $ \X $ be a non-trivial Hilbert $ \Comp{\H} $-module, and $ P $ a rank-one projection on $ \H $. Then $ \X \bullet P $ is an invariant subspace for each $ T \in \Adj{\X} $, and the map
$$
\Map{\Adj{\X}}{\Bdd{\X \bullet P}}{T}{T|_{\X \bullet P}}
$$
is a $ C^{\ast} $-isomorphism, where $ \X \bullet P $ is viewed as a Hilbert space. Furthermore, the restriction of this map to $ \Comp{\X} $ yields a $ C^{\ast} $-isomorphism from $ \Comp{\X} $ to $ \Comp{\X \bullet P} $.
\end{Thm}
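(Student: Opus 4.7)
The plan is to split the proof into three stages: check that the restriction map is a well-defined injective $*$-homomorphism; construct its inverse to obtain surjectivity onto $\Bdd{\X \bullet P}$; and then specialize to the compact subalgebras. Invariance of $\X \bullet P$ under any $T \in \Adj{\X}$ is immediate from $\Comp{\H}$-linearity of adjointable operators: $\func{T}{\zeta} = \func{T}{\zeta \bullet P} = \func{T}{\zeta} \bullet P$ for $\zeta \in \X \bullet P$. The restriction is manifestly linear and multiplicative, and it preserves the adjoint operation because $\Inner{\func{T}{\zeta}}{\zeta'}_{\X \bullet P} = \func{f_{P}}{\Inner{\func{T}{\zeta}}{\zeta'}_{\X}} = \func{f_{P}}{\Inner{\zeta}{\func{T^{\ast}}{\zeta'}}_{\X}} = \Inner{\zeta}{\func{T^{\ast}}{\zeta'}}_{\X \bullet P}$ for all $\zeta, \zeta' \in \X \bullet P$. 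Injectivity then follows from \autoref{The Reduction of a Submodule of a Hilbert K(H)-Module by a Rank-One Projection Generates the Submodule} applied with $\Y = \X$: if $T|_{\X \bullet P} = 0$, then $\Comp{\H}$-linearity forces $T$ to annihilate the $\Comp{\H}$-linear span of $\X \bullet P$, whose closure is all of $\X$, so $T = 0$.

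The hard part is surjectivity onto $\Bdd{\X \bullet P}$. My plan is to resolve it via a concrete identification $\X \cong \Comp{\H, \X \bullet P}$. Fix a unit vector $v \in \Range{P}$, so that $P = \Ket{v}\Bra{v}$, and for each $\xi \in \X$ define $\Phi_{\xi} \in \Bdd{\H, \X \bullet P}$ by $\func{\Phi_{\xi}}{w} \df \xi \bullet \Ket{w}\Bra{v}$; this lands in $\X \bullet P$ because $\Ket{w}\Bra{v} P = \Ket{w}\Bra{v}$. A direct calculation yields the key identity $\Phi_{\xi}^{\ast} \Phi_{\xi'} = \Inner{\xi}{\xi'}_{\X}$ as an operator on $\H$ (using $\Comp{\H} \subseteq \Bdd{\H}$), so $\xi \mapsto \Phi_{\xi}$ is isometric. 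Approximating $\xi$ by $\xi \bullet E_{n}$ for a finite-rank approximate identity $\Seq{E_{n}}{n \in \N}$ in $\Comp{\H}$ shows each $\Phi_{\xi}$ is compact, while the calculation $\Phi_{\eta \bullet \Ket{v}\Bra{u}} = \Ket{\eta}\Bra{u}$ (for $\eta \in \X \bullet P$ and $u \in \H$, viewing the right-hand side as a rank-one in $\Bdd{\H, \X \bullet P}$) puts every finite-rank operator between $\H$ and $\X \bullet P$ in the range. Together these exhibit an isometric linear bijection $\Phi : \X \to \Comp{\H, \X \bullet P}$.

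Given $B \in \Bdd{\X \bullet P}$, I define $\func{T}{\xi}$ to be the unique element of $\X$ satisfying $\Phi_{\func{T}{\xi}} = B \circ \Phi_{\xi}$; this is well-posed because $B \circ \Phi_{\xi}$ remains compact. Adjointability (with adjoint built from $B^{\ast}$ by the same rule) is clear from $\Inner{\func{T}{\xi}}{\xi'}_{\X} = \Phi_{\func{T}{\xi}}^{\ast} \Phi_{\xi'} = \Phi_{\xi}^{\ast} B^{\ast} \Phi_{\xi'} = \Inner{\xi}{\func{T^{\ast}}{\xi'}}_{\X}$. Evaluating at $v$ and using $\func{\Phi_{\zeta}}{v} = \zeta$ for $\zeta \in \X \bullet P$ gives $\func{T}{\zeta} \bullet P = \func{B}{\zeta}$, and combined with invariance this forces $\func{T}{\zeta} = \func{B}{\zeta}$, so $T|_{\X \bullet P} = B$. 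Two-sided norm estimates then yield $\Norm{T}_{\Adj{\X}} = \Norm{B}_{\Bdd{\X \bullet P}}$, so the restriction map is indeed a $C^{\ast}$-isomorphism.

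For the statement about compact operators, the key identity $\Ket{\zeta}\Bra{\eta}|_{\X \bullet P} = \Phi_{\zeta} \Phi_{\eta}^{\ast}$ for $\zeta, \eta \in \X$ — a composition of two compact operators, hence compact — shows that restriction carries $\Comp{\X}$ into $\Comp{\X \bullet P}$. Conversely, for $\zeta', \eta' \in \X \bullet P$ the module rank-one $\Ket{\zeta'}\Bra{\eta'} \in \Comp{\X}$ restricts to the Hilbert-space rank-one $\Ket{\zeta'}\Bra{\eta'}$ on $\X \bullet P$. Since the image is a $C^{\ast}$-subalgebra of $\Comp{\X \bullet P}$ containing every rank-one operator, it must equal $\Comp{\X \bullet P}$.
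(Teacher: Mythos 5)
Your proof is correct, but the surjectivity argument --- the heart of the theorem --- takes a genuinely different route from the paper's. The paper fixes an orthonormal basis $ \Seq{\varepsilon_{i}}{i \in I} $ of the Hilbert space $ \X \bullet P $, forms the finite sums $ T_{J} = \sum_{i \in J} \Theta_{\func{L}{\varepsilon_{i}},\varepsilon_{i}} \in \Comp{\X} $, checks that $ T_{J}|_{\X \bullet P} = \sum_{i \in J} \Ket{\func{L}{\varepsilon_{i}}} \Bra{\varepsilon_{i}} $ and $ \Norm{T_{J}}_{\Adj{\X}} \leq \Norm{L}_{\Bdd{\X \bullet P}} $, and then obtains the preimage $ T \in \Adj{\X} $ as a strong limit of the net $ \Seq{T_{J}}{J \in \Fin{I}} $ together with the net of its adjoints; the compact-operator statement is then deduced by showing that the image of $ \Comp{\X} $ is a non-trivial ideal of the simple algebra $ \Comp{\X \bullet P} $. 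You instead construct the structural identification $ \Phi: \X \to \Comp{\H,\X \bullet P} $, $ \func{\Phi_{\xi}}{w} = \xi \bullet \Ket{w} \Bra{v} $, verify the key identity $ \Phi_{\xi}^{\ast} \Phi_{\xi'} = \Inner{\xi}{\xi'}_{\X} $, and define the preimage of $ B \in \Bdd{\X \bullet P} $ directly by $ \Phi_{\func{T}{\xi}} = B \circ \Phi_{\xi} $; this avoids the orthonormal basis and all the uniform-boundedness and strong-convergence bookkeeping, produces an explicit inverse to the restriction map, and makes the compact case a short consequence of $ \Theta_{\zeta,\eta}|_{\X \bullet P} = \Phi_{\zeta} \Phi_{\eta}^{\ast} $ together with the rank-one correspondence, so you never need simplicity of $ \Comp{\X \bullet P} $ (only that the image is a closed $ \ast $-subalgebra containing the rank-one operators). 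The price is that the identification itself must be justified: that each $ \Phi_{\xi} $ is compact (via $ \Phi_{\xi \bullet a} $ being finite-rank for finite-rank $ a $ and $ \xi \bullet E_{\lambda} \to \xi $ --- note the finite-rank approximate identity is in general a net, not a sequence, when $ \H $ is nonseparable), that the isometric range is closed and contains every rank-one $ \Ket{\eta} \Bra{u} $ with $ \eta \in \X \bullet P $ and $ u \in \H $, and that an everywhere-defined map admitting an adjoint is automatically a bounded $ \Comp{\H} $-linear map and hence lies in $ \Adj{\X} $; these steps are routine and your sketch indicates them correctly. The invariance, $ \ast $-homomorphism, and injectivity parts coincide with the paper's, both resting on \autoref{The Reduction of a Submodule of a Hilbert K(H)-Module by a Rank-One Projection Generates the Submodule}.
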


\begin{proof}
For each $ T \in \Adj{\X} $, the $ \Comp{\H} $-linearity of $ T $ implies that $ \X \bullet P $ is an invariant subspace of $ T $.

It is easy to check that $ \Map{\Adj{\X}}{\Bdd{\X \bullet P}}{T}{T|_{\X \bullet P}} $ is at least a $ C^{\ast} $-homomorphism. To see that it is injective, let $ S,T \in \Adj{\X} $ satisfy $ S|_{\X \bullet P} = T|_{\X \bullet P} $. Then by the $ \Comp{\H} $-linearity and continuity of both $ S $ and $ T $, they must agree on the closed $ \Comp{\H} $-linear span of $ \X \bullet P $, which is equal to $ \X $ by \autoref{The Reduction of a Submodule of a Hilbert K(H)-Module by a Rank-One Projection Generates the Submodule}. Hence, $ S = T $.

Surjectivity is trickier to prove. Let $ L \in \Bdd{\X \bullet P} $, and let $ \Seq{\varepsilon_{i}}{i \in I} $ be an orthonormal basis of $ \X \bullet P $. For each $ J \in \Fin{I} $, let $ \ds T_{J} \df \sum_{i \in J} \Theta_{\func{L}{\varepsilon_{i}},\varepsilon_{i}} \in \Comp{\X} $ and $ \ds L_{J} \df \sum_{i \in J} \Ket{\func{L}{\varepsilon_{i}}} \Bra{\varepsilon_{i}} \in \Comp{\X \bullet P} $; then for all $ \zeta \in \X \bullet P $,
\begin{align*}
    \func{T_{J}}{\zeta}
& = \sum_{i \in J} \func{\Theta_{\func{L}{\varepsilon_{i}},\varepsilon_{i}}}{\zeta} \\
& = \sum_{i \in J} \func{L}{\varepsilon_{i}} \bullet \Inner{\varepsilon_{i}}{\zeta}_{\X} \\
& = \sum_{i \in J} \func{L}{\varepsilon_{i}} \bullet \Br{P \Inner{\varepsilon_{i}}{\zeta}_{\X} P} \\
& = \sum_{i \in J} \func{L}{\varepsilon_{i}} \bullet \Br{\Inner{\varepsilon_{i}}{\zeta}_{\X \bullet P} \cdot P} \\
& = \sum_{i \in J} \Inner{\varepsilon_{i}}{\zeta}_{\X \bullet P} \cdot \func{L}{\varepsilon_{i}} \\
& = \sum_{i \in J} \func{\Ket{\func{L}{\varepsilon_{i}}} \Bra{\varepsilon_{i}}}{\zeta} \\
& = \func{L_{J}}{\zeta},
\end{align*}
which implies that $ T_{J}|_{\X \bullet P} = L_{J} $. Next, for all $ \zeta \in \X \bullet P $,
$$
  \func{L}{\zeta}
= \func{L}{\sum_{i \in I} \Inner{\varepsilon_{i}}{\zeta}_{\X \bullet P} \cdot \varepsilon_{i}}
= \sum_{i \in I} \Inner{\varepsilon_{i}}{\zeta}_{\X \bullet P} \cdot \func{L}{\varepsilon_{i}}
= \sum_{i \in I} \func{\Ket{\func{L}{\varepsilon_{i}}} \Bra{\varepsilon_{i}}}{\zeta},
$$
so $ \Seq{L_{J}}{J \in \Fin{I}} $ is a net (partially ordered by $ \subseteq $) that strongly converges to $ L $. Also, for all $ J \in \Fin{I} $,
\begin{align*}
       \Norm{L_{J}}_{\Bdd{\X \bullet P}}
& =    \func{\sup}{\Set{\Norm{\func{L_{J}}{\zeta}}_{\X \bullet P}}{\zeta \in \X \bullet P ~ \text{and} ~ \Norm{\zeta}_{\X \bullet P} \leq 1}}
       \\
& =    \func{\sup}
            {
            \Set{\Norm{\func{L_{J}}{\zeta}}_{\X \bullet P}}
                {\zeta \in \Span{\SSet{\varepsilon_{i}}_{i \in J}} ~ \text{and} ~ \Norm{\zeta}_{\X \bullet P} \leq 1}
            } \\
& =    \func{\sup}
            {
            \Set{\Norm{\func{L}{\zeta}}_{\X \bullet P}}
                {\zeta \in \Span{\SSet{\varepsilon_{i}}_{i \in J}} ~ \text{and} ~ \Norm{\zeta}_{\X \bullet P} \leq 1}
            } \\
& \leq \func{\sup}{\Set{\Norm{\func{L}{\zeta}}_{\X \bullet P}}{\zeta \in \X \bullet P ~ \text{and} ~ \Norm{\zeta}_{\X \bullet P} \leq 1}} \\
& =    \Norm{L}_{\Bdd{\X \bullet P}},
\end{align*}
which gives, by the first part, $ \Norm{T_{J}}_{\Adj{\X}} = \Norm{L_{J}}_{\Bdd{\X \bullet P}} \leq \Norm{L}_{\Bdd{\X \bullet P}} $. Notice now that $ \Seq{T_{J}}{J \in \Fin{I}} $ is a norm-bounded net in $ \Adj{\X} $ that strongly converges on the $ \Comp{\H} $-linear span of $ \X \bullet P $, which is dense in $ \X $. By an $ \dfrac{\epsilon}{3} $-argument, $ \Seq{T_{J}}{J \in \Fin{I}} $ strongly converges everywhere to some $ T \in \Bdd{X} $. Likewise, $ \Seq{T_{J}^{\ast}}{J \in \Fin{I}} $ strongly converges everywhere to some $ S \in \Bdd{\X} $. As $ \Inner{\func{T_{J}}{\zeta}}{\eta}_{\X} = \Inner{\zeta}{\func{T_{J}^{\ast}}{\eta}}_{\X} $ for all $ \zeta,\eta \in \X $ and $ J \in \Fin{I} $, taking limits yields $ \Inner{\func{T}{\zeta}}{\eta}_{\X} = \Inner{\zeta}{\func{S}{\eta}}_{\X} $. Therefore, $ T \in \Adj{\X} $ and $ T|_{\X \bullet P} = L $, which finishes our proof that the restriction map is a $ C^{\ast} $-isomorphism from $ \Adj{\X} $ to $ \Bdd{\X \bullet P} $.

For the final part of the proof, observe for all $ \zeta,\eta \in \X \bullet P $, $ S,T \in \Comp{\H} $, and $ \xi \in \X \bullet P $ that
\begin{align*}
    \func{\Theta_{\zeta \bullet S,\eta \bullet T}}{\xi}
& = \Br{\zeta \bullet S} \bullet \Inner{\eta \bullet T}{\xi}_{\X} \\
& = \Br{\zeta \bullet S} \bullet \Inner{\eta \bullet P T}{\xi}_{\X} \\
& = \Br{\zeta \bullet S} \bullet \Br{T^{\ast} P \Inner{\eta}{\xi}_{\X}} \\
& = \Br{\zeta \bullet S T^{\ast} P} \bullet \Inner{\eta}{\xi}_{\X} \\
& = \Br{\zeta \bullet S T^{\ast} P} \bullet \Br{P \Inner{\eta}{\xi}_{\X} P} \\
& = \Br{\zeta \bullet S T^{\ast} P} \bullet \Br{\Inner{\eta}{\xi}_{\X \bullet P} \cdot P} \\
& = \Inner{\eta}{\xi}_{\X \bullet P} \cdot \Br{\zeta \bullet S T^{\ast} P} \\
& = \func{\Ket{\zeta \bullet S T^{\ast} P} \Bra{\eta}}{\xi},
\end{align*}
which means that $ \Theta_{\zeta \bullet S,\eta \bullet T}|_{\X \bullet P} \in \Comp{\X \bullet P} $. Let $ \zeta,\eta \in \X $. Then by \autoref{The Reduction of a Submodule of a Hilbert K(H)-Module by a Rank-One Projection Generates the Submodule}, we can find
$$
\zeta_{1},\ldots,\zeta_{m},\eta_{1},\ldots,\eta_{n} \in \X \bullet P
\quad \text{and} \quad
S_{1},\ldots,S_{m},T_{1},\ldots,T_{n} \in \Comp{\H}
$$
so that $ \ds \sum_{i = 1}^{m} \zeta_{i} \bullet S_{i} $ and $ \ds \sum_{j = 1}^{n} \eta_{j} \bullet T_{j} $ are arbitrarily close to $ \zeta $ and $ \eta $, respectively, which ensures that
$$
  \Theta_{\sum_{i = 1}^{m} \zeta_{i} \bullet S_{i},\sum_{j = 1}^{n} \eta_{j} \bullet T_{j}}
= \sum_{i = 1}^{m} \sum_{j = 1}^{n} \Theta_{\zeta_{i} \bullet S_{i},\eta_{j} \bullet T_{j}}
$$
is arbitrarily close to $ \Theta_{\zeta,\eta} $ in $ \Comp{\X} $. Hence, by the continuity of the restriction $ C^{\ast} $-isomorphism,
$$
  \sum_{i = 1}^{m} \sum_{j = 1}^{n} \Theta_{\zeta_{i} \bullet S_{i},\eta_{j} \bullet T_{j}}|_{\X \bullet P}
= \sum_{i = 1}^{m} \sum_{j = 1}^{n} \Ket{\zeta_{i} \bullet S_{i} T_{j}^{\ast} P} \Bra{\eta_{j}}
$$
is arbitrarily close to $ \Theta_{\zeta,\eta}|_{\X \bullet P} $, which says that $ \Theta_{\zeta,\eta}|_{\X \bullet P} \in \Comp{\X \bullet P} $. Therefore, the image of $ \Comp{\X} $ under the restriction $ C^{\ast} $-isomorphism is a non-trivial ideal in $ \Comp{\X \bullet P} $. As $ \Comp{\X \bullet P} $ is a simple $ C^{\ast} $-algebra, this image is precisely $ \Comp{\X \bullet P} $. The proof is now complete.
\end{proof}


Using \autoref{The Bakic-Guljas Theorem}, we can show that every closed submodule of a Hilbert $ \Comp{\H} $-module has an orthogonal complement. The complementability of Hilbert $ \Comp{\H} $-modules has been known for a long while (\cite{Mag}), but \autoref{The Bakic-Guljas Theorem} appears to provide an expedient proof.



\begin{Thm} \label{The Complementability of Hilbert K(H)-Modules}
Let $ \Y $ be a closed submodule of a Hilbert $ \Comp{\H} $-module $ \X $. Then $ \X = \Y \oplus \Y^{\perp} $.
\end{Thm}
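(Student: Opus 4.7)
The plan is to fix any rank-one projection $P$ on $\H$ and exploit the two preceding theorems to pull the ordinary Hilbert-space orthogonal projection of $\X \bullet P$ onto $\Y \bullet P$ back to an adjointable idempotent on $\X$ whose range is $\Y$. By \autoref{The Hilbert Space Coming from a Rank-One Projection}, $\X \bullet P$ is a Hilbert space and $\Y \bullet P$ is a closed subspace of it, so $Q \df \Proj{\X \bullet P}{\Y \bullet P}$ lies in $\Bdd{\X \bullet P}$. By \autoref{The Bakic-Guljas Theorem} there is a unique $E \in \Adj{\X}$ with $E|_{\X \bullet P} = Q$, and because the restriction map is a $C^{\ast}$-isomorphism the relations $Q^{2} = Q = Q^{\ast}$ lift to $E^{2} = E = E^{\ast}$.

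The next step is to identify $\Range{E}$ with $\Y$. On the one hand, $E$ sends $\X \bullet P$ into $\Y \bullet P \subseteq \Y$; since $E$ is $\Comp{\H}$-linear and continuous, it maps the closed $\Comp{\H}$-linear span of $\X \bullet P$ into that of $\Y \bullet P$, and by \autoref{The Reduction of a Submodule of a Hilbert K(H)-Module by a Rank-One Projection Generates the Submodule} those spans are $\X$ and $\Y$ respectively (using that $\Y$ is already closed). Thus $\Range{E} \subseteq \Y$. On the other hand, $E$ fixes every $\eta \in \Y \bullet P$, so for any $a \in \Comp{\H}$ it also fixes $\eta \bullet a$ by $\Comp{\H}$-linearity; continuity then forces $E$ to fix the closed $\Comp{\H}$-linear span of $\Y \bullet P$, which is $\Y$ itself, giving $\Y \subseteq \Range{E}$.

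To finish I would verify that $\Range{\Id_{\X} - E} = \Y^{\perp}$; combined with the standard decomposition $\X = \Range{E} \oplus \Range{\Id_{\X} - E}$ valid for any self-adjoint idempotent in $\Adj{\X}$, this yields the claim. If $\zeta = \func{\Id_{\X} - E}{\zeta'}$ then $\func{E}{\zeta} = 0$, so self-adjointness gives $\Inner{\eta}{\zeta}_{\X} = \Inner{\func{E}{\eta}}{\zeta}_{\X} = \Inner{\eta}{\func{E}{\zeta}}_{\X} = 0$ for every $\eta \in \Y = \Range{E}$, placing $\zeta$ in $\Y^{\perp}$. Conversely, for $\zeta \in \Y^{\perp}$ the vector $\func{E}{\zeta}$ lies in $\Y$, so $\Inner{\func{E}{\zeta}}{\func{E}{\zeta}}_{\X} = \Inner{\zeta}{\func{E}{\zeta}}_{\X} = 0$, forcing $\func{E}{\zeta} = 0$ and $\zeta = \func{\Id_{\X} - E}{\zeta} \in \Range{\Id_{\X} - E}$.

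The only subtle point anywhere in the plan is the verification that $E$ actually fixes all of $\Y$ rather than just $\Y \bullet P$, and \autoref{The Reduction of a Submodule of a Hilbert K(H)-Module by a Rank-One Projection Generates the Submodule} reduces this to routine linearity-and-continuity book-keeping, so I do not anticipate any genuine obstacle.
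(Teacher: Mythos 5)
Your proposal is correct and follows essentially the same route as the paper: fix a rank-one projection $ P $, use \autoref{The Hilbert Space Coming from a Rank-One Projection} and \autoref{The Bakic-Guljas Theorem} to lift $ \Proj{\X \bullet P}{\Y \bullet P} $ to a projection in $ \Adj{\X} $, and identify its range with $ \Y $ via \autoref{The Reduction of a Submodule of a Hilbert K(H)-Module by a Rank-One Projection Generates the Submodule}. The only divergence is in the endgame: the paper computes $ \Range{Q} \bullet P = \Y \bullet P $ and then cites Corollary 15.3.9 of Wegge-Olsen (a closed submodule is complementable once it is the range of an adjointable operator), whereas you prove $ \Range{E} = \Y $ by two inclusions and verify $ \X = \Y \oplus \Y^{\perp} $ directly from the self-adjoint idempotent, which makes your finish slightly more self-contained but does not change the substance of the argument.
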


\begin{proof}
Let $ P $ be a rank-one projection on $ \H $. Then $ \Y \bullet P $ is a closed subspace of $ \X \bullet P $. By \autoref{The Bakic-Guljas Theorem}, there is a projection $ Q \in \Adj{\X} $ such that $ Q|_{\X \bullet P} = \Proj{\X \bullet P}{\Y \bullet P} $. If we can show that $ \Range{Q} = \Y $, then we are done, for a closed submodule of a Hilbert $ C^{\ast} $-module is complementable if it is the range of an adjointable operator (Corollary 15.3.9 of \cite{We}). Indeed, as $ Q $ is a projection, $ \Range{Q} $ is a closed submodule of $ \X $, and
$$
  \Range{Q} \bullet P
= \Set{\func{Q}{\zeta} \bullet P}{\zeta \in \X}
= \Set{\func{Q}{\zeta \bullet P}}{\zeta \in \X}
= \Range{Q|_{\X \bullet P}}
= \Range{\Proj{\X \bullet P}{\Y \bullet P}}
= \Y \bullet P,
$$
so $ \Range{Q} = \Y $ by \autoref{The Reduction of a Submodule of a Hilbert K(H)-Module by a Rank-One Projection Generates the Submodule}.
\end{proof}



\begin{Lem} \label{The Existence of an Element of a K(H)-Module Whose Inner Product with Itself Is a Rank-One Projection}
Let $ \X $ be a non-trivial Hilbert $ \Comp{\H} $-module, and $ P $ a rank-one projection on $ \H $. Then there is a $ \zeta \in \X $ such that $ \Inner{\zeta}{\zeta}_{\X} = P $.
\end{Lem}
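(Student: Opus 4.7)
The plan is to reduce this to a trivial observation about the Hilbert space $\X \bullet P$ constructed in \autoref{The Hilbert Space Coming from a Rank-One Projection}. The key point is that for any $\zeta \in \X \bullet P$, the inner product $\Inner{\zeta}{\zeta}_\X$ automatically lies in the one-dimensional subspace $\C \cdot P$ of $\Comp{\H}$, so one only needs to arrange that the scalar equal $1$.

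More precisely, first I would invoke \autoref{The Hilbert Space Coming from a Rank-One Projection} to conclude that $\X \bullet P$ is a non-trivial Hilbert space whose inner product satisfies $\Inner{\zeta}{\eta}_{\X \bullet P} = \func{f_P}{\Inner{\zeta}{\eta}_\X}$, and whose norm agrees with the restriction of $\Norm{\cdot}_\X$. Since $\X \bullet P$ is non-trivial, pick a unit vector $\zeta \in \X \bullet P$.

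Next, since $\zeta = \zeta \bullet P$, I would compute
\[
\Inner{\zeta}{\zeta}_\X = \Inner{\zeta \bullet P}{\zeta \bullet P}_\X = P \Inner{\zeta}{\zeta}_\X P = \func{f_P}{\Inner{\zeta}{\zeta}_\X} \cdot P,
\]
using the defining property of $f_P$ from \autoref{The Linear Functional Associated To a Rank-One Projection}. By the definition of $\Inner{\cdot}{\cdot}_{\X \bullet P}$, the scalar $\func{f_P}{\Inner{\zeta}{\zeta}_\X}$ equals $\Inner{\zeta}{\zeta}_{\X \bullet P} = \Norm{\zeta}_{\X \bullet P}^2 = 1$, so $\Inner{\zeta}{\zeta}_\X = P$ as desired.

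There is essentially no obstacle here: the entire content has already been packaged into the previous theorem, which identifies $\X \bullet P$ as a non-trivial Hilbert space whose inner product is $f_P$ applied to the $\Comp{\H}$-valued inner product. The one sanity check worth mentioning is that $\X \bullet P$ is non-trivial precisely because $\X$ is non-trivial and $\Comp{\H}$ is simple, which is exactly what \autoref{The Hilbert Space Coming from a Rank-One Projection} records; once that is in hand, normalizing any nonzero element of $\X \bullet P$ finishes the proof.
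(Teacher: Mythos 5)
Your proof is correct and follows essentially the same route as the paper: both take a nonzero element of the Hilbert space $ \X \bullet P $ furnished by \autoref{The Hilbert Space Coming from a Rank-One Projection}, use the identity $ \Inner{\zeta \bullet P}{\zeta \bullet P}_{\X} = P \Inner{\zeta}{\zeta}_{\X} P = \Inner{\zeta}{\zeta}_{\X \bullet P} \cdot P $, and normalize (the paper divides by $ \sqrt{\Inner{\eta}{\eta}_{\X \bullet P}} $ at the end, whereas you choose a unit vector at the outset, which is the same step).
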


\begin{proof}
By \autoref{The Hilbert Space Coming from a Rank-One Projection}, we can find a non-zero $ \eta \in \X \bullet P $, so $ \Inner{\eta}{\eta}_{\X \bullet P} > 0 $. As
$$
  \Inner{\eta}{\eta}_{\X}
= \Inner{\eta \bullet P}{\eta \bullet P}_{\X}
= P \Inner{\eta}{\eta}_{\X} P
= \Inner{\eta}{\eta}_{\X \bullet P} \cdot P,
$$
we see that $ \zeta \df \dfrac{1}{\sqrt{\Inner{\eta}{\eta}_{\X \bullet P}}} \cdot \eta $ satisfies $ \Inner{\zeta}{\zeta}_{\X} = P $.
\end{proof}



\begin{Def} \label{Irreducibility}
Let $ A $ be a non-trivial $ C^{\ast} $-algebra, and $ \X $ a non-trivial Hilbert $ A $-module. We say that $ \Comp{\X} $ \emph{acts irreducibly} on $ \X $ if and only if the only closed submodules of $ \X $ that are invariant under the left action of $ \Comp{\X} $ are $ \SSet{0_{\X}} $ and $ \X $.
\end{Def}



\begin{Prop} \label{K(X) Acts Irreducibly on X}
Let $ \X $ be a non-trivial Hilbert $ \Comp{\H} $-module. Then $ \Comp{\X} $ acts irreducibly on $ \X $.
\end{Prop}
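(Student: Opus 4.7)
The plan is to show that every non-zero closed $\Comp{\H}$-submodule $\Y$ of $\X$ that is stable under the left action of $\Comp{\X}$ must equal $\X$; the argument has three moves.

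First, I would observe that a non-zero closed $\Comp{\H}$-submodule $\Y$ of $\X$ is itself a non-trivial Hilbert $\Comp{\H}$-module. Fixing an arbitrary rank-one projection $P$ on $\H$, I would then apply the preceding lemma on the existence of an element whose self-inner-product is a rank-one projection \emph{to $\Y$ in place of $\X$}, producing some $\zeta \in \Y$ with $\Inner{\zeta}{\zeta}_{\X} = P$.

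Second, I would exploit the $\Comp{\X}$-invariance of $\Y$. For every $\xi \in \X$, the rank-one operator $\Theta_{\xi,\zeta}$ lies in $\Comp{\X}$, so
$$
\func{\Theta_{\xi,\zeta}}{\zeta} = \xi \bullet \Inner{\zeta}{\zeta}_{\X} = \xi \bullet P
$$
must itself lie in $\Y$. Letting $\xi$ range over all of $\X$ yields the inclusion $\X \bullet P \subseteq \Y$.

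Finally, I would invoke the preceding generation theorem (with $\X$ playing the role of both the ambient module and the submodule), which asserts that the closed $\Comp{\H}$-linear span of $\X \bullet P$ in $\X$ equals $\X$. Since $\Y$ is a closed $\Comp{\H}$-submodule of $\X$ that contains $\X \bullet P$, it contains this span, forcing $\Y = \X$. The only real subtlety is the recognition in the first step that the earlier existence lemma is applicable to $\Y$ itself: the non-triviality hypothesis in that lemma is satisfied because $\Y$ is assumed to be a non-zero closed submodule. Once the vector $\zeta$ is in hand, the compact-operator trick $\xi \mapsto \func{\Theta_{\xi,\zeta}}{\zeta}$ immediately transplants $\X \bullet P$ into $\Y$, and the generation theorem then closes the argument.
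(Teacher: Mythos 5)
Your argument is correct, and it takes a genuinely different route from the paper. The paper argues by contradiction through the correspondence $\Y \mapsto \Y \bullet P$: it uses \autoref{The Hilbert Space Coming from a Rank-One Projection} to view $\Y \bullet P$ as a non-trivial closed subspace of the Hilbert space $\X \bullet P$, the surjectivity part of \autoref{The Bakic-Guljas Theorem} to transfer $\Comp{\X}$-invariance of $\Y$ into $\Comp{\X \bullet P}$-invariance of $\Y \bullet P$, and then appeals to the classical fact that the compacts act irreducibly on a non-trivial Hilbert space, with \autoref{The Reduction of a Submodule of a Hilbert K(H)-Module by a Rank-One Projection Generates the Submodule} only needed to rule out $\Y \bullet P = \X \bullet P$. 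You instead argue directly inside the module: applying \autoref{The Existence of an Element of a K(H)-Module Whose Inner Product with Itself Is a Rank-One Projection} to $\Y$ (legitimate, since a non-zero closed submodule is itself a non-trivial Hilbert $\Comp{\H}$-module, and its inner product is the restriction of $\Inner{\cdot}{\cdot}_{\X}$) produces $\zeta \in \Y$ with $\Inner{\zeta}{\zeta}_{\X} = P$, the operators $\Theta_{\xi,\zeta} \in \Comp{\X}$ then push $\X \bullet P$ into $\Y$, and \autoref{The Reduction of a Submodule of a Hilbert K(H)-Module by a Rank-One Projection Generates the Submodule} (with $\X$ as its own submodule) forces $\Y = \X$. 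Your proof is more elementary and self-contained: it avoids the Baki\'{c}-Gulja\v{s} isomorphism and the imported Hilbert-space irreducibility fact entirely, at the cost of not exhibiting the submodule correspondence $\Y \leftrightarrow \Y \bullet P$ that the paper's proof makes visible and that fits its broader theme of transporting structure between $\X$ and $\X \bullet P$.
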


\begin{proof}
Let $ P $ be a rank-one projection on $ \H $. By \autoref{The Hilbert Space Coming from a Rank-One Projection}, $ \X \bullet P $ is a non-trivial Hilbert space. Suppose that $ \Comp{\X} $ does not act irreducibly on $ \X $, i.e., there is a non-trivial proper closed submodule $ \Y $ of $ \X $ that is invariant under the left action of $ \Comp{\X} $. By \autoref{The Hilbert Space Coming from a Rank-One Projection} again, $ \Y \bullet P $ is a non-trivial closed subspace of $ \X \bullet P $. We will then have a contradiction if we can show that $ \Y \bullet P $ is a non-trivial and proper closed subspace of $ \X \bullet P $ that is invariant under the left action of $ \Comp{\X \bullet P} $ (note that the $ C^{\ast} $-algebra of compact operators on a non-trivial Hilbert space must act irreducibly on the Hilbert space).

If $ \Y \bullet P $ were not a proper subspace of $ \X \bullet P $, then $ \Y \bullet P = \X \bullet P $, so the closed $ \Comp{\H} $-linear span of $ \Y \bullet P $ in $ \X $ would be that of $ \X \bullet P $ in $ \X $. By \autoref{The Reduction of a Submodule of a Hilbert K(H)-Module by a Rank-One Projection Generates the Submodule}, we would obtain $ \Y = \X $, which violates the earlier assumption that $ \Y $ is a proper submodule of $ \X $. Hence, $ \Y \bullet P \subsetneq \X \bullet P $.

Let $ T \in \Comp{\X \bullet P} $. Then $ T = S|_{\X \bullet P} $ for some $ S \in \Comp{\X} $, so
$$
          \Im{T}{\Y \bullet P}
=         \Im{S}{\Y \bullet P}
=         \Im{S}{\Y} \bullet P
\subseteq \Y \bullet P,
$$
proving that $ \Y \bullet P $ is invariant under the left action of $ \Comp{\X \bullet P} $. This produces the desired contradiction, which completes the proof of this proposition.
\end{proof}



\begin{Prop} \label{A Decomposition Result for Non-Degenerate *-Representations of K(X)}
Let $ \X $ and $ \Y $ be non-trivial Hilbert $ \Comp{\H} $-modules. If $ \Phi $ is a non-degenerate $ \ast $-representation of $ \Comp{\X} $ on $ \Y $, then $ \Pair{\Y}{\Phi} $ is unitarily equivalent to a direct sum of copies of $ \Pair{\X}{i_{\Comp{\X} \into \Adj{\X}}} $.
\end{Prop}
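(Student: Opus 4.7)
The strategy is to apply \autoref{The Bakic-Guljas Theorem} to reduce the claim to the classical theorem on non-degenerate $\ast$-representations of the compact operators on a Hilbert space, and then lift the resulting decomposition back to the module level via \autoref{The Reduction of a Submodule of a Hilbert K(H)-Module by a Rank-One Projection Generates the Submodule}. To set up the reduction, fix a rank-one projection $P_{0}$ on $\H$; restriction to $\X \bullet P_{0}$ and to $\Y \bullet P_{0}$ yields $C^{\ast}$-isomorphisms $\Comp{\X} \cong \Comp{\X \bullet P_{0}}$ and $\Adj{\Y} \cong \Bdd{\Y \bullet P_{0}}$, transferring $\Phi$ to a $\ast$-representation $\Phi_{0}: \Comp{\X \bullet P_{0}} \to \Bdd{\Y \bullet P_{0}}$ between Hilbert spaces. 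Non-degeneracy of $\Phi_{0}$ follows from that of $\Phi$: for $\eta \in \Y \bullet P_{0}$, approximate $\eta$ by sums $\sum_{k} \FUNC{\func{\Phi}{T_{k}}}{\mu_{k}}$ and use the $\Comp{\H}$-linearity of each $\func{\Phi}{T_{k}}$ to rewrite this as $\sum_{k} \FUNC{\func{\Phi}{T_{k}}}{\mu_{k} \bullet P_{0}}$, which lies in the span of $\Im{\Phi_{0}}{\Comp{\X \bullet P_{0}}}\Br{\Y \bullet P_{0}}$. The classical theorem then produces a Hilbert-space unitary $V_{0}: \bigoplus_{i \in I}\Br{\X \bullet P_{0}} \to \Y \bullet P_{0}$ intertwining $\Phi_{0}$ with the diagonal representation.

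The main technical step is to lift each component $V_{0,i}: \X \bullet P_{0} \to \Y \bullet P_{0}$ of $V_{0}$ to a $\Comp{\H}$-linear isometry $V_{i}: \X \to \Y$. The key observation is that $\zeta \bullet P_{0} = \zeta$ for $\zeta \in \X \bullet P_{0}$, so
$$
\Inner{\zeta}{\eta}_{\X} = P_{0} \Inner{\zeta}{\eta}_{\X} P_{0} = \Inner{\zeta}{\eta}_{\X \bullet P_{0}} \cdot P_{0},
$$
and analogously in $\Y \bullet P_{0}$. Consequently any Hilbert-space isometry between the reductions automatically preserves the full $\Comp{\H}$-valued inner products. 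One then defines $V_{i}$ on the $\Comp{\H}$-linear span of $\X \bullet P_{0}$ by $V_{i}\Br{\sum_{k} \zeta_{k} \bullet T_{k}} \df \sum_{k} \FUNC{V_{0,i}}{\zeta_{k}} \bullet T_{k}$, with both well-definedness and isometry flowing from the standard expansion $\Norm{\sum_{k} \xi_{k} \bullet T_{k}}_{\Y}^{2} = \Norm{\sum_{k,l} T_{k}^{\ast} \Inner{\xi_{k}}{\xi_{l}}_{\Y} T_{l}}_{\Comp{\H}}$ together with the previous identity. Since \autoref{The Reduction of a Submodule of a Hilbert K(H)-Module by a Rank-One Projection Generates the Submodule} makes this span dense in $\X$, $V_{i}$ extends by continuity to all of $\X$.

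Finally, assemble $V \df \bigoplus_{i} V_{i}: \bigoplus_{i \in I} \X \to \Y$, which is an isometric $\Comp{\H}$-linear map and therefore adjointable via \autoref{The Complementability of Hilbert K(H)-Modules}. Setting $\Y_{i} \df \Im{V_{i}}{\X}$, one sees that $\Y_{i} \bullet P_{0} = \Im{V_{0,i}}{\X \bullet P_{0}}$, so $\Br{\bigoplus_{i} \Y_{i}} \bullet P_{0} = \Y \bullet P_{0}$; a further application of \autoref{The Reduction of a Submodule of a Hilbert K(H)-Module by a Rank-One Projection Generates the Submodule} forces $\bigoplus_{i} \Y_{i} = \Y$, giving surjectivity. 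Intertwining of $\Phi$ with the diagonal representation holds by construction on the dense submodule and extends by continuity. I expect the main obstacle to be the well-definedness of the lift in the second step, since one must carefully track how the $\Comp{\H}$-valued inner product on $\X \bullet P_{0}$ degenerates to a scalar multiple of $P_{0}$; once that observation is in place, the remaining arguments are routine applications of the reduction theorem.
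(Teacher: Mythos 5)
Your proposal is correct, but it follows a genuinely different route from the paper's. The paper adapts Arveson's argument at the module level: it uses \autoref{The Bakic-Guljas Theorem} only to transfer non-degeneracy to the Hilbert space $ \Y \bullet P $ and to produce a projection $ E \in \Comp{\X} $ restricting to a rank-one projection with $ \func{\Phi}{E} \neq 0_{\Adj{\Y}} $; it then chooses $ \zeta, \eta $ with $ \Inner{\zeta}{\zeta}_{\X} = P = \Inner{\eta}{\eta}_{\Y} $, builds by hand a unitary from $ \X $ onto an invariant submodule of $ \Y $ intertwining $ \Phi $ with the identity (via the functional $ g $ with $ E T E = \func{g}{T} \cdot E $, irreducibility, and Lance's Theorem 3.5(i)), and finally exhausts $ \Y $ by a Zorn's lemma argument together with \autoref{The Complementability of Hilbert K(H)-Modules}. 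You instead invoke the classical decomposition theorem wholesale on $ \Y \bullet P_{0} $ and lift the resulting Hilbert-space unitary back to the modules, the lift resting on the identity $ \Inner{\zeta}{\eta}_{\X} = \Inner{\zeta}{\eta}_{\X \bullet P_{0}} \cdot P_{0} $ and on the density of the $ \Comp{\H} $-linear span of $ \X \bullet P_{0} $ from \autoref{The Reduction of a Submodule of a Hilbert K(H)-Module by a Rank-One Projection Generates the Submodule}. Your route is shorter and delegates the maximality argument to the quoted Hilbert-space theorem, at the price of a lifting step whose routine verifications should still be recorded: the ranges $ \Y_{i} $ are mutually orthogonal, since $ \Inner{\kappa \bullet S}{\kappa' \bullet T}_{\Y} = S^{\ast} \Br{\Inner{\kappa}{\kappa'}_{\Y \bullet P_{0}} \cdot P_{0}} T = 0_{\Comp{\H}} $ for orthogonal $ \kappa, \kappa' \in \Y \bullet P_{0} $, which is what makes $ V = \oplus_{i} V_{i} $ well-defined and inner-product preserving on the module direct sum; the asserted equality $ \Br{\bigoplus_{i} \Y_{i}} \bullet P_{0} = \Y \bullet P_{0} $ is really only a density statement, which still suffices because closed $ \Comp{\H} $-linear spans are unchanged under taking closures; and the adjointability/unitarity of $ V $ is best drawn not from complementability but from the fact that $ V $ preserves the $ \Comp{\H} $-valued inner product and is surjective (Lance's Theorem 3.5(i), exactly as the paper cites), since norm-isometry alone would not justify the claim. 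With those details filled in, your argument is a valid and arguably cleaner proof that showcases the ``reduce and lift'' philosophy of the Baki\'{c}--Gulja\v{s} machinery, whereas the paper's version is more self-contained (needing only the first part of Arveson's proof) and constructs the intertwining unitaries directly at the module level.
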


\begin{proof}
Our proof is an adaptation of Arveson's proof of Theorem 1.4.4 in \cite{Ar}. Fix a rank-one projection $ P $ on $ \H $, and consider
$$
\Psi:
\Comp{\X \bullet P} \stackrel{\cong}{\longrightarrow}
\Comp{\X}           \stackrel{\Phi}{\longrightarrow}
\Adj{\Y}            \stackrel{\cong}{\longrightarrow}
\Bdd{\Y \bullet P},
$$
where $ \Comp{\X \bullet P} \stackrel{\cong}{\longrightarrow} \Comp{\X} $ and $ \Adj{\Y} \stackrel{\cong}{\longrightarrow} \Bdd{\Y \bullet P} $ come from \autoref{The Bakic-Guljas Theorem}. By definition, the non-degeneracy of $ \Pair{\Phi}{\Y} $ means that
$$
\Y = \Cl{\Span{\Set{\FUNC{\func{\Phi}{T}}{\zeta}}{T \in \Comp{\X} ~ \text{and} ~ \zeta \in \Y}}}{\Y},
$$
which yields
\begin{align*}
            \Y \bullet P
& =         \Cl{\Span{\Set{\FUNC{\func{\Phi}{T}}{\zeta}}{T \in \Comp{\X} ~ \text{and} ~ \zeta \in \Y}}}{\Y} \bullet P \\
& \subseteq \Cl{\Span{\Set{\FUNC{\func{\Phi}{T}}{\zeta}}{T \in \Comp{\X} ~ \text{and} ~ \zeta \in \Y}} \bullet P}{\Y} \\
& =         \Cl{\Span{\Set{\FUNC{\func{\Phi}{T}}{\zeta \bullet P}}{T \in \Comp{\X} ~ \text{and} ~ \zeta \in \Y}}}{\Y} \\
& =         \Cl{\Span{\Set{\FUNC{\func{\Phi}{T}|_{\Y \bullet P}}{\eta}}{T \in \Comp{\X} ~ \text{and} ~ \eta \in \Y \bullet P}}}{\Y} \\
& =         \Cl{\Span{\Set{\FUNC{\func{\Psi}{S}}{\eta}}{S \in \Comp{\X \bullet P} ~ \text{and} ~ \eta \in \Y \bullet P}}}{\Y} \\
& =         \Cl{\Span{\Set{\FUNC{\func{\Psi}{S}}{\eta}}{S \in \Comp{\X \bullet P} ~ \text{and} ~ \eta \in \Y \bullet P}}}{\Y \bullet P} \qquad
            \Br{\text{By \autoref{The Hilbert Space Coming from a Rank-One Projection}.}} \\
& \subseteq \Y \bullet P.
\end{align*}
Hence, $ \Psi $ is a non-degenerate $ \ast $-representation of $ \Comp{\X \bullet P}$ on the Hilbert space $ \Y \bullet P $, and so the first part of Arveson's proof says that there is a rank-one projection $ Q \in \Comp{\X \bullet P} $ such that $ \func{\Psi}{Q} \neq 0_{\Comp{\Y \bullet P}} $.

By \autoref{The Bakic-Guljas Theorem}, there is a projection $ E \in \Comp{\X} $ such that $ Q = E|_{\X \bullet P} $. As $ \func{\Psi}{Q} \neq 0_{\Comp{\Y \bullet P}} $, it must be that $ \func{\Phi}{E} \neq 0_{\Adj{\Y}} $, so $ E \neq 0_{\Comp{\X}} $. By \autoref{The Linear Functional Associated To a Rank-One Projection}, there exists a linear functional $ f_{P}: \Comp{\X \bullet P} \to \C $ satisfying
$$
\forall S \in \Comp{\X \bullet P}: \qquad
Q S Q = \func{f_{P}}{S} \cdot Q.
$$
Define a linear functional $ g: \Comp{\X} \to \C $ by $ \func{g}{T} \df \func{f_{P}}{T|_{\X \bullet P}} $ for all $ T \in \Comp{\X} $; then
$$
  E T E|_{\X \bullet P}
= \Br{E|_{\X \bullet P}} \Br{T|_{\X \bullet P}} \Br{E|_{\X \bullet P}}
= Q \Br{T|_{\X \bullet P}} Q
= \func{f_{P}}{T|_{\X \bullet P}} \cdot Q
= \func{f_{P}}{T|_{\X \bullet P}} \cdot E|_{\X \bullet P}
= \SqBr{\func{g}{T} \cdot E}|_{\X \bullet P}.
$$
By \autoref{The Bakic-Guljas Theorem} again, we may conclude that $ E T E = \func{g}{T} \cdot E $ for all $ T \in \Comp{\X} $.

Consider the $ \Comp{\H} $-submodule $ \Im{E}{\X} $ of $ \X $, which is non-trivial as $ E \neq 0_{\Comp{\X}} $, and closed as $ E $ is a projection. Similarly, $ \Im{\func{\Phi}{E}}{\Y} $ is a non-trivial closed $ \Comp{\H} $-submodule of $ \Y $. Hence, by \autoref{The Existence of an Element of a K(H)-Module Whose Inner Product with Itself Is a Rank-One Projection}, there exist $ \zeta \in \Im{E}{\X} $ and $ \eta \in \Im{\func{\Phi}{E}}{\Y} $ such that $ \Inner{\zeta}{\zeta}_{\X} = P = \Inner{\eta}{\eta}_{\Y} $. We now claim that the map
$$
U:
\Span{\Set{\func{T}{\zeta \bullet A} \in \X}{T \in \Comp{\X}, ~ A \in \Comp{\H}}} \to
\Span{\Set{\FUNC{\func{\Phi}{T}}{\eta \bullet A} \in \Y}{T \in \Comp{\X}, ~ A \in \Comp{\H}}}
$$
defined by
$$
\sum_{i = 1}^{n} \func{T_{i}}{\zeta \bullet A_{i}} \mapsto \sum_{i = 1}^{n} \FUNC{\func{\Phi}{T_{i}}}{\eta \bullet A_{i}}
$$
for all $ T_{1},\ldots,T_{n} \in \Comp{\X} $ and $ A_{1},\ldots,A_{n} \in \Comp{\H} $ is well-defined by virtue of being an isometry. Indeed,
\begin{align*}
  & ~ \Norm{
           \Inner{\sum_{i = 1}^{n} \FUNC{\func{\Phi}{T_{i}}}{\eta \bullet A_{i}}}
                 {\sum_{j = 1}^{n} \FUNC{\func{\Phi}{T_{j}}}{\eta \bullet A_{j}}}_{\Y}
           }_{\Comp{\H}} \\
= & ~ \Norm{
           \sum_{i,j = 1}^{n} \Inner{\FUNC{\func{\Phi}{T_{i}}}{\eta \bullet A_i}}{\FUNC{\func{\Phi}{T_{j}}}{\eta \bullet A_{j}}}_{\Y}
           }_{\Comp{\H}} \\
= & ~ \Norm{
           \sum_{i,j = 1}^{n}
           \Inner{\FUNC{\func{\Phi}{T_{i}}}{\FUNC{\func{\Phi}{E}}{\eta} \bullet A_{i}}}
                 {\FUNC{\func{\Phi}{T_{j}}}{\FUNC{\func{\Phi}{E}}{\eta} \bullet A_{j}}}_{\Y}
           }_{\Comp{\H}} \qquad \Br{\text{As $ \eta \in \Im{\SqBr{\func{\Phi}{E}}}{\Y} $.}} \\
= & ~ \Norm{
           \sum_{i,j = 1}^{n} \Inner{\FUNC{\func{\Phi}{T_{i} E}}{\eta \bullet A_{i}}}{\FUNC{\func{\Phi}{T_{j} E}}{\eta \bullet A_{j}}}_{\Y}
           }_{\Comp{\H}} \\
= & ~ \Norm{
           \sum_{i,j = 1}^{n} \Inner{\FUNC{\func{\Phi}{T_{j} E}^{\ast} \func{\Phi}{T_{i} E}}{\eta \bullet A_{i}}}{\eta \bullet A_{j}}_{\Y}
           }_{\Comp{\H}} \\
= & ~ \Norm{\sum_{i,j = 1}^{n} \Inner{\FUNC{\func{\Phi}{E T_{j}^{\ast} T_{i} E}}{\eta \bullet A_{i}}}{\eta \bullet A_{j}}_{\Y}}_{\Comp{\H}} \\
= & ~ \Norm{
           \sum_{i,j = 1}^{n} \Inner{\FUNC{\func{\Phi}{\func{g}{T_{j}^{\ast} T_{i}} \cdot E}}{\eta \bullet A_{i}}}{\eta \bullet A_{j}}_{\Y}
           }_{\Comp{\H}} \\
= & ~ \Norm{
           \sum_{i,j = 1}^{n}
           \overline{\func{g}{T_{j}^{\ast} T_{i}}} \cdot \Inner{\FUNC{\func{\Phi}{E}}{\eta \bullet A_{i}}}{\eta \bullet A_{j}}_{\Y}
           }_{\Comp{\H}} \\
= & ~ \Norm{\sum_{i,j = 1}^{n} \overline{\func{g}{T_{j}^{\ast} T_{i}}} \cdot \Inner{\eta \bullet A_{i}}{\eta \bullet A_{j}}_{\Y}}_{\Comp{\H}}
      \\
= & ~ \Norm{\sum_{i,j = 1}^{n} \overline{\func{g}{T_{j}^{\ast} T_{i}}} \cdot A_{i}^{\ast} \Inner{\eta}{\eta}_{\Y} A_{j}}_{\Comp{\H}} \\
= & ~ \Norm{\sum_{i,j = 1}^{n} \overline{\func{g}{T_{j}^{\ast} T_{i}}} \cdot A_{i}^{\ast} P A_{j}}_{\Comp{\H}}, \qquad
      \Br{\text{As $ \Inner{\eta}{\eta}_{\Y} = P $.}}
\end{align*}
and a nearly-identical computation using $ \Inner{\zeta}{\zeta}_{\X} = P $ also yields
$$
  \Norm{\Inner{\sum_{i = 1}^{n} \func{T_{i}}{\zeta \bullet A_i}}{\sum_{j = 1}^{n} \func{T_{j}}{\zeta \bullet A_{j}}}_{\Y}}_{\Comp{\H}}
= \Norm{\sum_{i,j = 1}^{n} \overline{\func{g}{T_{j}^{\ast} T_{i}}} \cdot A_{i}^{\ast} P A_{j}}_{\Comp{\H}}.
$$
Therefore, $ U $ is a surjective isometry. By continuity, $ U $ extends to a surjective isometry $ U : \X' \to \Y' $, where
$$
\X' \df \Cl{\Span{\Set{\func{T}{\zeta \bullet A}}{T \in \Comp{\X}, ~ A \in \Comp{\H}}}}{\X}
$$
and
$$
\Y' \df \Cl{\Span{\Set{\FUNC{\func{\Phi}{T}}{\eta \bullet A}}{T \in \Comp{\X} ~ A \in \Comp{\H}}}}{\Y}.
$$
Note that $ \X' $ is a $ \Comp{\X} $-invariant closed submodule of $ \X $, and is non-trivial as $ \zeta \in \X' $. Also, $ \Y' $ is a $ \Im{\Phi}{\Comp{\X}} $-invariant closed submodule of $ \Y $, and is non-trivial as $ \eta \in \Y' $. Hence, $ \X' = \X $ by \autoref{K(X) Acts Irreducibly on X}, so $ U: \X \to \Y' $ is a surjective isometry that, moreover, is $ \Comp{\H} $-linear. We may thus apply Theorem 3.5(i) of \cite{La} to deduce that $ U \in \Unitary{\X,\Y'} $.

Next, we claim that $ U T = \func{\Phi}{T}|_{\Y'} U $ for all $ T \in \Comp{\X} $. Fix $ T \in \Comp{\X} $. Then for all $ T_{1},\ldots,T_{n} \in \Comp{\X} $ and $ A_{1},\ldots,A_{n} \in \Comp{\H} $, we have
\begin{align*}
    \Func{U T}{\sum_{i = 1}^{n} \func{T_{i}}{\xi \bullet A_{i}}}
& = \func{U}{\sum_{i = 1}^{n} \func{T T_{i}}{\xi \bullet A_{i}}} \\
& = \sum_{i = 1}^{n} \FUNC{\func{\Phi}{T T_{i}}}{\eta \bullet A_{i}} \\
& = \sum_{i = 1}^{n} \FUNC{\func{\Phi}{T} \func{\Phi}{T_{i}}}{\eta \bullet A_{i}} \\
& = \FUNC{\func{\Phi}{T}}{\sum_{i = 1}^{n} \FUNC{\func{\Phi}{T_{i}}}{\eta \bullet A_{i}}} \\
& = \FUNC{\func{\Phi}{T} U}{\sum_{i = 1}^{n} \func{T_{i}}{\xi \bullet A_{i}}}.
\end{align*}
By the density of $ \Span{\Set{\func{T}{\xi \bullet A}}{T \in \Comp{\X}, ~ a \in \Comp{\H}}} $ in $ \X $, we obtain $ U T = \func{\Phi}{T}|_{\Y'} U $ as expected.

Now, define a poset $ \Pair{\mathfrak{P}}{\sqsubseteq} $ with the following properties:
\begin{itemize}
\item
$ \mathcal{S} $ is an element of $ \mathfrak{P} $ if and only if the following hold:
\begin{itemize}
\item
$ \mathcal{S} $ consists of pairs of the form $ \Pair{\Z}{V} $, where $ \Z $ is a non-trivial $ \Im{\Phi}{\Comp{\X}} $-invariant closed submodule of $ \Y $, and $ V \in \Unitary{\X,\Z} $ with $ T = V^{- 1} \SqBr{\func{\Phi}{T}|_{\Z}} V $ for all $ T \in \Comp{\H} $.

\item
If $ \Pair{\Z_{1}}{V_{1}} $ and $ \Pair{\Z_{2}}{V_{2}} $ are distinct elements of $ \mathcal{S} $, then $ \Z_{1} \perp \Z_{2} $.
\end{itemize}

\item
For all $ \mathcal{S}_{1},\mathcal{S}_{2} \in \mathfrak{P} $, we have $ \mathcal{S}_{1} \sqsubseteq \mathcal{S}_{2} $ if and only if $ \mathcal{S}_{1} \subseteq \mathcal{S}_{2} $.
\end{itemize}
If $ \mathfrak{C} $ is a chain in $ \Pair{\mathfrak{P}}{\sqsubseteq} $, then $ \bigcup \mathfrak{C} $ is an upper bound for $ \mathfrak{C} $ in $ \Pair{\mathfrak{P}}{\sqsubseteq} $, so by Zorn's Lemma, there exists a maximal element $ \mathcal{M} $ of $ \Pair{\mathfrak{P}}{\sqsubseteq} $. We claim that $ \ds \Y = \bigoplus_{\Pair{\Z}{V} \in \mathcal{M}} \Z $, where the direct sum is internal. If this were not true, then $ \ds \bigoplus_{\Pair{\Z}{V} \in \mathcal{M}} \Z \subsetneq \Y $. Letting $ \ds \Z' \df \SqBr{\bigoplus_{\Pair{\Z}{V} \in \mathcal{M}} \Z}^{\perp} $, \autoref{The Complementability of Hilbert K(H)-Modules} says that $ \Z' $ is a non-trivial closed submodule of $ \Y $. A routine verification reveals that $ \Im{\func{\Phi}{T}}{\Z'} \subseteq \Z' $ for each $ T \in \Comp{\H} $, and that
$$
\Map{\Comp{\X}}{\Adj{\Z'}}{T}{\func{\Phi}{T}|_{\Z'}}
$$
is a non-degenerate $ \ast $-representation of $ \Comp{\X} $ on $ \Z' $. We may thus apply the first part of the proof to $ \Z' $ to obtain $ \Pair{\Z''}{W} $, where
\begin{itemize}
\item
$ \Z'' $ is a non-trivial $ \Im{\Phi}{\Comp{\X}} $-invariant closed submodule of $ \Z' $ (and hence of $ \Y $), and

\item
$ W \in \Unitary{\X,\Z''} $ with $ T = W^{- 1} \SqBr{\func{\Phi}{T}|_{\Z''}} W $ for all $ T \in \Comp{\H} $.
\end{itemize}
As $ \mathcal{M} \subsetneq \SSet{\Pair{\Z''}{W}} \cup \mathcal{M} \in \mathcal{P} $, this contradicts the maximality of $ \mathcal{M} $. Therefore, $ \ds \Y = \bigoplus_{\Pair{\Z}{V} \in \mathcal{M}} \Z $ indeed, so
$$
\forall T \in \Comp{\H}: \qquad
  \func{\Phi}{T}
= \oplus_{\Pair{\Z}{V} \in \mathcal{M}} \func{\Phi}{T}|_{\Z}
= \oplus_{\Pair{\Z}{V} \in \mathcal{M}} V T V^{- 1}
= \SqBr{\oplus_{\Pair{\Z}{V} \in \mathcal{M}} V} T \SqBr{\oplus_{\Pair{\Z}{V} \in \mathcal{M}} V^{- 1}}.
$$
The proof is finally complete.
\end{proof}




\section{The Covariant Stone-von Neumann Theorem}


In \cite{Ri1}, Marc Rieffel applied a special instance of the next theorem --- Green's Imprimitivity Theorem --- to derive the classical Stone-von Neumann Theorem. According to him, the classical Stone-von Neumann Theorem is really a statement about the Morita equivalence of the $ C^{\ast} $-algebra $ \C $ with the crossed product $ \Cstar{G,\Co{G},\lt} $. This gives us a more algebraic way of seeing things, and it is precisely this point of view that guided our search for the covariant Stone-von Neumann Theorem in the beginning. As we are dealing with Hilbert $ C^{\ast} $-modules instead of just Hilbert spaces, the full strength of Green's Imprimitivity Theorem is required.



\begin{Thm}[Green's Imprimitivity Theorem] \label{Green's Imprimitivity Theorem}
Let $ \Trip{G}{A}{\alpha} $ be a $ C^{\ast} $-dynamical system. Then $ \LL{G}{A}{\alpha} $ is a $ \Pair{\Cstar{G,\Co{G,A},\lt \otimes \alpha}}{A} $-imprimitivity bimodule with the following properties:
\begin{itemize}
\item
If $ \Xi $ denotes the non-degenerate $ \ast $-representation of $ \Co{G,A} $ on $ \LL{G}{A}{\alpha} $ uniquely determined by
$$
\forall g \in \Co{G,A}, ~ \forall \phi \in \Cc{G,A}: \qquad
\FUNC{\func{\Xi}{g}}{\func{q}{\phi}} = \func{q}{g \phi},
$$
then $ \Trip{\LL{G}{A}{\alpha}}{\Xi}{\U} $ is a $ \Trip{G}{\Co{G,A}}{\lt \otimes \alpha} $-covariant modular representation, and the left action of $ \Cstar{G,\Co{G,A},\lt \otimes \alpha} $ on $ \LL{G}{A}{\alpha} $ is $ \overline{\Pi}_{\LL{G}{A}{\alpha},\Xi,\U} $.

\item
The $ \Cstar{G,\Co{G,A},\lt \otimes \alpha} $-valued inner product on $ \LL{G}{A}{\alpha} $ is uniquely determined by
$$
  _{\Cstar{G,\Co{G,A},\lt \otimes \alpha}}\Inner{\func{q}{\phi}}{\func{q}{\psi}}
= \func{\eta_{\Trip{G}{\Co{G,A}}{\lt \otimes \alpha}}}
       {\Map{G}{\Co{G,A}}{x}{\func{\Delta_{G}}{x^{- 1} \bullet} \cdot \func{\phi}{\bullet} \alphArg{x}{\func{\psi}{x^{- 1} \bullet}^{\ast}}}}.
$$
for all $ \phi,\psi \in \Cc{G,A} $.

\item
Both the right $ A $-action and the $ A $-valued inner product on $ \LL{G}{A}{\alpha} $ are precisely the ones that define $ \LL{G}{A}{\alpha} $ as a Hilbert $ A $-module.
\end{itemize}
\end{Thm}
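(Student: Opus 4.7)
The plan is to realize $\LL{G}{A}{\alpha}$ as a full right Hilbert $A$-module and then invoke the general principle that any full Hilbert $B$-module $\X$ produces a $\Pair{\Comp{\X}}{B}$-imprimitivity bimodule; the content of Green's Theorem is then the identification $\Comp{\LL{G}{A}{\alpha}} \cong \Cstar{G,\Co{G,A},\lt\otimes\alpha}$ implemented by the integrated form of $\Trip{\LL{G}{A}{\alpha}}{\Xi}{\U}$. The right $A$-module structure and its fullness are already in hand. For the left-hand data, I would first verify that $\FUNC{\func{\Xi}{g}}{\func{q}{\phi}} \df \func{q}{g \phi}$ extends from $\Cc{G,A}$ to a non-degenerate $\ast$-representation of $\Co{G,A}$ on $\LL{G}{A}{\alpha}$; well-definedness, norm-boundedness, and non-degeneracy all mimic the argument already given for $\M$, using a sequential approximate identity in the separable $C^\ast$-subalgebra of $A$ generated by $\Range{\phi}$ together with an approximation by elementary tensors $f \diamond e_n$.

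Next, a direct computation on $\func{q}{\phi}$ followed by continuous extension shows that $\Trip{\LL{G}{A}{\alpha}}{\Xi}{\U}$ satisfies the covariance relation for $\lt \otimes \alpha$ and is thus a $\Trip{G}{\Co{G,A}}{\lt\otimes\alpha}$-covariant modular representation. The universal property of the full crossed product then delivers a $C^\ast$-homomorphism
\[
\overline{\Pi}_{\LL{G}{A}{\alpha},\Xi,\U}: \Cstar{G,\Co{G,A},\lt\otimes\alpha} \to \Adj{\LL{G}{A}{\alpha}}.
\]
Once this map is in place, the third bullet of the statement is automatic and the second bullet is verified by a Fubini-type computation: for $\phi,\psi \in \Cc{G,A}$ one shows that the element of $\Cstar{G,\Co{G,A},\lt\otimes\alpha}$ specified on the right-hand side of the inner-product formula maps under $\overline{\Pi}_{\LL{G}{A}{\alpha},\Xi,\U}$ to the rank-one operator $\Theta_{\func{q}{\phi},\func{q}{\psi}} \in \Comp{\LL{G}{A}{\alpha}}$, using the modular-function identity and the Haar-measure symmetries of $G$. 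Since rank-one operators span a dense subspace of $\Comp{\LL{G}{A}{\alpha}}$, this simultaneously identifies the candidate left inner product and shows that $\overline{\Pi}_{\LL{G}{A}{\alpha},\Xi,\U}$ surjects onto $\Comp{\LL{G}{A}{\alpha}}$.

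The main obstacle is injectivity of $\overline{\Pi}_{\LL{G}{A}{\alpha},\Xi,\U}$ — equivalently, that the norm it induces on $\Cc{G,\Co{G,A}}$ is the full universal crossed-product norm, and not some reduced form of it. I would handle this by showing that every $\Trip{G}{\Co{G,A}}{\lt\otimes\alpha}$-covariant modular representation $\Trip{\Y}{\pi}{R}$ is weakly contained in an amplification of $\Trip{\LL{G}{A}{\alpha}}{\Xi}{\U}$: starting from such a triple, one realizes it inside a direct sum of copies of $\LL{G}{A}{\alpha}$, forcing $\Norm{\func{\overline{\Pi}_{\Y,\pi,R}}{f}}_{\Adj{\Y}} \leq \Norm{\func{\overline{\Pi}_{\LL{G}{A}{\alpha},\Xi,\U}}{f}}_{\Adj{\LL{G}{A}{\alpha}}}$ for all $f \in \Cc{G,\Co{G,A}}$. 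Once injectivity is secured, $\overline{\Pi}_{\LL{G}{A}{\alpha},\Xi,\U}$ is a $C^\ast$-isomorphism onto $\Comp{\LL{G}{A}{\alpha}}$, and the standard imprimitivity-bimodule structure on a full Hilbert module over its compact operators yields each item in the statement, with positivity of the left inner product automatic from the compatibility identity ${}_{\Cstar{G,\Co{G,A},\lt\otimes\alpha}}\Inner{\zeta}{\zeta} \cdot \zeta = \zeta \cdot \Inner{\zeta}{\zeta}_{A}$ and the already-established positivity on the right.
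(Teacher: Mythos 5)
The paper itself offers no proof of this statement: it is invoked as Green's Imprimitivity Theorem and the reader is referred to \cite{Gr,Wi}. Your outline therefore has to stand on its own, and it has a genuine gap at its only hard point. The parts you work out (well-definedness and non-degeneracy of $ \Xi $, covariance of $ \Trip{\LL{G}{A}{\alpha}}{\Xi}{\U} $, the Fubini computation sending the proposed left inner products to the operators $ \Theta_{\func{q}{\phi},\func{q}{\psi}} $, hence surjectivity of $ \overline{\Pi}_{\LL{G}{A}{\alpha},\Xi,\U} $ onto $ \Comp{\LL{G}{A}{\alpha}} $) are indeed routine, but none of that is where the theorem lives. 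The substance is exactly the step you call the main obstacle: injectivity of $ \overline{\Pi}_{\LL{G}{A}{\alpha},\Xi,\U} $ on the full crossed product, i.e.\ that the universal norm on $ \Cc{G,\Co{G,A}} $ is attained in this single covariant modular representation. Your plan for it --- ``realize every $ \Trip{G}{\Co{G,A}}{\lt \otimes \alpha} $-covariant modular representation inside a direct sum of copies of $ \LL{G}{A}{\alpha} $'' --- is not an argument but a restatement of the imprimitivity theorem in its Mackey form (in fact of a Hilbert-module strengthening of it): it is essentially the conclusion that this paper extracts \emph{from} Green's theorem in \autoref{The Covariant Stone-von Neumann Theorem}, via \autoref{The Covariant Pre-Stone-von Neumann Theorem} and \autoref{A Decomposition Result for Non-Degenerate *-Representations of K(X)}. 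You give no mechanism for producing such an embedding from an abstract triple $ \Trip{\Y}{\pi}{R} $, and producing one is exactly as hard as the theorem itself; as written, the proposal is circular at the decisive step.

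For contrast, the proofs in \cite{Gr,Wi,RaWi} never route through faithfulness of a distinguished representation. One verifies directly that $ \Cc{G,A} $ is a pre-imprimitivity bimodule over the dense $ \ast $-subalgebra $ \Cc{G,\Co{G,A}} $ and $ A $: the compatibility identities, boundedness, fullness on both sides, and --- the genuinely delicate point --- positivity of the $ \Cstar{G,\Co{G,A},\lt \otimes \alpha} $-valued inner product with respect to the \emph{universal} norm, which is a statement about all covariant representations at once and cannot be checked only in the canonical one. Completing then yields the imprimitivity bimodule, and faithfulness of the left action (your injectivity) falls out of the general theory of imprimitivity bimodules rather than being an input. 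Note also that your closing remark that left positivity is automatic from the compatibility identity only yields positivity of the corresponding operator in $ \Adj{\LL{G}{A}{\alpha}} $; promoting that to positivity of the element of $ \Cstar{G,\Co{G,A},\lt \otimes \alpha} $ again presupposes the injectivity you have not established. To repair the proposal, either carry out the $ \Cc $-level positivity and fullness verification as in the cited sources, or supply an independent proof of the weak-containment claim; at present it is assumed, not proved.
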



Complete proofs of Green's Imprimitivity Theorem may be found in \cite{Gr,Wi}.


\begin{Prop} \label{The Covariant Pre-Stone-von Neumann Theorem}
Let $ \Trip{G}{A}{\alpha} $ be a $ C^{\ast} $-dynamical system with $ G $ abelian. Recalling the $ \Trip{G}{A}{\alpha} $-Schr\"odinger modular representation $ \Quad{\LL{G}{A}{\alpha}}{\M}{\U}{\V} $, and letting $ \nu $ be any Haar measure on $ \widehat{G} $, we have that
$$
\overline{\Pi}_{\LL{G}{A}{\alpha},\pi^{\LL{G}{A}{\alpha},\M,\V}_{\nu},\U}: \Cstar{G,\Co{G,A},\lt \otimes \alpha} \to \Adj{\LL{G}{A}{\alpha}}
$$
is an injective $ C^{\ast} $-homomorphism and that $ \Range{\Pi_{\LL{G}{A}{\alpha},\pi^{\LL{G}{A}{\alpha},\M,\V}_{\nu},\U}} = \Comp{\LL{G}{A}{\alpha}} $.
\end{Prop}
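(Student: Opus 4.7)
The plan is to reduce the proposition to Green's Imprimitivity Theorem by identifying $\pi^{\LL{G}{A}{\alpha},\M,\V}_{\nu}$ with the representation $\Xi$ appearing in that theorem. Once this identification is made, the integrated form $\overline{\Pi}_{\LL{G}{A}{\alpha},\pi^{\LL{G}{A}{\alpha},\M,\V}_{\nu},\U}$ will literally equal $\overline{\Pi}_{\LL{G}{A}{\alpha},\Xi,\U}$, which by \autoref{Green's Imprimitivity Theorem} is the left action of $\Cstar{G,\Co{G,A},\lt \otimes \alpha}$ on the imprimitivity bimodule $\LL{G}{A}{\alpha}$. Because $\LL{G}{A}{\alpha}$ is a $\Pair{\Cstar{G,\Co{G,A},\lt \otimes \alpha}}{A}$-imprimitivity bimodule, its left action is a $C^{\ast}$-isomorphism of $\Cstar{G,\Co{G,A},\lt \otimes \alpha}$ onto $\Comp{\LL{G}{A}{\alpha}}$, which yields both the injectivity of the extension and the statement that the range fills up $\Comp{\LL{G}{A}{\alpha}}$.

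To establish the identification $\pi^{\LL{G}{A}{\alpha},\M,\V}_{\nu} = \Xi$, I would fix $f \in \Cc{\widehat{G},A}$ and $\phi \in \Cc{G,A}$ and compute directly. Using the defining formulas for $\M$ and $\V$, one sees that $\func{\M}{\func{f}{\varphi}} \func{\V}{\varphi} \func{q}{\phi} = \func{q}{\func{f}{\varphi} \cdot \varphi \cdot \phi}$, where the latter is the function $x \mapsto \func{f}{\varphi} \func{\varphi}{x} \func{\phi}{x}$. Integrating in $\varphi$ and using the fact that $q$ is a bounded linear map (hence commutes with Bochner integration of compactly supported continuous integrands), one obtains $\func{\Pi_{\LL{G}{A}{\alpha},\M,\V}}{f}\func{q}{\phi} = \func{q}{\FT{\nu}{A}{f} \cdot \phi}$, since pointwise the integrand collapses via $\ds \int_{\widehat{G}} \func{f}{\varphi} \func{\varphi}{x} \, \d{\func{\nu}{\varphi}} = \FUNC{\FT{\nu}{A}{f}}{x}$. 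The right-hand side is exactly $\func{\Xi}{\FT{\nu}{A}{f}} \func{q}{\phi}$. Since $\func{q}{\Cc{G,A}}$ is dense in $\LL{G}{A}{\alpha}$, this forces $\func{\overline{\Pi}_{\LL{G}{A}{\alpha},\M,\V}}{\func{\eta_{\Trip{\widehat{G}}{A}{\iota}}}{f}} = \func{\Xi}{\FT{\nu}{A}{f}}$, and since the image of $\F{\nu}{A}$ is dense in $\Co{G,A}$ by \autoref{The Generalized Fourier Transform Extends to a C*-Isomorphism}, continuity upgrades the equality to $\pi^{\LL{G}{A}{\alpha},\M,\V}_{\nu} = \Xi$ on all of $\Co{G,A}$.

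With the identification in hand, the proposition becomes automatic: $\overline{\Pi}_{\LL{G}{A}{\alpha},\pi^{\LL{G}{A}{\alpha},\M,\V}_{\nu},\U} = \overline{\Pi}_{\LL{G}{A}{\alpha},\Xi,\U}$, and \autoref{Green's Imprimitivity Theorem} delivers injectivity (as an imprimitivity bimodule's left action is faithful) and surjectivity onto $\Comp{\LL{G}{A}{\alpha}}$. For the range assertion, note that the image of $\Pi$ on $\Cc{G,\Co{G,A}}$ is a dense $\ast$-subalgebra of $\Comp{\LL{G}{A}{\alpha}}$, and in particular its closure (which is the range of $\overline{\Pi}$) is all of $\Comp{\LL{G}{A}{\alpha}}$.

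The main obstacle is the computational step identifying $\pi^{\LL{G}{A}{\alpha},\M,\V}_{\nu}$ with $\Xi$, specifically the justification for moving $q$ outside the integral over $\widehat{G}$. This is technically routine because the $\widehat{G}$-integrand is norm-continuous, compactly supported (uniformly in the $\Cc{G,A}$-variable, since $\Supp{\func{f}{\varphi} \varphi \phi} \subseteq \Supp{\phi}$ for every $\varphi$), and $q$ is continuous for the inductive-limit topology on $\Cc{G,A}$; alternatively one can test against $\func{q}{\psi}$ inner products and invoke Fubini.
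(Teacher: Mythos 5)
Your proposal is correct and follows essentially the same route as the paper: identify $ \pi^{\LL{G}{A}{\alpha},\M,\V}_{\nu} $ with $ \Xi $ by computing on $ \func{q}{\Cc{G,A}} $ (the paper justifies the interchange of $ q $ with the $ \widehat{G} $-integral precisely by the inner-product/Fubini argument you offer as an alternative), then invoke Green's Imprimitivity Theorem together with the standard fact (Proposition 3.8 of \cite{RaWi}) that the left action of an imprimitivity bimodule is a $ C^{\ast} $-isomorphism onto the compacts. No substantive differences.
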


\begin{proof}
Firstly, we show that $ \pi^{\LL{G}{A}{\alpha},\M,\V}_{\nu} = \Xi $. Let $ f \in \Cc{\widehat{G},A} $ and $ \phi \in \Cc{G,A} $. Then
\begin{align*}
    \FUNC{\func{\pi^{\LL{G}{A}{\alpha},\M,\V}_{\nu}}{\FT{\nu}{A}{f}}}{\func{q}{\phi}}
& = \FUNC{\func{\Pi_{\LL{G}{A}{\alpha},\M,\V}}{f}}{\func{q}{\phi}} \\
& = \Int{\widehat{G}}{\FUNC{\func{\M}{\func{f}{\varphi}} \circ \func{\V}{\varphi}}{\func{q}{\phi}}}{\nu}{\varphi} \\
& = \Int{\widehat{G}}{\func{q}{\func{f}{\varphi} \Br{\varphi \cdot \phi}}}{\nu}{\varphi}.
\end{align*}
The last integral looks like it should be $ \func{q}{\FT{\nu}{A}{f} \phi} $, and indeed it is, but we have to exercise some caution in justifying our guess. By Fubini's Theorem, we have for all $ \psi \in \Cc{G,A} $ that
\begin{align*}
    \Inner{\func{q}{\psi}}{\Int{\widehat{G}}{\func{q}{\func{f}{\varphi} \Br{\varphi \cdot \phi}}}{\nu}{\varphi}}_{\LL{G}{A}{\alpha}}
& = \Int{\widehat{G}}{\Inner{\func{q}{\psi}}{\func{q}{\func{f}{\varphi} \Br{\varphi \cdot \phi}}}_{\LL{G}{A}{\alpha}}}{\nu}{\varphi} \\
& = \Int{\widehat{G}}
        {\SqBr{\Int{G}{\alphArg{x^{- 1}}{\func{\psi}{x}^{\ast} \func{f}{\varphi} \SqBr{\func{\varphi}{x} \cdot \func{\phi}{x}}}}{\mu}{x}}}
        {\nu}{\varphi} \\
& = \Int{\widehat{G}}
        {\SqBr{\Int{G}{\alphArg{x^{- 1}}{\func{\psi}{x}^{\ast} \SqBr{\func{\varphi}{x} \cdot \func{f}{\varphi} \func{\phi}{x}}}}{\mu}{x}}}
        {\nu}{\varphi} \\
& = \Int{G}
        {
        \SqBr{
             \Int{\widehat{G}}
                 {\alphArg{x^{- 1}}{\func{\psi}{x}^{\ast} \SqBr{\func{\varphi}{x} \cdot \func{f}{\varphi} \func{\phi}{x}}}}
                 {\nu}{\varphi}
             }
        }
        {\mu}{x} \\
& = \Int{G}
        {
        \alphArg{x^{- 1}}
                {\func{\psi}{x}^{\ast} \SqBr{\Int{\widehat{G}}{\func{\varphi}{x} \cdot \func{f}{\varphi}}{\nu}{\varphi}} \func{\phi}{x}}
        }
        {\mu}{x} \\
& = \Int{G}{\alphArg{x^{- 1}}{\func{\psi}{x}^{\ast} \FUNC{\FT{\nu}{A}{f}}{x} \func{\phi}{x}}}{\mu}{x} \\
& = \Inner{\func{q}{\psi}}{\func{q}{\FT{\nu}{A}{f} \phi}}_{\LL{G}{A}{\alpha}}.
\end{align*}
This establishes the validity of our guess. Hence, $ \func{\pi^{\LL{G}{A}{\alpha},\M,\V}_{\nu}}{\FT{\nu}{A}{f}} = \func{\Xi}{\FT{\nu}{A}{f}} $ for all $ f \in \Cc{\widehat{G},A} $, and as the range of $ \F{\nu}{A} $ is dense in $ \Co{G,A} $, we conclude that $ \pi^{\LL{G}{A}{\alpha},\M,\V}_{\nu} = \Xi $.

It now follows from Green's Imprimitivity Theorem and Proposition 3.8 of \cite{RaWi} that
$$
\overline{\Pi}_{\LL{G}{A}{\alpha},\pi^{\LL{G}{A}{\alpha},\M,\V}_{\nu},\U}: \Cstar{G,\Co{G,A},\lt \otimes \alpha} \to \Adj{\LL{G}{A}{\alpha}}
$$
is an injective $ C^{\ast} $-homomorphism whose range is $ \Comp{\LL{G}{A}{\alpha}} $.
\end{proof}



\begin{Def}
Let $ \Trip{G}{A}{\alpha} $ and $ \Trip{G}{A}{\beta} $ be $ C^{\ast} $-dynamical systems with $ G $ abelian. Let $ \Quad{\X}{\rho}{R}{S} $ be a $ \Trip{G}{A}{\alpha} $-Heisenberg modular representation; $ \Quad{\Y}{\sigma}{T}{U} $ a $ \Trip{G}{A}{\beta} $-Heisenberg modular representation. We say that $ \Quad{\X}{\rho}{R}{S} $ is \emph{unitarily equivalent} to $ \Quad{\Y}{\sigma}{T}{U} $ if and only if there exists a $ W \in \Unitary{\X,\Y} $ such that
$$
W \func{R}{x} W^{\ast}        = \func{T}{x},      \qquad
W \func{S}{\varphi} W^{\ast} = \func{U}{\varphi}, \qquad
W \func{\rho}{a} W^{\ast}     = \func{\sigma}{a}
$$
for all $ x \in G $, $ \varphi \in \widehat{G} $, and $ a \in A $, in which case we write $ \Quad{\X}{\rho}{R}{S} \sim_{W} \Quad{\Y}{\sigma}{T}{U} $.
\end{Def}



\begin{Def} \label{The von Neumann Uniqueness Property}
Let $ \Trip{G}{A}{\alpha} $ be a $ C^{\ast} $-dynamical system with $ G $ abelian. We say that $ \Trip{G}{A}{\alpha} $ has the \emph{von Neumann Uniqueness Property} if and only if each $ \Trip{G}{A}{\alpha} $-Heisenberg modular representation is unitarily equivalent to a direct sum of copies of the $ \Trip{G}{A}{\alpha} $-Schr\"odinger modular representation.
\end{Def}


We are now ready to prove the main result of this paper.



\begin{Prop}[The Covariant Stone-von Neumann Theorem] \label{The Covariant Stone-von Neumann Theorem}
Every $ C^{\ast} $-dynamical system of the form $ \Trip{G}{\Comp{\H}}{\alpha} $, with $ G $ abelian, has the von Neumann Uniqueness Property.
\end{Prop}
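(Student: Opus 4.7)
The plan is to chain together Green's Imprimitivity Theorem, \autoref{The Covariant Pre-Stone-von Neumann Theorem}, and the decomposition result \autoref{A Decomposition Result for Non-Degenerate *-Representations of K(X)}. Fix a Haar measure $ \nu $ on $ \widehat{G} $ and an arbitrary $ \Trip{G}{\Comp{\H}}{\alpha} $-Heisenberg modular representation $ \Quad{\X}{\rho}{R}{S} $. First I would feed it through the injective map of Section~3 to obtain the $ \Trip{G}{\Co{G,\Comp{\H}}}{\lt \otimes \alpha} $-covariant modular representation $ \Trip{\X}{\pi^{\X,\rho,S}_{\nu}}{R} $, and then invoke \autoref{The Covariant Pre-Stone-von Neumann Theorem} to identify $ \Cstar{G,\Co{G,\Comp{\H}},\lt \otimes \alpha} $ with $ \Comp{\LL{G}{\Comp{\H}}{\alpha}} $ through the $ C^{\ast} $-isomorphism $ \overline{\Pi}_{\LL{G}{\Comp{\H}}{\alpha},\Xi,\U} $, where $ \Xi = \pi^{\LL{G}{\Comp{\H}}{\alpha},\M,\V}_{\nu} $ as shown inside the proof of that proposition. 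Composing would yield a non-degenerate $ \ast $-representation $ \Phi \df \overline{\Pi}_{\X,\pi^{\X,\rho,S}_{\nu},R} \circ \overline{\Pi}_{\LL{G}{\Comp{\H}}{\alpha},\Xi,\U}^{-1} $ of $ \Comp{\LL{G}{\Comp{\H}}{\alpha}} $ on the non-trivial Hilbert $ \Comp{\H} $-module $ \X $.

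Since $ \LL{G}{\Comp{\H}}{\alpha} $ is likewise a non-trivial Hilbert $ \Comp{\H} $-module, \autoref{A Decomposition Result for Non-Degenerate *-Representations of K(X)} then supplies an index set $ I $ and a unitary $ W \in \Unitary{\bigoplus_{i \in I} \LL{G}{\Comp{\H}}{\alpha},\X} $ such that $ \func{\Phi}{T} = W \SqBr{\bigoplus_{i \in I} T} W^{\ast} $ for every $ T \in \Comp{\LL{G}{\Comp{\H}}{\alpha}} $. Translating this identity back through the Green--Pre-Stone-von Neumann isomorphism, $ W $ intertwines $ \bigoplus_{i \in I} \func{\overline{\Pi}_{\LL{G}{\Comp{\H}}{\alpha},\Xi,\U}}{f} $ with $ \func{\overline{\Pi}_{\X,\pi^{\X,\rho,S}_{\nu},R}}{f} $ for every $ f \in \Cstar{G,\Co{G,\Comp{\H}},\lt \otimes \alpha} $.

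The rest is extraction. By \autoref{Recovering a Covariant Modular Representation from Its Integrated Form}, the integrated form of a covariant modular representation strongly determines its underlying constituent representations, so the intertwining above descends immediately to $ W \SqBr{\bigoplus_{i \in I} \func{\U}{x}} W^{\ast} = \func{R}{x} $ for all $ x \in G $ and $ W \SqBr{\bigoplus_{i \in I} \func{\Xi}{g}} W^{\ast} = \func{\pi^{\X,\rho,S}_{\nu}}{g} $ for all $ g \in \Co{G,\Comp{\H}} $. Precomposing the latter with the generalized Fourier isomorphism $ \overline{\F{\nu}{\Comp{\H}}}: \Cstar{\widehat{G},\Comp{\H},\iota} \to \Co{G,\Comp{\H}} $ and invoking $ \pi^{\X,\rho,S}_{\nu} \circ \overline{\F{\nu}{\Comp{\H}}} = \overline{\Pi}_{\X,\rho,S} $ together with $ \Xi \circ \overline{\F{\nu}{\Comp{\H}}} = \overline{\Pi}_{\LL{G}{\Comp{\H}}{\alpha},\M,\V} $ gives an intertwining of the integrated forms of $ \Trip{\LL{G}{\Comp{\H}}{\alpha}}{\M}{\V} $ and $ \Trip{\X}{\rho}{S} $, to which a second application of \autoref{Recovering a Covariant Modular Representation from Its Integrated Form} yields $ W \SqBr{\bigoplus_{i \in I} \func{\M}{a}} W^{\ast} = \func{\rho}{a} $ for all $ a \in \Comp{\H} $ and $ W \SqBr{\bigoplus_{i \in I} \func{\V}{\varphi}} W^{\ast} = \func{S}{\varphi} $ for all $ \varphi \in \widehat{G} $. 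Taken together, these three intertwinings realize the desired equivalence $ \Quad{\X}{\rho}{R}{S} \sim_{W} \bigoplus_{i \in I} \Quad{\LL{G}{\Comp{\H}}{\alpha}}{\M}{\U}{\V} $.

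The hard part is conceptual, not computational: the Weyl relation together with the commutativity of $ S $ with $ \rho $ is what allows $ \rho $ and $ S $ to be repackaged via the generalized Fourier transform into a single non-degenerate $ \ast $-representation of $ \Co{G,\Comp{\H}} $ that is covariant for $ \lt \otimes \alpha $, and Green's Imprimitivity Theorem is what converts this into a $ \Comp{\LL{G}{\Comp{\H}}{\alpha}} $-representation on $ \X $. It is precisely the hypothesis $ A = \Comp{\H} $ that permits \autoref{A Decomposition Result for Non-Degenerate *-Representations of K(X)} to decompose this representation into canonical pieces; the remaining work is the bookkeeping required to keep the three distinct integrated forms (for $ R $, for $ S $, and for the full crossed product) aligned, and this is handled uniformly by the recovery lemma.
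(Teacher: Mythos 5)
Your proposal is correct and follows essentially the same route as the paper's own proof: pass through the injective map of Section~3, identify $\Cstar{G,\Co{G,\Comp{\H}},\lt \otimes \alpha}$ with $\Comp{\LL{G}{\Comp{\H}}{\alpha}}$ via \autoref{The Covariant Pre-Stone-von Neumann Theorem}, decompose with \autoref{A Decomposition Result for Non-Degenerate *-Representations of K(X)}, and recover $R$, $\pi^{\X,\rho,S}_{\nu}$, then (after precomposing with $\overline{\F{\nu}{\Comp{\H}}}$) $\rho$ and $S$ by two applications of \autoref{Recovering a Covariant Modular Representation from Its Integrated Form}. The only differences are cosmetic (the orientation of the unitary $W$), so nothing further is needed.
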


\begin{proof}
Let $ \Quad{\X}{\rho}{R}{S} $ be a $ \Trip{G}{\Comp{\H}}{\alpha} $-Heisenberg modular representation. According to \autoref{An Injective Map from the Class of Heisenberg Modular Representations to the Class of Covariant Modular Representations}, $ \Trip{\X}{\pi^{\X,\rho,S}_{\nu}}{R} $ is a $ \Trip{G}{\Co{G,A}}{\lt \otimes \alpha} $-covariant modular representation, so $ \overline{\Pi}_{\X,\pi^{\X,\rho,S}_{\nu},R} $ is a non-degenerate $ \ast $-representation of $ \Cstar{G,\Co{G,A},\lt \otimes \alpha} $ on $ \X $. However, \autoref{The Covariant Pre-Stone-von Neumann Theorem} says that
$$
\overline{\Pi}_{\LL{G}{A}{\alpha},\pi^{\LL{G}{A}{\alpha},\M,\V}_{\nu},\U}: \Cstar{G,\Co{G,A},\lt \otimes \alpha} \to \Comp{\LL{G}{A}{\alpha}}
$$
is a $ C^{\ast} $-isomorphism, so it follows from \autoref{A Decomposition Result for Non-Degenerate *-Representations of K(X)} that
$$
         \Pair{\X}
              {
              \overline{\Pi}_{\X,\pi^{\X,\rho,S}_{\nu},R} \circ
              \Br{\overline{\Pi}_{\LL{G}{A}{\alpha},\pi^{\LL{G}{A}{\alpha},\M,\V}_{\nu},\U}}^{- 1}
              }
\sim_{W} \bigoplus_{i \in I} \Pair{\LL{G}{A}{\alpha}}{i_{\Comp{\LL{G}{A}{\alpha}} \into \Adj{\LL{G}{A}{\alpha}}}}
$$
for some index set $ I $ and some $ \ds W \in \Unitary{\X,\bigoplus_{i \in I} \LL{G}{A}{\alpha}} $. We thus have
$$
  \FUNC{\overline{\Pi}_{\X,\pi^{\X,\rho,S}_{\nu},R} \circ \Br{\overline{\Pi}_{\LL{G}{A}{\alpha},\pi^{\LL{G}{A}{\alpha},\M,\V}_{\nu},\U}}^{- 1}}
       {T}
= W^{\ast} \Br{\oplus_{i \in I} T} W,
$$
for all $ T \in \Comp{\LL{G}{A}{\alpha}} $, or equivalently,
$$
  \func{\overline{\Pi}_{\X,\pi^{\X,\rho,S}_{\nu},R}}{F}
= W^{\ast} \SqBr{\oplus_{i \in i} \func{\overline{\Pi}_{\LL{G}{A}{\alpha},\pi^{\LL{G}{A}{\alpha},\M,\V}_{\nu},\U}}{F}} W
$$
for all $ F \in \Cstar{G,\Co{G,A},\lt \otimes \alpha} $. It follows from \autoref{Recovering a Covariant Modular Representation from Its Integrated Form} that
$$
\func{R}{x} = W^{\ast} \SqBr{\oplus_{i \in I} \func{\U}{x}} W
\qquad \text{and} \qquad
\func{\pi^{\X,\rho,S}_{\nu}}{g} = W^{\ast} \SqBr{\oplus_{i \in I} \func{\pi^{\LL{G}{A}{\alpha},\M,\V}_{\nu}}{g}} W
$$
for all $ x \in G $ and $ g \in \Co{G,A} $. However, as
$$
\pi^{\X,\rho,S}_{\nu} = \overline{\Pi}_{\X,\rho,S} \circ \overline{\F{\nu}{A}}^{- 1}
\qquad \text{and} \qquad
\pi^{\LL{G}{A}{\alpha},\M,\V}_{\nu} = \overline{\Pi}_{\LL{G}{A}{\alpha},\M,\V} \circ \overline{\F{\nu}{A}}^{- 1},
$$
we find that
$$
\func{\overline{\Pi}_{\X,\rho,S}}{f} = W^{\ast} \SqBr{\oplus_{i \in I} \func{\overline{\Pi}_{\LL{G}{A}{\alpha},\M,\V}}{f}} W
$$
for all $ f \in \Cstar{\widehat{G},A,\iota} $. Another application of \autoref{Recovering a Covariant Modular Representation from Its Integrated Form} yields
$$
\func{S}{\varphi} = W^{\ast} \SqBr{\oplus_{i \in I} \func{\V}{\varphi}} W
\qquad \text{and} \qquad
\func{\rho}{a} = W^{\ast} \SqBr{\oplus_{i \in I} \func{\M}{a}} W
$$
for all $ \varphi \in \widehat{G} $ and $ a \in A $. The covariant Stone-von Neumann Theorem is hereby established.
\end{proof}


Our method of proof in no way depended on the classical Stone-von Neumann Theorem, so it is a proper generalization in every way, as expressed by the corollary below.



\begin{Cor} \label{The Classical Stone-von Neumann Theorem}
The classical Stone-von Neumann Theorem is precisely the case when $ \H = \C $ (any strongly-continuous action of a locally compact Hausdorff group on $ \C $ is necessarily trivial).
\end{Cor}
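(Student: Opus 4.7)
The strategy is to specialize \autoref{The Covariant Stone-von Neumann Theorem} to $ \H = \C $ and verify that every piece of data reduces to the corresponding item in \autoref{The Stone-von Neumann Theorem}. First I would justify the parenthetical remark: since any $ \ast $-automorphism of $ \C $ is the identity, the group $ \operatorname{Aut}(\C) $ of $ C^{\ast} $-automorphisms is trivial, so any action of $ G $ on $ \C $ (strongly continuous or not) must be the trivial one $ \iota $. Noting that $ \Comp{\C} = \C $, the dynamical system in question is therefore $ \Trip{G}{\C}{\iota} $, and a Hilbert $ \C $-module is literally a Hilbert space (equipped with its inner product).

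Next I would unpack a $ \Trip{G}{\C}{\iota} $-Heisenberg modular representation $ \Quad{\X}{\rho}{R}{S} $. The non-degeneracy of the $ \ast $-representation $ \rho: \C \to \Adj{\X} $ forces $ \func{\rho}{1_{\C}} = \Id_{\X} $, and hence $ \func{\rho}{\lambda} = \lambda \cdot \Id_{\X} $ for all $ \lambda \in \C $. Both covariance relations in \autoref{Heisenberg Modular Representation} then hold automatically, as scalar operators commute with everything and $ \iota $ is trivial. Thus the data of a Heisenberg modular representation collapses to a triple $ \Trip{\X}{R}{S} $ with $ \X $ a Hilbert space and $ \Pair{R}{S} $ a pair of strongly-continuous unitary representations of $ G $ and $ \widehat{G} $ satisfying the Weyl Commutation Relation --- precisely the hypothesis of \autoref{The Stone-von Neumann Theorem}. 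The fullness axiom is also automatic, being equivalent to $ \X \neq \SSet{0} $.

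Finally I would identify the Schr\"odinger modular representation. With $ \alpha = \iota $, the $ \C $-valued pre-inner product on $ \Cc{G,\C} = \Cc{G} $ defined in the Preliminaries specializes to $ \SqInner{\phi}{\psi} = \Int{G}{\overline{\func{\phi}{x}} \func{\psi}{x}}{\mu}{x} $, so $ \LL{G}{\C}{\iota} $ is canonically $ \L{2}{G} $. Under this identification, the defining formulas of Section 3 give $ \U = \U_{G} $ (left translations) and $ \V = \V_{G} $ (phase modulations), while $ \M $ acts by scalar multiplication (which coincides with the forced form of $ \rho $). Applying \autoref{The Covariant Stone-von Neumann Theorem} then yields a unitary equivalence of $ \Quad{\X}{\rho}{R}{S} $ with a direct sum of copies of $ \Quad{\L{2}{G}}{\M}{\U_{G}}{\V_{G}} $; dropping the redundant scalar components $ \rho $ and $ \M $, this reads verbatim as \autoref{The Stone-von Neumann Theorem}. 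I do not anticipate any substantive obstacle; the only care needed is in checking that the direct-sum decomposition produced for the full quadruple restricts to one for the triple $ \Trip{\X}{R}{S} $, which is immediate because $ \rho $ respects every orthogonal decomposition of $ \X $.
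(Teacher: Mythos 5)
Your specialization is correct and is exactly the reduction the paper has in mind: the corollary is stated without proof precisely because, as you verify, with $ \H = \C $ one has $ \Comp{\C} = \C $, non-degeneracy forces $ \rho $ to be scalar multiplication so both covariance relations and fullness become vacuous (given $ \X \neq \SSet{0} $), and $ \Quad{\LL{G}{\C}{\iota}}{\M}{\U}{\V} $ is the classical Schr\"odinger triple on $ \L{2}{G} $. Your closing check that the intertwiner for the quadruple restricts to one for $ \Trip{\X}{R}{S} $ is the only point needing mention, and you handle it correctly.
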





\section{The Non-Triviality of the Covariant Stone-von Neumann Theorem}


One may now ask, ``Does the covariant Stone-von Neumann Theorem really say anything new? Is there a unitary transformation that reduces it to the case of the trivial action of $ G $ on $ \Comp{\H} $?'' The following result makes this question an extremely valid one.



\begin{Prop} \label{Untwisting L2(G,A,alpha)}
Let $ \Trip{G}{A}{\alpha} $ be a $ C^{\ast} $-dynamical system with $ G $ not assumed to be abelian. Then there is a Hilbert $ A $-module isomorphism $ \Omega: \LL{G}{A}{\alpha} \to \LL{G}{A}{\iota} $ that satisfies
$$
\forall \phi \in \Cc{G,A}: \qquad
\func{\Omega}{\func{q_{\Trip{G}{A}{\alpha}}}{\phi}} = \func{q_{\Trip{G}{A}{\iota}}}{\Map{G}{A}{x}{\alphArg{x^{- 1}}{\func{f}{x}}}}.
$$
\end{Prop}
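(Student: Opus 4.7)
The plan is to construct $\Omega$ by first defining a map $\widetilde{\Omega}: \Cc{G,A} \to \Cc{G,A}$ at the level of continuous compactly supported functions by the formula
$$
\forall \phi \in \Cc{G,A}, ~ \forall x \in G: \qquad
\FUNC{\widetilde{\Omega}\func{\phi}{}}{x} \df \alphArg{x^{-1}}{\func{\phi}{x}},
$$
and then arguing that it intertwines the right $A$-actions and preserves the $A$-valued pre-inner products that define $\LL{G}{A}{\alpha}$ and $\LL{G}{A}{\iota}$ respectively. Once that is done, $\widetilde{\Omega}$ extends uniquely by continuity/density to the promised Hilbert $A$-module isomorphism $\Omega$.

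The first step is to check that $\widetilde{\Omega}$ really does map $\Cc{G,A}$ into itself. The joint continuity of $\Pair{x}{a} \mapsto \alphArg{x}{a}$, together with the continuity of $x \mapsto x^{-1}$ and of $\phi$, makes $x \mapsto \alphArg{x^{-1}}{\func{\phi}{x}}$ continuous, and its support is contained in $\Supp{\phi}$ because $\alphArg{x^{-1}}{0_A} = 0_A$. The map $\widetilde{\Omega}$ is clearly $\C$-linear, and it is bijective with inverse $\phi \mapsto \SqBr{x \mapsto \alphArg{x}{\func{\phi}{x}}}$.

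The second step is the algebraic verification of compatibility with the module and inner-product structures. For the right action, for any $\phi \in \Cc{G,A}$ and $a \in A$,
$$
\FUNC{\widetilde{\Omega}\Br{\phi \bullet_{\alpha} a}}{x}
= \alphArg{x^{-1}}{\func{\phi}{x} \alphArg{x}{a}}
= \alphArg{x^{-1}}{\func{\phi}{x}} a
= \FUNC{\widetilde{\Omega}\func{\phi}{} \bullet_{\iota} a}{x},
$$
using that $\alpha_{x^{-1}}$ is a $\ast$-homomorphism. For the inner product, for any $\phi,\psi \in \Cc{G,A}$,
$$
\SqInner{\widetilde{\Omega}\func{\phi}{}}{\widetilde{\Omega}\func{\psi}{}}_{\iota}
= \Int{G}{\alphArg{x^{-1}}{\func{\phi}{x}}^{\ast} \alphArg{x^{-1}}{\func{\psi}{x}}}{\mu}{x}
= \Int{G}{\alphArg{x^{-1}}{\func{\phi}{x}^{\ast} \func{\psi}{x}}}{\mu}{x}
= \SqInner{\phi}{\psi}_{\alpha}.
$$
In particular, $\widetilde{\Omega}$ is isometric between the two pre-Hilbert $A$-module norms on $\Cc{G,A}$.

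The third and final step is the extension. Since $q_{\Trip{G}{A}{\alpha}}\func{\Cc{G,A}}{}$ is dense in $\LL{G}{A}{\alpha}$ and $q_{\Trip{G}{A}{\iota}}\func{\Cc{G,A}}{}$ is dense in $\LL{G}{A}{\iota}$, the isometric bijection $\widetilde{\Omega}$ extends uniquely to a bijective isometry $\Omega: \LL{G}{A}{\alpha} \to \LL{G}{A}{\iota}$ satisfying the required formula on $q_{\Trip{G}{A}{\alpha}}\func{\Cc{G,A}}{}$. Right $A$-linearity and inner-product preservation pass to the completions by continuity, so $\Omega$ is a Hilbert $A$-module isomorphism. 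There is no substantive obstacle here: the only subtlety worth flagging is ensuring that $\widetilde{\Omega}\func{\phi}{}$ is actually continuous on $G$, which is where the strong continuity of $\alpha$ enters in an essential way, but this is standard.
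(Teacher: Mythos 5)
Your proof is correct and is exactly the verification the paper has in mind: the paper itself dispenses with this proposition as ``an easy verification that we leave to the reader,'' and your map $\phi \mapsto \SqBr{x \mapsto \alphArg{x^{-1}}{\func{\phi}{x}}}$, with the checks that it intertwines the right $A$-actions, preserves the $A$-valued inner products, and extends by density to a unitary module isomorphism, is precisely that verification.
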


\begin{proof}
This is an easy verification that we leave to the reader.
\end{proof}


Even though $ \LL{G}{A}{\alpha} $ is isomorphic to $ \LL{G}{A}{\iota} $, note that the covariant Stone-von Neumann Theorem is not a statement about the unitary equivalence of Hilbert $ C^{\ast} $-modules, but a statement about the unitary equivalence of Heisenberg modular representations. Having said this, the next two results give a complete answer to the question above.



\begin{Prop} \label{A Non-Unitary-Equivalence Result}
Let $ \Trip{G}{A}{\alpha} $ and $ \Trip{G}{A}{\beta} $ be $ C^{\ast} $-dynamical systems, with $ G $ abelian and $ \alpha \neq \beta $. Then a direct sum of copies of the $ \Trip{G}{A}{\alpha} $-Schr\"odinger modular representation cannot be unitarily equivalent to a direct sum of copies of the $ \Trip{G}{A}{\beta} $-Schr\"odinger modular representation.
\end{Prop}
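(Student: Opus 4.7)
The plan is to derive a contradiction from the supposed unitary equivalence by exploiting the covariance relation between $\M$ and $\U$, which encodes the action $\alpha$ (resp. $\beta$) in an essentially injective way.

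Suppose, for contradiction, that there exist index sets $I$, $J$ and a unitary
$$
W \in \Unitary{\bigoplus_{i \in I} \LL{G}{A}{\alpha},\, \bigoplus_{j \in J} \LL{G}{A}{\beta}}
$$
intertwining the direct sums of the two Schrödinger modular representations. So $W(\oplus_{i} \M^{\alpha}(a)) = (\oplus_j \M^{\beta}(a)) W$ and $W(\oplus_i \U^{\alpha}(x)) = (\oplus_j \U^{\beta}(x)) W$ for all $a \in A$ and $x \in G$. The first step is to apply the covariance relation on the $\alpha$-side (Proposition~3), namely $\U^{\alpha}(x)\M^{\alpha}(a) = \M^{\alpha}(\alpha_x(a))\U^{\alpha}(x)$, which passes to direct sums. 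Conjugating by $W$ and using the intertwining property yields
$$
\bigl(\oplus_j \U^{\beta}(x)\bigr)\bigl(\oplus_j \M^{\beta}(a)\bigr) \;=\; \bigl(\oplus_j \M^{\beta}(\alpha_x(a))\bigr)\bigl(\oplus_j \U^{\beta}(x)\bigr).
$$
On the other hand, the corresponding covariance relation on the $\beta$-side (again by Proposition~3) gives the same left-hand side equal to $(\oplus_j \M^{\beta}(\beta_x(a)))(\oplus_j \U^{\beta}(x))$. Right-multiplying by the unitary $(\oplus_j \U^{\beta}(x))^{\ast}$ leaves
$$
\oplus_j \M^{\beta}(\alpha_x(a)) \;=\; \oplus_j \M^{\beta}(\beta_x(a)),
$$
and restricting to a single summand gives $\M^{\beta}(\alpha_x(a)) = \M^{\beta}(\beta_x(a))$ for every $x \in G$ and $a \in A$.

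The second step is to observe that $\M^{\beta}: A \to \Adj{\LL{G}{A}{\beta}}$ is injective. Indeed, if $\M^{\beta}(a) = 0$, then $q_{\Trip{G}{A}{\beta}}(a\phi) = 0$ for every $\phi \in \Cc{G,A}$. Choosing $\phi$ of the form $\varphi \diamond e_{\lambda}$ for $\varphi \in \Cc{G,\R_{\geq 0}}$ with $\int_G \varphi \,\d\mu = 1$ and $\Seq{e_{\lambda}}{\lambda \in \Lambda}$ an approximate identity in $A$, a direct computation analogous to the fullness argument in the proof of Proposition~1 shows that $a e_{\lambda} \to a$ while $\|q_{\Trip{G}{A}{\beta}}(a \cdot (\varphi \diamond e_{\lambda}))\| \to 0$ forces $a = 0$. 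Combining with the previous step, we conclude $\alpha_x(a) = \beta_x(a)$ for all $x \in G$ and $a \in A$, i.e.\ $\alpha = \beta$, contradicting our hypothesis.

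I expect the mechanical manipulation with direct sums to be routine, and the crux is really just isolating the covariance relation as the carrier of the action. The only point requiring mild care is the injectivity of $\M^{\beta}$; the rest is a clean conjugation-and-cancellation argument. No difficulty is anticipated from the abelian/non-abelian or the Hilbert module vs.\ Hilbert space distinction, since the covariance relation itself is the sole ingredient used.
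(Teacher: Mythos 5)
Your argument is essentially the paper's: conjugating the covariance relation by $W$ and cancelling the unitary $\oplus_{j}\U^{\beta}(x)$ to arrive at $\func{\M^{\beta}}{\alphArg{x}{a}} = \func{\M^{\beta}}{\betArg{x}{a}}$ is exactly the computation in the paper, which then concludes by applying both operators to $\func{q}{\phi}$ and using that $\phi \in \Cc{G,A}$ may take any prescribed value at any point (together with the definiteness of the pre-inner product on $\Cc{G,A}$, i.e.\ injectivity of $q$). The only place you diverge is in packaging this last step as ``$\M^{\beta}$ is injective,'' and there your sketch is imprecise: with $\phi = \varphi \diamond e_{\lambda}$ the twist $\beta_{x^{-1}}$ does \emph{not} cancel the way it does in the fullness computation (which built $\alpha_{x}(\cdot)$ into the test functions precisely for that cancellation); one gets $\Norm{\func{q}{a\phi}}^{2} = \Norm{\int_{G} \func{\varphi}{x}^{2}\, \betArg{x^{-1}}{e_{\lambda} a^{\ast} a e_{\lambda}}\, \mathrm{d}\func{\mu}{x}}_{A}$, and deducing $a e_{\lambda} = 0$ from its vanishing still requires a positivity argument (choose a state positive on $\betArg{x_{0}^{-1}}{e_{\lambda} a^{\ast} a e_{\lambda}}$ at a point where $\varphi > 0$ and use continuity), equivalently the same definiteness of the inner product on $\Cc{G,A}$ that the paper invokes implicitly; alternatively use test functions $x \mapsto \sqrt{\func{\varphi}{x}}\,\betArg{x}{e_{\lambda}}$ with $\varphi$ concentrating at $e_{G}$. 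Once that detail is supplied, your proof is complete and coincides with the paper's.
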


\begin{proof}
By way of contradiction, suppose that there are index sets $ I $ and $ J $ such that
$$
         \bigoplus_{i \in I} \Quad{\LL{G}{A}{\alpha}}{\M^{\Trip{G}{A}{\alpha}}}{\U^{\Trip{G}{A}{\alpha}}}{\V^{\Trip{G}{A}{\alpha}}}
\sim_{W} \bigoplus_{j \in J} \Quad{\LL{G}{A}{\beta}}{\M^{\Trip{G}{A}{\beta}}}{\U^{\Trip{G}{A}{\beta}}}{\V^{\Trip{G}{A}{\beta}}}
$$
for some $ \ds W \in \Unitary{\bigoplus_{i \in i} \LL{G}{A}{\alpha},\bigoplus_{j \in J} \LL{G}{A}{\beta}} $. Then we have for all $ x \in G $ and $ a \in A $ that
\begin{align*}
    \func{\U^{\Trip{G}{A}{\alpha}}}{x} \func{\M^{\Trip{G}{A}{\alpha}}}{a}
& = \func{\M^{\Trip{G}{A}{\alpha}}}{\alphArg{x}{a}} \func{\U^{\Trip{G}{A}{\alpha}}}{x}, \\
    \func{\U^{\Trip{G}{A}{\beta}}}{x} \func{\M^{\Trip{G}{A}{\beta}}}{a}
& = \func{\M^{\Trip{G}{A}{\beta}}}{\betArg{x}{a}} \func{\U^{\Trip{G}{A}{\beta}}}{x}, \\
    W \SqBr{\oplus_{i \in I} \func{\U^{\Trip{G}{A}{\alpha}}}{x}} W^{\ast}
& = \oplus_{j \in J} \func{\U^{\Trip{G}{A}{\beta}}}{x}, \\
    W \SqBr{\oplus_{i \in I} \func{\M^{\Trip{G}{A}{\alpha}}}{a}} W^{\ast}
& = \oplus_{j \in J} \func{\M^{\Trip{G}{A}{\beta}}}{a},
\end{align*}
so it follows that
\begin{align*}
    \oplus_{j \in J} \func{\M^{\Trip{G}{A}{\beta}}}{\betArg{x}{a}}
& = \oplus_{j \in J} \func{\U^{\Trip{G}{A}{\beta}}}{x} \func{\M^{\Trip{G}{A}{\beta}}}{a} \func{\U^{\Trip{G}{A}{\beta}}}{x}^{- 1} \\
& = \SqBr{\oplus_{j \in J} \func{\U^{\Trip{G}{A}{\beta}}}{x}}
    \SqBr{\oplus_{j \in J} \func{\M^{\Trip{G}{A}{\beta}}}{a}}
    \SqBr{\oplus_{j \in J} \func{\U^{\Trip{G}{A}{\beta}}}{x}^{- 1}} \\
& = W
    \SqBr{\oplus_{i \in I} \func{\U^{\Trip{G}{A}{\alpha}}}{x}}
    \SqBr{\oplus_{i \in I} \func{\M^{\Trip{G}{A}{\alpha}}}{a}}
    \SqBr{\oplus_{i \in I} \func{\U^{\Trip{G}{A}{\alpha}}}{x}^{- 1}}
    W^{\ast} \\
& = W \SqBr{\oplus_{i \in I}
    \func{\U^{\Trip{G}{A}{\alpha}}}{x}
    \func{\M^{\Trip{G}{A}{\alpha}}}{a}
    \func{\U^{\Trip{G}{A}{\alpha}}}{x}^{- 1}}
    W^{\ast} \\
& = W \SqBr{\oplus_{i \in I} \func{\M^{\Trip{G}{A}{\alpha}}}{\alphArg{x}{a}}} W^{\ast} \\
& = \oplus_{j \in J} \func{\M^{\Trip{G}{A}{\beta}}}{\alphArg{x}{a}},
\end{align*}
which yields $ \func{\M^{\Trip{G}{A}{\beta}}}{\betArg{x}{a}} = \func{\M^{\Trip{G}{A}{\beta}}}{\alphArg{x}{a}} $. Hence, for all $ x \in G $, $ a \in A $, and $ \phi \in \Cc{G,A} $,
\begin{align*}
    \func{q_{\Trip{G}{A}{\beta}}}{\Map{G}{A}{y}{\alphArg{x}{a} \func{\phi}{y}}}
& = \FUNC{\func{\M^{\Trip{G}{A}{\beta}}}{\alphArg{x}{a}}}{\func{q_{\Trip{G}{A}{\beta}}}{\phi}} \\
& = \FUNC{\func{\M^{\Trip{G}{A}{\beta}}}{\betArg{x}{a}}}{\func{q_{\Trip{G}{A}{\beta}}}{\phi}} \\
& = \func{q_{\Trip{G}{A}{\beta}}}{\Map{G}{A}{y}{\betArg{x}{a} \func{\phi}{y}}},
\end{align*}
from which we get $ \Br{\alphArg{x}{a} - \betArg{x}{a}} \func{\phi}{y} = 0_{A} $ for all $ y \in G $. As we can choose $ \phi $ to assume any value at any point, we obtain $ \alphArg{x}{a} = \betArg{x}{a} $ for all $ x \in G $ and $ a \in A $, which contradicts $ \alpha \neq \beta $.
\end{proof}



\begin{Cor} \label{The Non-Triviality of the Covariant Stone-von Neumann Theorem}
Let $ \Trip{G}{\Comp{\H}}{\alpha} $ and $ \Trip{G}{\Comp{\H}}{\beta} $ be $ C^{\ast} $-dynamical systems, with $ G $ abelian, $ \H $ a non-trivial Hilbert space, and $ \alpha \neq \beta $. Then any $ \Trip{G}{\Comp{\H}}{\alpha} $-Heisenberg modular representation cannot be unitarily equivalent to any $ \Trip{G}{\Comp{\H}}{\beta} $-Heisenberg modular representation.
\end{Cor}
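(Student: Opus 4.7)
The plan is to argue by contradiction, using only \autoref{The Covariant Stone-von Neumann Theorem} (Proposition 5.4) and \autoref{A Non-Unitary-Equivalence Result} (Proposition 6.2) as black boxes, together with the trivial observation that unitary equivalence of Heisenberg modular representations is an equivalence relation. Assume for contradiction that $ \Quad{\X}{\rho}{R}{S} $ is a $ \Trip{G}{\Comp{\H}}{\alpha} $-Heisenberg modular representation and $ \Quad{\Y}{\sigma}{T}{U} $ is a $ \Trip{G}{\Comp{\H}}{\beta} $-Heisenberg modular representation with $ \Quad{\X}{\rho}{R}{S} \sim_{W_{0}} \Quad{\Y}{\sigma}{T}{U} $ for some $ W_{0} \in \Unitary{\X,\Y} $.

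Next, apply \autoref{The Covariant Stone-von Neumann Theorem} to the system $ \Trip{G}{\Comp{\H}}{\alpha} $, producing an index set $ I $ and a unitary $ W_{1} $ witnessing
$$
\Quad{\X}{\rho}{R}{S}
\sim_{W_{1}}
\bigoplus_{i \in I}
\Quad{\LL{G}{\Comp{\H}}{\alpha}}{\M^{\Trip{G}{\Comp{\H}}{\alpha}}}{\U^{\Trip{G}{\Comp{\H}}{\alpha}}}{\V^{\Trip{G}{\Comp{\H}}{\alpha}}},
$$
and analogously, apply it to $ \Trip{G}{\Comp{\H}}{\beta} $ to produce an index set $ J $ and a unitary $ W_{2} $ witnessing
$$
\Quad{\Y}{\sigma}{T}{U}
\sim_{W_{2}}
\bigoplus_{j \in J}
\Quad{\LL{G}{\Comp{\H}}{\beta}}{\M^{\Trip{G}{\Comp{\H}}{\beta}}}{\U^{\Trip{G}{\Comp{\H}}{\beta}}}{\V^{\Trip{G}{\Comp{\H}}{\beta}}}.
$$
Since $ \X $ and $ \Y $ are non-trivial full Hilbert $ \Comp{\H} $-modules, both index sets $ I $ and $ J $ are non-empty.

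Finally, composing the three intertwining unitaries yields
$$
W \df W_{2} W_{0} W_{1}^{\ast} \in \Unitary{\bigoplus_{i \in I} \LL{G}{\Comp{\H}}{\alpha},\bigoplus_{j \in J} \LL{G}{\Comp{\H}}{\beta}}
$$
implementing a unitary equivalence between a direct sum of copies of the $ \Trip{G}{\Comp{\H}}{\alpha} $-Schr\"odinger modular representation and a direct sum of copies of the $ \Trip{G}{\Comp{\H}}{\beta} $-Schr\"odinger modular representation. This contradicts \autoref{A Non-Unitary-Equivalence Result} (taking $ A = \Comp{\H} $), completing the proof.

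The proof has essentially no technical obstacle: the content of the corollary is exhausted by combining uniqueness (Proposition 5.4) with distinguishability of the Schr\"odinger models (Proposition 6.2). The only minor subtlety to verify is that $ \sim $ really does behave transitively at the level of $ \sim_{W} $-data, i.e., that composing the three intertwiners intertwines each of $ R $ with $ T $, $ S $ with $ U $, and $ \rho $ with $ \sigma $ simultaneously; but this is immediate from the definition of unitary equivalence of Heisenberg modular representations.
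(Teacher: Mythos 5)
Your proposal is correct and follows essentially the same route as the paper, which simply notes that the corollary follows immediately from combining the Covariant Stone-von Neumann Theorem with the non-unitary-equivalence result for the Schr\"odinger modular representations. Your only addition is spelling out the transitivity of unitary equivalence by composing the three intertwiners, which is exactly the implicit step the paper leaves to the reader.
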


\begin{proof}
This follows immediately from \autoref{The Covariant Stone-von Neumann Theorem} and \autoref{A Non-Unitary-Equivalence Result}.
\end{proof}


\autoref{The Non-Triviality of the Covariant Stone-von Neumann Theorem} should remind physicists of Haag's Theorem in quantum field theory (QFT), which posits the failure of the uniqueness of the canonical commutation relations within QFT in general (\cite{Ha}).

We finally arrive at a discussion of Takai-Takesaki Duality.



\begin{Thm}[Takai-Takesaki Duality \cite{Ra,Wi}] \label{Takai-Takesaki Duality}
Let $ \Trip{G}{A}{\alpha} $ be a $ C^{\ast} $-dynamical system. Then
$$
\Cstar{\widehat{G},\Cstar{G,A,\alpha},\hat{\alpha}} \cong \Comp{\L{2}{G}} \otimes A,
$$
where $ \hat{\alpha} $ denotes the dual action of $ \widehat{G} $ on $ \Cstar{G,A,\alpha} $.
\end{Thm}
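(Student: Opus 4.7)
The plan is to derive Takai-Takesaki Duality as a three-step chain of $C^{\ast}$-isomorphisms, with the Covariant Pre-Stone-von Neumann Theorem as the central tool.

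First I would establish a $C^{\ast}$-isomorphism
\[
\Cstar{\widehat{G}, \Cstar{G,A,\alpha}, \hat{\alpha}} \cong \Cstar{G, \Co{G,A}, \lt \otimes \alpha},
\]
to be implemented by iterating Fourier duality: on a dense $\ast$-subalgebra, a function $F \in \Cc{\widehat{G}, \Cc{G,A}}$ is transformed by applying $\F{\nu}{A}$ in the $\widehat{G}$-variable for each fixed $G$-variable, producing an element of $\Cc{G, \Co{G,A}}$. At the representation level, a $\Trip{\widehat{G}}{\Cstar{G,A,\alpha}}{\hat{\alpha}}$-covariant modular representation on $\X$ unpacks via \autoref{Recovering a Covariant Modular Representation from Its Integrated Form} into a quadruple $\Quad{\X}{\rho}{R}{S}$; using $\hat{\alpha}_{\varphi}(f)(x) = \varphi(x) f(x)$ on $\Cc{G,A}$, the $\hat{\alpha}$-covariance condition reduces (after setting a point $y = e_{G}$ and then varying $a$ through an approximate identity of $A$) to the Weyl commutation relation $\func{S}{\varphi} \func{R}{x} = \func{\varphi}{x} \cdot \func{R}{x} \func{S}{\varphi}$ together with $\func{S}{\varphi} \func{\rho}{a} = \func{\rho}{a} \func{S}{\varphi}$, making $\Quad{\X}{\rho}{R}{S}$ precisely a $\Trip{G}{A}{\alpha}$-Heisenberg modular representation. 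Composing with the class map of \autoref{An Injective Map from the Class of Heisenberg Modular Representations to the Class of Covariant Modular Representations} then matches up the relevant covariant representations, and hence the universal norms, on both sides.

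Second, \autoref{The Covariant Pre-Stone-von Neumann Theorem} directly supplies $\Cstar{G, \Co{G,A}, \lt \otimes \alpha} \cong \Comp{\LL{G}{A}{\alpha}}$. Third, \autoref{Untwisting L2(G,A,alpha)} furnishes a Hilbert $A$-module isomorphism $\LL{G}{A}{\alpha} \cong \LL{G}{A}{\iota}$, and $\LL{G}{A}{\iota}$ is canonically isomorphic to the external tensor product $\L{2}{G} \otimes A$ of Hilbert $A$-modules; the standard identity $\Comp{\L{2}{G} \otimes A} \cong \Comp{\L{2}{G}} \otimes A$ then completes the chain.

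The main obstacle is the first step. The remark after \autoref{An Injective Map from the Class of Heisenberg Modular Representations to the Class of Covariant Modular Representations} notes that the image of the class map consists only of representations on full Hilbert $A$-modules, so one must verify that this subclass still suffices to compute the universal norm on $\Cstar{G, \Co{G,A}, \lt \otimes \alpha}$, or equivalently, that the iterated Fourier transform really is a $\ast$-isomorphism of the two convolution algebras that is contractive with respect to both universal norms. I expect the Fubini-type manipulations that proved $\pi^{\LL{G}{A}{\alpha},\M,\V}_{\nu} = \Xi$ in the proof of \autoref{The Covariant Pre-Stone-von Neumann Theorem} to be adaptable to execute this verification explicitly, with $\F{\nu}{A}$ intertwining the two convolution structures.
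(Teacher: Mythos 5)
Your second and third steps coincide exactly with what the paper actually proves: \autoref{The Second Half of Raeburn's Proof of Takai-Takesaki Duality} is established by precisely the chain $\Cstar{G,\Co{G,A},\lt \otimes \alpha} \cong \Comp{\LL{G}{A}{\alpha}} \cong \Comp{\LL{G}{A}{\iota}} \cong \Comp{\L{2}{G} \otimes A} \cong \Comp{\L{2}{G}} \otimes A$ that you describe, using \autoref{The Covariant Pre-Stone-von Neumann Theorem} and \autoref{Untwisting L2(G,A,alpha)}. The first isomorphism $\Cstar{\widehat{G},\Cstar{G,A,\alpha},\hat{\alpha}} \cong \Cstar{G,\Co{G,A},\lt \otimes \alpha}$, however, is nowhere proved in the paper: it is quoted from Raeburn \cite{Ra} (see also \cite{Wi}), which is why the theorem carries those citations.

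The genuine gap in your proposal is therefore concentrated in your first step, and it is more than the fullness caveat you flag. To identify the two full crossed products you must show that the partial Fourier transform is a $\ast$-isomorphism of the convolution algebras that is isometric for \emph{both} universal norms, and this requires a correspondence of covariant modular representations going in both directions. Your mechanism supplies at most one direction, and imperfectly: the quadruple $\Quad{\X}{\rho}{R}{S}$ unpacked from a covariant modular representation of $\Trip{\widehat{G}}{\Cstar{G,A,\alpha}}{\hat{\alpha}}$ lives on an arbitrary Hilbert $C^{\ast}$-module --- the universal norm is a supremum over representations on modules that need not be (full) Hilbert $A$-modules --- so it is generally \emph{not} a $\Trip{G}{A}{\alpha}$-Heisenberg modular representation, and the class map of \autoref{An Injective Map from the Class of Heisenberg Modular Representations to the Class of Covariant Modular Representations} does not literally apply (although the construction of $\Trip{\X}{\pi^{\X,\rho,S}_{\nu}}{R}$ goes through, since it never uses fullness; this should be said). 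More seriously, that proposition gives only injectivity, whereas equality of universal norms also needs the reverse passage: from an arbitrary $\Trip{G}{\Co{G,A}}{\lt \otimes \alpha}$-covariant modular representation $\Trip{\X}{\pi}{R}$, recover a pair $\Pair{\rho}{S}$ from $\pi \circ \overline{\F{\nu}{A}}$ via \autoref{Recovering a Covariant Modular Representation from Its Integrated Form}, check the three commutation relations, integrate $\Pair{\rho}{R}$ to a representation of $\Cstar{G,A,\alpha}$, and verify its $\hat{\alpha}$-covariance with $S$ --- none of which appears in your sketch --- together with the Fubini-type computation showing the iterated Fourier transform intertwines the two convolution structures, which you only declare you ``expect'' to work. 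These are exactly the lengthy verifications carried out in \cite{Ra}; either perform them in full or, as the paper does, cite \cite{Ra,Wi} for this step, after which the remainder of your argument is the paper's own.
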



In his proof of Takai-Takesaki Duality in \cite{Ra}, Iain Raeburn first showed that
$$
\Cstar{\widehat{G},\Cstar{G,A,\alpha},\hat{\alpha}} \cong \Cstar{G,\Co{G,A},\lt \otimes \alpha}.
$$
He then formed a $ C^{\ast} $-isomorphism $ \Cstar{\widehat{G},\Cstar{G,A,\alpha},\hat{\alpha}} \cong \Comp{\L{2}{G}} \otimes A $ as a composition of a series of $ C^{\ast} $-isomorphisms shown below, each requiring a lengthy justification except for the last one:
$$
      \Cstar{G,\Co{G,A},\lt \otimes \alpha}
\cong \Cstar{G,\Co{G,A},\lt \otimes \iota}
\cong \Cstar{G,\Co{G},\lt} \otimes A
\cong \Comp{\L{2}{G}} \otimes A.
$$
His ``untwisting'' of $ \alpha $ is thus performed at the level of $ C^{\ast} $-crossed products, with the last $ C^{\ast} $-isomorphism being given by the classical Stone-von Neumann Theorem, which relies on Green's Imprimitivity Theorem. However, by taking full advantage of Green's Imprimitivity Theorem, we can derive a shorter proof of this $ C^{\ast} $-isomorphism, which ``untwists'' $ \alpha $ at the level of Hilbert $ C^{\ast} $-modules:



\begin{Prop} \label{The Second Half of Raeburn's Proof of Takai-Takesaki Duality}
Let $ \Trip{G}{A}{\alpha} $ be a $ C^{\ast} $-dynamical system. Then $ \Cstar{G,\Co{G,A},\lt \otimes \alpha} \cong \Comp{\L{2}{G}} \otimes A $.
\end{Prop}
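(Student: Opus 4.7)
The plan is to factor the desired isomorphism as a chain of three standard identifications, in which the ``untwisting'' of $\alpha$ happens at the level of Hilbert $C^{\ast}$-modules rather than at the level of $C^{\ast}$-crossed products. Explicitly, I would establish
\[
   \Cstar{G,\Co{G,A},\lt \otimes \alpha}
   \;\cong\; \Comp{\LL{G}{A}{\alpha}}
   \;\cong\; \Comp{\LL{G}{A}{\iota}}
   \;\cong\; \Comp{\L{2}{G}} \otimes A,
\]
so that the hard work of Green's Imprimitivity Theorem is leveraged only once, without ever invoking the classical Stone-von Neumann Theorem.

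For the first isomorphism, I would quote \autoref{Green's Imprimitivity Theorem} directly: since $\LL{G}{A}{\alpha}$ is a $\Pair{\Cstar{G,\Co{G,A},\lt \otimes \alpha}}{A}$-imprimitivity bimodule, its left action realizes $\Cstar{G,\Co{G,A},\lt \otimes \alpha}$ as precisely $\Comp{\LL{G}{A}{\alpha}}$. For the second isomorphism, I would apply \autoref{Untwisting L2(G,A,alpha)}, which supplies a Hilbert $A$-module isomorphism $\Omega: \LL{G}{A}{\alpha} \to \LL{G}{A}{\iota}$; any such isomorphism is automatically adjointable with adjoint $\Omega^{-1}$, so conjugation by $\Omega$ induces a $C^{\ast}$-isomorphism $\Comp{\LL{G}{A}{\alpha}} \cong \Comp{\LL{G}{A}{\iota}}$. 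This is the step where ``untwisting'' genuinely occurs, and unlike in Raeburn's treatment, it is achieved geometrically on the bimodule rather than by manipulating convolution algebras.

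For the third isomorphism, I would observe that when $\alpha$ is the trivial action $\iota$, the defining $A$-valued pre-inner product on $\Cc{G,A}$ degenerates to $\SqInner{\phi}{\psi} = \Int{G}{\func{\phi}{x}^{\ast} \func{\psi}{x}}{\mu}{x}$, which is exactly the inner product on the exterior tensor product $\L{2}{G} \otimes A$ of the Hilbert space $\L{2}{G}$ with the Hilbert $A$-module $A_{A}$. The standard identification $\Comp{\X \otimes \Y} \cong \Comp{\X} \otimes \Comp{\Y}$ for exterior tensor products, together with $\Comp{A_{A}} \cong A$, then gives $\Comp{\LL{G}{A}{\iota}} \cong \Comp{\L{2}{G}} \otimes A$. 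The only place requiring any real care is verifying that the exterior tensor product identification is compatible with the completion $\LL{G}{A}{\iota}$ as defined in the preliminaries; this is however entirely routine, being a standard fact about Hilbert $C^{\ast}$-modules, so no genuine obstacle arises. The upshot is that all three links in the chain are either invocations of earlier results in this paper or textbook facts, which is precisely why this approach yields a substantially shorter derivation than Raeburn's.
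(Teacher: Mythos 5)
Your argument is correct and is essentially the paper's own proof: the same chain $\Cstar{G,\Co{G,A},\lt \otimes \alpha} \cong \Comp{\LL{G}{A}{\alpha}} \cong \Comp{\LL{G}{A}{\iota}} \cong \Comp{\L{2}{G} \otimes A_{A}} \cong \Comp{\L{2}{G}} \otimes A$, with the untwisting performed at the Hilbert-module level via \autoref{Untwisting L2(G,A,alpha)} and the final step given by the standard exterior-tensor-product identifications. The only (harmless) difference is that you justify the first link by Green's Imprimitivity Theorem together with the standard imprimitivity-bimodule fact directly, whereas the paper cites \autoref{The Covariant Pre-Stone-von Neumann Theorem}, which rests on exactly those ingredients.
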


\begin{proof}
Using \autoref{The Covariant Pre-Stone-von Neumann Theorem}, \autoref{Untwisting L2(G,A,alpha)}, and basic results about Hilbert $ C^{\ast} $-modules, we offer a one-line proof:
$$
      \Cstar{G,\Co{G,A},\lt \otimes \alpha}
\cong \Comp{\LL{G}{A}{\alpha}}
\cong \Comp{\LL{G}{A}{\iota}}
\cong \Comp{\L{2}{G} \otimes A_{A}}
\cong \Comp{\L{2}{G}} \otimes A. \qedhere
$$
\end{proof}




\section{Conclusions}


We would like to present here some questions and thoughts that naturally arose while writing this paper:
\begin{enumerate}
\item
Is there a $ C^{\ast} $-algebra $ A $ not $ C^{\ast} $-isomorphic to $ \Comp{\H} $ for a Hilbert space $ \H $ such that any $ C^{\ast} $-dynamical system of the form $ \Trip{G}{A}{\alpha} $ has the von Neumann Uniqueness Property? As $ C^{\ast} $-subalgebras of $ \Comp{\H} $ are $ C^{\ast} $-isomorphic to a direct sum $ \ds \bigoplus_{i \in I} \Comp{\H_{i}} $, where the $ \H_{i} $'s are Hilbert spaces, we think that a series of technical extensions can be made to accommodate the Covariant Stone-von Neumann Theorem for such $ C^{\ast} $-algebras.

\item
The results of this paper suggest that quantum mechanics could be developed using Hilbert $ C^{\ast} $-modules as state spaces, in which case the expectations of observables would assume values in a $ C^{\ast} $-algebra. Can this idea be developed further?

\item
As mentioned in the introduction, we suspect that the covariant Stone-von Neumann Theorem could be generalized to actions of non-abelian groups using techniques of non-abelian duality.
\end{enumerate}

While interesting in a purely-mathematical context, the Covariant Stone-von Neumann Theorem has a rich interpretation from the perspective of quantum mechanics. By including representations of $ C^{\ast} $-dynamical systems, it allows for the consideration of time-dependence of observables in addition to time-dependence of states. To contrast, recall that a time-independent quantum system is modeled by a Hilbert space $ \H $ and a Hamiltonian $ \hat{H} $ whose corresponding one-parameter unitary family, $ \Seq{e^{- \Br{i t / \hbar} \cdot \hat{H}}}{t \in \R} $, determines the time evolution of the state space via
$$
\forall \psi \in \H, ~ \forall t \in \R: \qquad
\func{\psi}{t} = e^{- \Br{i t / \hbar} \cdot \hat{H}} \cdot \func{\psi}{0}.
$$

The time evolution of the state space determined by $ \hat{H} $ can also be viewed as time evolution of the algebra $ \Bdd{\H} $ of bounded observables via $ \Map{\R}{\Bdd{\H}}{t}{e^{\Br{i t / \hbar} \cdot \hat{H}} T e^{- \Br{i t / \hbar} \cdot \hat{H}}} $, for all $ T \in \Bdd{\H} $. From this perspective, one may state the time-independent version of Ehrenfest's Theorem:
$$
\frac{\d}{\d t} \Inner{\psi}{\func{T}{\psi}}_{\H} = \Inner{\psi}{\func{\Comm{i \cdot \hat{H}}{T}}{\psi}}_{\H}.
$$

As the Covariant Stone-von Neumann Theorem applies to $ C^{\ast} $-dynamical systems of the form $ \Trip{G}{\Comp{\H}}{\alpha} $, and as all $ \ast $-automorphisms of $ \Comp{\H} $ are implemented via conjugation by unitaries, we make a convenient but natural restriction in the case when $ G = \R $ to the action $ \alpha^{C} $, where $ C \in \Bdd{\H} $ is self-adjoint, and
$$
\forall T \in \Bdd{\H}, ~ \forall t \in \R: \qquad
\func{\alpha^{C}_{t}}{T} \df e^{\Br{i t / \hbar} \cdot C} T e^{- \Br{i t / \hbar} \cdot C}.
$$
The covariance conditions present in the definition of an $ \Trip{\R}{\Comp{\H}}{\alpha^{C}} $-Heisenberg modular representation $ \Quad{\X}{\rho}{R}{S} $ then reduce to commutation relations between $ C $ and the infinitesimal generators of $ R $ and $ S $. It is in this context that we are able to get an infinitesimal version of the Covariant Stone-von Neumann Theorem, which will appear in a sequel to this article.

As mentioned in the introduction, a catalyst for the Stone-von Neumann Theorem was to investigate the uniqueness of pairs $ \Pair{A}{B} $ of self-adjoint Hilbert-space operators satisfying the Heisenberg Commutation Relation. Nelson's counterexample \cite{Ne} shows that uniqueness fails in general, and decades of research have been devoted to identifying sufficient conditions for $ \Pair{A}{B} $ that imply that $ \Seq{e^{i s \cdot A}}{s \in \R} $ and $ \Seq{e^{i t \cdot B}}{t \in \R} $ satisfy the Weyl Commutation Relation.

In the sequel, we follow the strategy in \cite{Hu} --- which takes place in the Hilbert-space setting --- to provide necessary and sufficient conditions for when a pair $ \Pair{A}{B} $ of unbounded self-adjoint operators on a Hilbert $ \Comp{\H} $-module yield one-parameter unitary groups that satisfy the Weyl Commutation Relation.




\section*{Appendix}


\begin{proof}[Proof of \autoref{Recovering a Covariant Modular Representation from Its Integrated Form}]
Let $ \mathcal{N} $ and $ \mathcal{O} $ be neighborhood bases of $ x $ and $ e_{G} $ in $ G $ respectively, directed by reverse inclusion. By Urysohn's Lemma, we can find nets $ \Seq{\phi_{U}}{U \in \mathcal{N}} $ and $ \Seq{\psi_{V}}{V \in \mathcal{O}} $ in $ \Cc{G,\R_{\geq 0}} $ with the following properties:
\begin{itemize}
\item
$ \Supp{\phi_{U}} \subseteq U $ for each $ U \in \mathcal{N} $, and $ \Supp{\psi_{V}} \subseteq V $ for each $ V \in \mathcal{O} $.

\item
$ \displaystyle \Int{G}{\func{\phi_{U}}{y}}{\mu}{y} = 1 = \Int{G}{\func{\psi_{V}}{y}}{\mu}{y} $ for all $ U \in \mathcal{N} $ and $ V \in \mathcal{O} $.
\end{itemize}
Also, let $ \Seq{e_{\lambda}}{\lambda \in \Lambda} $ be an approximate identity for $ A $ norm-bounded by $ 1 $.

Let $ \zeta \in \X $ and $ \epsilon > 0 $. As $ \func{R}{y} $ converges strongly in $ \Adj{\X} $ to $ \func{R}{x} $ as $ y \to x $, we can find a $ U_{0} \in \mathcal{N} $ such that for all $ y \in U_{0} $,
$$
\Norm{\FUNC{\func{R}{x}}{\zeta} - \FUNC{\func{R}{y}}{\zeta}}_{\X} < \frac{\epsilon}{3}.
$$
As $ \rho $ is non-degenerate, $ \Seq{\func{\rho}{e_{\lambda}}}{\lambda \in \Lambda} $ converges strongly in $ \Adj{\X} $ to $ \Id_{\X} $, so we can find a $ \lambda_{0} \in \Lambda $ such that
$$
\forall \lambda \in \Lambda_{\geq \lambda_{0}}: \qquad
  \Norm{\FUNC{\func{R}{x}}{\zeta} - \FUNC{\func{\rho}{e_{\lambda}} \circ \func{R}{x}}{\zeta}}_{\X}
= \Norm{\FUNC{\func{R}{x}}{\zeta} - \FUNC{\func{\rho}{e_{\lambda}}}{\FUNC{\func{R}{x}}{\zeta}}}_{\X}
< \frac{\epsilon}{3}.
$$
We then have for all $ y \in U_{0} $ and $ \lambda \in \Lambda_{\geq \lambda_{0}} $ that
\begin{align*}
     & ~ \Norm{\FUNC{\func{R}{x}}{\zeta} - \FUNC{\func{\rho}{e_{\lambda}} \circ \func{R}{y}}{\zeta}}_{\X} \\
\leq & ~ \Norm{\FUNC{\func{R}{x}}{\zeta} - \FUNC{\func{\rho}{e_{\lambda}} \circ \func{R}{x}}{\zeta}}_{\X} +
         \Norm{\FUNC{\func{\rho}{e_{\lambda}} \circ \func{R}{x}}{\zeta} - \FUNC{\func{\rho}{e_{\lambda}} \circ \func{R}{y}}{\zeta}}_{\X} \\
=    & ~ \Norm{\FUNC{\func{R}{x}}{\zeta} - \FUNC{\func{\rho}{e_{\lambda}} \circ \func{R}{x}}{\zeta}}_{\X} +
         \Norm{\FUNC{\func{\rho}{e_{\lambda}}}{\FUNC{\func{R}{x}}{\zeta} - \FUNC{\func{R}{y}}{\zeta}}}_{\X} \\
\leq & ~ \Norm{\FUNC{\func{R}{x}}{\zeta} - \FUNC{\func{\rho}{e_{\lambda}} \circ \func{R}{x}}{\zeta}}_{\X} +
         \Norm{\func{\rho}{e_{\lambda}}}_{\Adj{\X}} \Norm{\FUNC{\func{R}{x}}{\zeta} - \FUNC{\func{R}{y}}{\zeta}}_{\X} \\
\leq & ~ \Norm{\FUNC{\func{R}{x}}{\zeta} - \FUNC{\func{\rho}{e_{\lambda}} \circ \func{R}{x}}{\zeta}}_{\X} +
         \Norm{e_{\lambda}}_{A} \Norm{\FUNC{\func{R}{x}}{\zeta} - \FUNC{\func{R}{y}}{\zeta}}_{\X} \\
\leq & ~ \Norm{\FUNC{\func{R}{x}}{\zeta} - \FUNC{\func{\rho}{e_{\lambda}} \circ \func{R}{x}}{\zeta}}_{\X} +
         \Norm{\FUNC{\func{R}{x}}{\zeta} - \FUNC{\func{R}{y}}{\zeta}}_{\X} \\
<    & ~ \frac{2 \epsilon}{3}.
\end{align*}
It thus follows for all $ U \in \mathcal{N}_{\subseteq U_{0}} $ and $ \lambda \in \Lambda_{\geq \lambda_{0}} $ that
\begin{align*}
     & ~ \Norm{\FUNC{\func{R}{x}}{\zeta} - \FUNC{\func{\Pi_{\X,\rho,R}}{\phi_{U} \diamond e_{\lambda}}}{\zeta}}_{\X} \\
=    & ~ \Norm{
              \FUNC{\func{R}{x}}{\zeta} - \FUNC{\Int{G}{\func{\rho}{\Func{\phi_{U} \diamond e_{\lambda}}{y}} \circ \func{R}{y}}{\mu}{y}}{\zeta}
              }_{\X} \\
=    & ~ \Norm{
              \FUNC{\func{R}{x}}{\zeta} - \Int{G}{\func{\phi_{U}}{y} \cdot \FUNC{\func{\rho}{e_{\lambda}} \circ \func{R}{y}}{\zeta}}{\mu}{y}
              }_{\X} \\
=    & ~ \Norm{
              \underbrace{\SqBr{\Int{G}{\func{\phi_{U}}{y}}{\mu}{y}}}_{= 1} \cdot \FUNC{\func{R}{x}}{\zeta} -
              \Int{G}{\func{\phi_{U}}{y} \cdot \FUNC{\func{\rho}{e_{\lambda}} \circ \func{R}{y}}{\zeta}}{\mu}{y}
              }_{\X} \\
=    & ~ \Norm{
              \Int{G}{\func{\phi_{U}}{y} \cdot \SqBr{\FUNC{\func{R}{x}}{\zeta} - \FUNC{\func{\rho}{e_{\lambda}} \circ \func{R}{y}}{\zeta}}}
                  {\mu}{y}
              }_{\X} \\
\leq & ~ \Int{G}
             {\Norm{\func{\phi_{U}}{y} \cdot \SqBr{\FUNC{\func{R}{x}}{\zeta} - \FUNC{\func{\rho}{e_{\lambda}} \circ \func{R}{y}}{\zeta}}}_{\X}}
             {\mu}{y} \\
=    & ~ \Int{G}{\func{\phi_{U}}{y} \Norm{\FUNC{\func{R}{x}}{\zeta} - \FUNC{\func{\rho}{e_{\lambda}} \circ \func{R}{y}}{\zeta}}_{\X}}{\mu}{y}
         \\
=    & ~ \Int{U}{\func{\phi_{U}}{y} \Norm{\FUNC{\func{R}{x}}{\zeta} - \FUNC{\func{\rho}{e_{\lambda}} \circ \func{R}{y}}{\zeta}}_{\X}}{\mu}{y}
         \\
\leq & ~ \Int{U}{\func{\phi_{U}}{y} \cdot \frac{2 \epsilon}{3}}{\mu}{y} \\
=    & ~ \frac{2 \epsilon}{3} \underbrace{\Int{U}{\func{\phi_{U}}{y}}{\mu}{y}}_{= 1} \\
<    & ~ \epsilon.
\end{align*}
As $ \zeta \in \X $ and $ \epsilon > 0 $ are arbitrary, the net $ \Seq{\func{\Pi_{\X,\rho,R}}{\phi_{U} \diamond e_{\lambda}}}{\Pair{U}{\lambda} \in \mathcal{N} \times \Lambda} $ converges strongly in $ \Adj{\X} $ to $ \func{R}{x} $.

Let $ \zeta \in \X $ and $ \epsilon > 0 $. As $ \func{R}{y} $ converges strongly in $ \Adj{\X} $ to $ \func{R}{e_{H}} = \Id_{\X} $ as $ y \to e_{H} $, we can find a $ V_{0} \in \mathcal{O} $ such that
$$
\forall y \in V_{0}: \qquad
\Norm{\zeta - \FUNC{\func{R}{y}}{\zeta}}_{\X} < \dfrac{\epsilon}{1 + \Norm{a}_{A}},
$$
so
\begin{align*}
\forall y \in V_{0}: \qquad
       \Norm{\FUNC{\func{\rho}{a}}{\zeta} - \FUNC{\func{\rho}{a} \circ \func{R}{y}}{\zeta}}_{\X}
& =    \Norm{\FUNC{\func{\rho}{a}}{\zeta - \FUNC{\func{R}{y}}{\zeta}}}_{\X} \\
& \leq \Norm{\func{\rho}{a}}_{\Adj{\X}} \Norm{\zeta - \FUNC{\func{R}{y}}{\zeta}}_{\X} \\
& \leq \Norm{a}_{A} \Norm{\zeta - \FUNC{\func{R}{y}}{\zeta}}_{\X} \\
& \leq \Norm{a}_{a} \cdot \dfrac{\epsilon}{1 + \Norm{a}_{A}}.
\end{align*}
It follows for all $ V \in \mathcal{O}_{\subseteq V_{0}} $ that
\begin{align*}
     & ~ \Norm{\FUNC{\func{\rho}{a}}{\zeta} - \FUNC{\func{\Pi_{\X,\rho,R}}{\psi_{V} \diamond a}}{\zeta}}_{\X} \\
=    & ~ \Norm{\FUNC{\func{\rho}{a}}{\zeta} - \FUNC{\Int{G}{\func{\rho}{\Func{\psi_{V} \diamond a}{y}} \circ \func{R}{y}}{\mu}{y}}{\zeta}}_{\X}
         \\
=    & ~ \Norm{\FUNC{\func{\rho}{a}}{\zeta} - \Int{G}{\func{\psi_{V}}{y} \cdot \FUNC{\func{\rho}{a} \circ \func{R}{y}}{\zeta}}{\mu}{y}}_{\X} \\
=    & ~ \Norm{
              \underbrace{\SqBr{\Int{G}{\func{\psi_{V}}{y}}{\mu}{y}}}_{= 1} \cdot \FUNC{\func{\rho}{a}}{\zeta} -
              \Int{G}{\func{\psi_{V}}{y} \cdot \FUNC{\func{\rho}{a} \circ \func{R}{y}}{\zeta}}{\mu}{y}
              }_{\X} \\
=    & ~ \Norm{
              \Int{G}{\func{\psi_{V}}{y} \cdot \SqBr{\FUNC{\func{\rho}{a}}{\zeta} - \FUNC{\func{\rho}{a} \circ \func{R}{y}}{\zeta}}}{\mu}{y}
              }_{\X} \\
\leq & ~ \Int{G}{\Norm{\func{\psi_{V}}{y} \cdot \SqBr{\FUNC{\func{\rho}{a}}{\zeta} - \FUNC{\func{\rho}{a} \circ \func{R}{y}}{\zeta}}}_{\X}}
             {\mu}{y} \\
=    & ~ \Int{G}{\func{\psi_{V}}{y} \Norm{\FUNC{\func{\rho}{a}}{\zeta} - \FUNC{\func{\rho}{a} \circ \func{R}{y}}{\zeta}}_{\X}}{\mu}{y} \\
=    & ~ \Int{V}{\func{\psi_{V}}{y} \Norm{\FUNC{\func{\rho}{a}}{\zeta} - \FUNC{\func{\rho}{a} \circ \func{R}{y}}{\zeta}}_{\X}}{\mu}{y} \\
\leq & ~ \Int{V}{\func{\psi_{V}}{y} \Br{\Norm{a}_{A} \cdot \frac{\epsilon}{1 + \Norm{a}_{A}}}}{\mu}{y} \\
=    & ~ \Br{\Norm{a}_{A} \cdot \frac{\epsilon}{1 + \Norm{a}_{A}}} \underbrace{\Int{V}{\func{\psi_{V}}{y}}{\mu}{y}}_{= 1} \\
<    & ~ \epsilon.
\end{align*}
As $ \zeta \in \X $ and $ \epsilon > 0 $ are arbitrary, the net $ \Seq{\func{\Pi_{\X,\rho,R}}{\psi_{V} \diamond a}}{V \in \mathcal{O}} $ converges strongly in $ \Adj{\X} $ to $ \func{\rho}{a} $.
\end{proof}


\begin{proof}[Proof of \autoref{The Approximation Lemma}]
Fix $ f \in \Cc{X,V} $ and $ \epsilon > 0 $. Let $ K \df \Supp{f} $, and let $ \Seq{U_{p}}{p \in K} $ be a $ K $-indexed sequence of open subsets of $ X $ with the following properties:
\begin{itemize}
\item
$ U_{p} $ is an open neighborhood of $ p $ in $ X $ for each $ p \in K $.

\item
$ \Norm{\func{f}{x} - \func{f}{p}}_{V} < \dfrac{\epsilon}{3} $ for all $ x \in U_{p} $.
\end{itemize}
Clearly, $ \SSet{U_{p}}_{p \in K} $ covers $ K $, and as $ K $ is compact, there is a finite subset $ F $ of $ K $ such that $ \SSet{U_{p}}_{p \in F} $ also covers $ K $, and $ U_{p} \neq U_{p'} $ for distinct $ p,p' \in F $. As $ X $ is locally compact and Hausdorff, there is a partition of unity $ \Seq{\varphi_{p}}{p \in F} $ for $ K $ that is subordinate to $ \SSet{U_{p}}_{p \in F} $, i.e.,
\begin{itemize}
\item
$ \varphi_{p} \in \Cc{X,\SqBr{0,1}} $ and $ \Supp{\varphi_{p}} \subseteq U_{p} $ for each $ p \in F $, and

\item
$ \displaystyle \sum_{p \in F} \func{\varphi_{p}}{x} \leq 1 $ for all $ x \in X $, with equality holding for all $ x \in K $.
\end{itemize}
Define $ P \in \Cc{X,V} $ by
$$
\forall x \in X: \qquad
\func{P}{x} \df \sum_{p \in F} \func{\varphi_{p}}{x} \cdot \func{f}{p}.
$$
For all $ x \in X $, we have $ \displaystyle \func{f}{x} = \sum_{p \in F} \func{\varphi_{p}}{x} \cdot \func{f}{x} $ (if $ x \in K $, then $ \displaystyle \sum_{p \in F} \func{\varphi_{p}}{x} = 1 $; otherwise, $ \func{f}{x} = 0_{V} $), so
\begin{align*}
       \Norm{\func{f}{x} - \func{P}{x}}_{V}
& =    \Norm{\sum_{p \in F} \func{\varphi_{p}}{x} \cdot \func{f}{x} - \sum_{p \in F} \func{\varphi_{p}}{x} \cdot \func{f}{p}}_{V} \\
& =    \Norm{\sum_{p \in F} \func{\varphi_{p}}{x} \cdot \SqBr{\func{f}{x} - \func{f}{p}}}_{V} \\
& \leq \sum_{p \in F} \Norm{\func{\varphi_{p}}{x} \cdot \SqBr{\func{f}{x} - \func{f}{p}}}_{V} \\
& =    \sum_{p \in F} \func{\varphi_{p}}{x} \Norm{\func{f}{x} - \func{f}{p}}_{V} \\
& \leq \sum_{p \in F} \func{\varphi_{p}}{x} \cdot \dfrac{\epsilon}{3} \\
& \leq \dfrac{\epsilon}{3}.
\end{align*}
As $ D $ is a dense subset of $ V $, we can find an $ F $-indexed sequence $ \Seq{v_{p}}{p \in F} $ in $ D $ such that $ \Norm{\func{f}{p} - v_{p}}_{V} < \dfrac{\epsilon}{3} $ for each $ p \in F $. Define $ Q \in \Cc{X,V} $ by
$$
\forall x \in X: \qquad
\func{Q}{x} \df \sum_{p \in F} \func{\varphi_{p}}{x} \cdot v_{p}.
$$
Then
\begin{align*}
\forall x \in X: \qquad
       \Norm{\func{P}{x} - \func{Q}{x}}_{V}
& =    \Norm{\sum_{p \in F} \func{\varphi_{p}}{x} \cdot \func{f}{p} - \sum_{p \in F} \func{\varphi_{p}}{x} \cdot v_{p}}_{V} \\
& =    \Norm{\sum_{p \in F} \func{\varphi_{p}}{x} \cdot \SqBr{\func{f}{p} - v_{p}}}_{V} \\
& \leq \sum_{p \in F} \Norm{\func{\varphi_{p}}{x} \cdot \SqBr{\func{f}{p} - v_{p}}}_{V} \\
& =    \sum_{p \in F} \func{\varphi_{p}}{x} \Norm{\func{f}{p} - v_{p}}_{V} \\
& \leq \sum_{p \in F} \func{\varphi_{p}}{x} \cdot \frac{\epsilon}{3} \\
& \leq \frac{\epsilon}{3}.
\end{align*}
Therefore, by the Triangle Inequality,
$$
\forall x \in X: \qquad
     \Norm{\func{f}{x} - \func{Q}{x}}_{V}
\leq \Norm{\func{f}{x} - \func{P}{x}}_{V} + \Norm{\func{P}{x} - \func{Q}{x}}_{V}
\leq \frac{\epsilon}{3} + \frac{\epsilon}{3}
=    \frac{2 \epsilon}{3}
<    \epsilon.
$$
As $ Q $ has the desired form, the first part of the theorem is therefore established.

Let $ U $ be an open neighborhood of $ K $ whose closure is compact (such a neighborhood exists because $ X $ is locally compact and Hausdorff). Then $ U $ is a locally compact Hausdorff space and $ f|_{U} \in \Cc{U,V} $, so we may apply the first part of the theorem to find $ \varphi_{1},\ldots,\varphi_{n} \in \Cc{U} $ and $ v_{1},\ldots,v_{n} \in D $ such that
$$
\forall x \in U: \qquad
  \Norm{\func{f|_{U}}{x} - \sum_{i = 1}^{n} \func{\varphi_{i}}{x} \cdot v_{i}}_{V}
< \frac{\epsilon}{1 + \func{\nu}{U}}.\footnote{The measure of a pre-compact subset, with respect to a regular Borel measure, is finite.}
$$
Let $ \widetilde{\varphi_{1}},\ldots,\widetilde{\varphi_{n}} $ denote the respective extensions of $ \varphi_{1},\ldots,\varphi_{n} $ to $ X $ by $ 0_{V} $. Then $ \widetilde{\varphi_{1}},\ldots,\widetilde{\varphi_{n}} \in \Cc{X} $, and
\begin{align*}
       \Int{X}{\Norm{\func{f}{x} - \sum_{i = 1}^{n} \func{\widetilde{\varphi_{i}}}{x} \cdot v_{i}}_{V}}{\nu}{x}
& =    \Int{U}{\Norm{\func{f|_{U}}{x} - \sum_{i = 1}^{n} \func{\varphi_{i}}{x} \cdot v_{i}}_{V}}{\nu}{x} \\
& \leq \Int{U}{\frac{\epsilon}{1 + \func{\nu}{U}}}{\nu}{x} \\
& =    \func{\nu}{U} \cdot \frac{\epsilon}{1 + \func{\nu}{U}} \\
& <    \epsilon.
\end{align*}
The proof of the second part of the theorem is therefore established.
\end{proof}




\section*{Acknowledgments}


The first author wishes to thank the second author for her warm hospitality and collaborative energy when he visited the University of Nebraska-Lincoln. He would also like to thank his former PhD advisor, Professor Albert Sheu of the University of Kansas, for cultivating his deep-seated interest in the Stone-von Neumann Theorem in graduate school.

The second author would like to thank the first author for his invitation to collaborate on this project. His passion and breadth of knowledge have been a privilege to work alongside. The second author is also grateful to her advisors, Allan Donsig and David Pitts, as well as Ruy Exel, for their thoughtful comments and questions.





\end{document}